\documentclass[sigplan,screen,nonacm]{acmart}
\usepackage{cleveref,thm-restate}
\usepackage{stmaryrd}
\usepackage{proof}
\usepackage{graphicx}
\usepackage{xypic}
\usepackage{tikz}
\usetikzlibrary{calc}
\usepackage{enumerate}   
\usepackage{multicol}

\usepackage[disable
]{todonotes}

\usepackage{algorithm}
\usepackage[noend]{algpseudocode}

\usepackage{url}
\usepackage{mathtools}
\usepackage{color}
\usepackage{cmll}
\usepackage{wrapfig}

\usepackage{mathtools}
\usepackage{xspace}
\Crefname{equation}{Eq.}{Eqs.}
\Crefname{figure}{Fig.}{Figs.}
\Crefname{theorem}{Thm.}{Thm.}
\Crefname{thm}{Thm.}{Thm.}
\Crefname{proposition}{Prop.}{Prop.}
\Crefname{prop}{Prop.}{Prop.}
\Crefname{remark}{Rem.}{Rem.}
\Crefname{section}{Sec.}{Sections}
\Crefname{corollary}{Cor.}{Cor.}
\Crefname{cor}{Cor.}{Cor.}
\Crefname{definition}{Def.}{Def.}



\newcommand{\version}{0}
\newcommand{\commenti}{0} 
\newcommand{\condinc}[2]{\ifthenelse{\equal{\commenti}{0}}{#1}{\blue{#2}} }
\newcommand{\SLV}[2]{\ifthenelse{\equal{\version}{0}}{#1}{ \RED{#2}}}

\def\eg{{\it e.g.}\xspace}
\def\ie{{\it i.e.}\xspace}

\newcommand\todoc[2][]{\todo[color=pink!20,#1]{#2}} 
\newcommand\todocf[1]{\todoc[inline]{#1}}

\newcommand{\pink}[1]{{\color{magenta}{#1}}}

\newcommand{\RED}[1]{{\color{red}{#1}}}
\newcommand{\blue}[1]{{\color{blue}{#1}}}

\newtheorem{theorem}{Theorem}
\newtheorem*{thm*}{Theorem}
\newtheorem*{prop*}{Proposition}
\newtheorem*{lemma*}{Lemma}
\newtheorem*{example*}{Example}
\newtheorem*{definition*}{Definition}
\newtheorem{definition}[theorem]{Definition}
\newtheorem{lemma}[theorem]{Lemma}

\newtheorem{corollary}[theorem]{Corollary}
\newtheorem{proposition}[theorem]{Proposition}
\newtheorem{example}[theorem]{Example}
\newtheorem{remark}[theorem]{Remark}
\newtheorem{notation}[theorem]{Notation}

\newtheorem{prop}[theorem]{Proposition}
\newtheorem{cor}[theorem]{Corollary}

\newtheorem{property}[theorem]{Property}

%

\newcommand{\textdef}[1]{\emph{#1}} 

\newcommand{\BigO}[1]{\mathbf{O}\left(#1\right)}
\newcommand{\Exp}[1]{\mathsf{exp}\left(#1\right)}

\newcommand{\st}{\mathrel{|}}



\newcommand{\Red}[1][]{\xrightarrow{#1}}	

\newcommand{\MLL}{\ensuremath{\mathsf{MLL}}\xspace} 
\newcommand{\One}{\mathsf 1} 
\newcommand{\Bottom}{\mathsf \perp} 

\newcommand{\FormA}{F} 
\newcommand{\FormB}{G} 




\newcommand{\RP}{\ensuremath{\mathbb{R}_{\geq 0}}} 


\newcommand{\VecA}{\alpha} 

\newcommand{\MatA}{\phi} 

\newcommand{\Ev}{\iota} 

\newcommand{\PCoh}{\ensuremath{\mathsf{PCoh}}} 
\newcommand{\Pcs}[1]{\ensuremath{\mathcal{#1}}} 
\newcommand{\Clique}[1]{\ensuremath{\mathsf{P}(#1)}} 
\newcommand{\Web}[1]{\vert #1 \vert} 
\newcommand{\App}[2]{#1\cdot #2} 




\newcommand{\Sem}[2][]{\llbracket #2\rrbracket^{#1}} 

\newcommand{\BigCProd}{\prod} 

\newcommand{\Proj}[3][]{#2\vert_{#3}^{#1}} 


\newcommand{\Val}[1]{\Web{#1}} 
\newcommand{\ValB}{y} 

\newcommand{\vX}{X} 






\newcommand{\FProd}{\odot} 
\newcommand{\BigFProd}{\bigodot} 
\newcommand{\FProj}[2]{\mathrm{proj}_{#2}\left(#1\right)} 






\newcommand{\tst}{\mathit{ s.t. }}

\newcommand{\sem}[1]{\llbracket {#1} \rrbracket}


\RequirePackage{ifthen}

\newcommand{\BN}{Bayesian Network\xspace}

\newcommand{\netB}{{\boldsymbol{\mathcal B}}}
\newcommand{\bnet}[1]{\mathscr{B}(#1)}
\newcommand{\Bnet}[1]{\netB(#1)}

\newcommand{\netN}{\mathcal N}
\newcommand{\N}{\netN}
\newcommand{\netR}{\mathcal R}
\newcommand{\R}{\netR}
\newcommand{\M}{\mathcal M}
\newcommand{\netD}{\mathcal D}
\newcommand{\D}{\netD}
\newcommand{\netG}{\mathcal G}
\newcommand{\netC}{\mathcal C}

\newcommand{\tree}{\mathcal T}

\newcommand{\CutS}[1]{\mathrm{Cut}(#1)}

\newcommand{\Var}[1]{\mathsf{Var}(#1)}
\newcommand{\scope}[1]{\Var{#1}}

\newcommand{\size}[1]{\#(#1)}

\newcommand{\bX}{\mathbb X}
\newcommand{\bY}{\mathbb Y}
\newcommand{\bZ}{\mathbb Z}

\newcommand{\bC}{\mathbb C}

\renewcommand{\vec}{\bar}

\newcommand{\bx}{{\vec x}}
\newcommand{\by}{{\vec y}}
\newcommand{\bz}{{\vec z}}

\newcommand{\cut}{\mathsf{cut}}
\newcommand{\Ax}{\mathsf{Box}}

\newcommand{\bn}{\mathsf{box}}
\newcommand{\ax}{\mathsf{ax}}
\newcommand{\w}{\mathsf{w}}
\newcommand{\weak}{\mathsf{w}}
\newcommand{\cn}{@}

\newcommand{\axrule}{\mathit{ax}}

\newcommand{\assoc}{\mathsf {c.ass}}
\newcommand{\neut}{\mathsf {c.w}}
\newcommand{\cid}{\mathsf {c.id}}

\newcommand{\project}{\FProj}

\newcommand*{\defeq}{\stackrel{\text{def}}{=}}

\newcommand{\rvar}{r.v.\xspace}
\newcommand{\rvars}{r.v.s\xspace}
\newcommand{\nax}[1]{#1^{\neg\Ax}}

\newcommand{\MLLB}{\ensuremath{\mathsf{MLL^{\mathtt B}}}\xspace}
\newcommand{\MLLAx}{\MLLB}
\newcommand{\factorized}{factorized\xspace}

\newcommand{\repfree}{repetition-free\xspace}
\newcommand{\atom}{atom\xspace}
\newcommand{\atoms}{atoms\xspace}

\newcommand{\At}[1]{\mathsf{Var}(#1)}

\newcommand{\CC}{CC\xspace}

\newcommand{\pair}[2]{\langle #1,#2\rangle}

\newcommand{\Nets}{\mathsf{Nets}}
\newcommand{\T}{\mathcal T}
\newcommand{\contree}{@-unit \xspace}

\newcommand{\InTrees}[1]{\mathsf{Nets}_{#1}}

\newcommand{\cutnet}{cut-net\xspace}

\newcommand{\monowired}{mono-wired net\xspace}

\newcommand{\Width}[1]{\mathrm{Width}(#1)}
\newcommand{\pol}[1]{\mathit{Pol}({#1})}

\newcommand{\iphi}{\iota}

\newcommand{\cC}{\mathbf{C}}
\newcommand{\cS}{\mathbf{S}}

\newcommand{\Xn}{X^-}
\newcommand{\Yn}{Y^-}
\newcommand{\Zn}{Z^-}
\newcommand{\X}{X^+}
\newcommand{\Y}{Y^+}
\newcommand{\Z}{Z^+}

\newcommand{\supported}{supported\xspace}
\newcommand{\Xcirc}{X^\circ}

\newcommand{\good}{well-labelled\xspace}
\newcommand{\goodness}{good labelling\xspace}

\newcommand{\Bpn}{Bayesian proof-net\xspace}

\newcommand{\pn}{proof-net\xspace}

\newcommand{\bpn}{\texttt{bpn}\xspace}
\newcommand{\bpns}{\texttt{bpn's}\xspace}
\newcommand{\phisem}[1]{\sem {#1}^{\iota}}
\newcommand{\Wirings}[1]{\mathsf{Wirings}(#1)}
\newcommand{\wiring}{wiring-net\xspace}
\newcommand{\wirings}{wiring-nets\xspace}

\newcommand{\wiringM}{wiring-module\xspace}

\newcommand{\bbox}{\mathtt{b}}

\def\ih{{\it i.h.}\xspace}

\newcommand{\Hide}[2]{\mathsf{Hide}(#2 / #1)}
\newcommand{\Show}[2]{\mathsf{Show}(#2 /  #1)}
\newcommand{\W}[2]{\mathsf{Hide}(#2 / #1)}
\newcommand{\deW}[2]{\mathsf{Show}(#2 /  #1)}



\newcommand{\El}[2]{\mathsf{Factorize}(#1,#2)}

\newcommand{\groundinterpretation}{valuation\xspace}
\mathchardef\mhyphen="2D 
\renewcommand{\Box}{\Ax}
\newcommand{\EE}{\mathbf E}
\newcommand{\NN}{\mathbf N}

\newcommand{\Cliques}[2]{\mathsf{Cliques}(#1,#2)}





\begin{document}


\title{ Bayesian Networks and Proof-Nets:\\ a proof-theoretical account  of Bayesian Inference}
\subtitle{
	{ \normalsize	Presented at the Workshop \textbf{\emph{Logic of Probabilistic Programming}} (CIRM Marseille, 31 Jan- 4 Feb 2022)  \url{https://conferences.cirm-math.fr/2507.html}}
}

\author{Thomas Ehrhard}
\author{Claudia Faggian}
\author{Michele Pagani}

\affiliation{%
	\institution{Université de Paris, CNRS, IRIF}
	\country{F-75013, Paris, France}
}


\begin{abstract}
We uncover a strong correspondence between Bayesian Networks and (Multiplicative) Linear Logic Proof-Nets, relating the two  as a representation of a joint  probability distribution and at the level of computation, so yielding  a proof-theoretical account of Bayesian Inference.

In particular, we relate the most widely used algorithm for exact inference (Message Passing over clique trees) with the inductive interpretation (in probabilistic coherence spaces) of a proof, opportunely factorized into a composition of smaller proofs. The correspondence turns out to be so tight, that even the computational cost of inference is similar.
\end{abstract}

%

\maketitle


\newcommand{\true}{\texttt{t}}
\newcommand{\false}{\texttt{f}}
\newcommand{\CPT}{CPT\xspace}
\newcommand{\aphi}{\hat  \phi}

\section{Introduction}

Bayesian Networks \cite{Pearl88} are a prominent  tool for probabilistic reasoning, allowing for a compact (factorized) representation of large probability distributions, and for efficient inference algorithms. 
A  Bayesian Network consists of two parts: a qualitative component, given by 
a directed acyclic graph, and a quantitative component, given by conditional  probabilities.
This  bears a striking resemblance with proof-nets of Linear Logic \cite{ll,synsem}: proof-nets   are  a  graph  representation of the syntax and dynamics  (cut-elimination) of proofs,  to which can be associated a quantitative interpretation.  The goal of this paper is to 
give a clear status to this correspondence,  yielding to a proof-theoretical account of Bayesian inference.

Linear Logic  has brought several  insights 
 into proof theory and (via  the Curry-Howard correspondence between proofs and programs ) into the semantics of  programming language. Such insights have  proved especially  suitable for modelling 
 probabilistic programming. In particular, Linear Logic has 
  enabled an interactive view,   expressing the flow of computation (Geometry of Interaction, \eg  \cite{popl17, LagoH19},    Game Semantics, \eg \cite{DanosH02,CastellanCPW18}), and the account for resources, leading in particular  to quantitative semantics based on linear algebra operations, such as  {Probabilistic Coherence Space semantics} (PCoh)\cite{Girard2003, danosehrhard,EhrPagTas14,EhahrdPT18fa,EhrhardT19}.

\vspace{-3pt}
\paragraph{Factorized representation and factorized computation.}

A joint distribution is a \emph{global} function involving many variables. 
A common way to deal with such a complex function is to  factorize  it  as a product of \emph{local} functions, each of  which depends on a subset of the variables.  Factorization  may involve both the computation  and the representation.
For example, a  joint probability  can be factorized as the product of  conditional probabilities.
Such an   approach  underlies  Bayesian Network, and inference algorithms  such as \emph{message passing}.
\emph{Message passing} has in fact  been (re)discovered and  applied, in various forms, in several domains (we refer to \cite{KschischangFL01} for a survey).

\vspace{-3pt}
\paragraph{Contents.}\Cref{sec:pre} recalls standard material. 
In  \Cref{sec:bpn}, we define  a sub-class  of   multiplicative  proof-nets (we refer to them as Bayesian proof-nets)
which we then  show to correspond to Bayesian Networks. 
Under opportune conditions, the \emph{PCoh denotation} of a Bayesian proof-net  $\R$ represents the same joint probabillity distribution as the  \BN    $\Bnet{\R}$ which is   
associated to $\R$ (\Cref{sec:PR}). Furthermore, $\Bnet{\R}$  turns out to be an \emph{invariant of the proof-net} under normalization and other relevant transformations.

 In sections \ref{sec:factorized_form} to \ref{sec:induced},  we \emph{factorize} a proof-net in the composition of smaller nets, whose interpretation has a smaller cost. 
In proof-theory, the natural way to \emph{factorize  a proof} in smaller components is to factorize  it in sub-proofs which are then \emph{composed via cut}. We follow exactly this  way.	We then show that---remarkably---there is a tight correspondence between the \emph{decomposition of  a proof-net},
and the well-known decomposition of Bayesian Networks into \emph{clique trees}. The inductive \emph{PCoh interpretation} of a factorized proof-net closely corresponds to \emph{message passing over clique trees} (\Cref{sec:MP}).
The correspondence turns out to be so tight, that the respective computational costs are similar.


\condinc{}{
\blue{	Bayesian Networks are a graphical  modeling tool for  compactly specifying joint probability distributions, and 
	to facilitate efficient inference. 
	
	%

	Graph-based inference 
	Compilation into a graphical model such as BN 
	is at the core of inference in 
	probabilistic programming languages such as Anglican have    \BN  at the core of their inference 
	
	A Bayesian network consists of two parts: a qualitative component in the form of
	a directed acyclic graph (DAG), and a quantitative component in the form of  conditional
	probabilities. 
	This has a striking similarity with proof-nets of 
	Linear Logic
	
	Curry–Howard correspondence establish 
	direct relationship between computer programs and mathematical proofs.  a proof is a program, and the formula it proves is the type for the program)  which involves not only the representation (proof-as-program, formula-as types) but also the computation (execution of a program vs cut-elimination)
	}
}

\condinc{}{
	An appraoch to inference taken by probabilistic programs such as Infer.Net  or Anglican 
	is to compile  to a  a probabilistic model such as a Bayesian network
}

\vspace{-3pt}
\paragraph{Related work}
The relationship between  Bayesian Networks and logic  has   been  put forward in the setting of categorical semantics  \cite{JacobsZ}.
The resemblance between Bayesian Networks and proof-nets is not a new observation, but to our knowledge has never been formally developed.  
A key ingredient  to bridge between the two world is the quantitative interpretation offered by  Probabilistic Coherence Spaces. A sibling paper develops the formulation of PCoh in terms of factors  in the  setting of multiplicative Linear Logic  (here, we restrict our attention to the class of  Bayesian proof-nets). The goal there is to then  use  inference algorithms to efficiently compute the denotation of any proof.
In this paper, we follow a different directions: we  focus on the proof-nets, and on inference-as-interpretation where the data structure which supports the computations is the proof-net itself. 
%

\subsection{Motivating Examples}
\subsubsection{An example of Bayesian Inference}
\begin{figure}
		\includegraphics[page=2,width=0.9\linewidth]{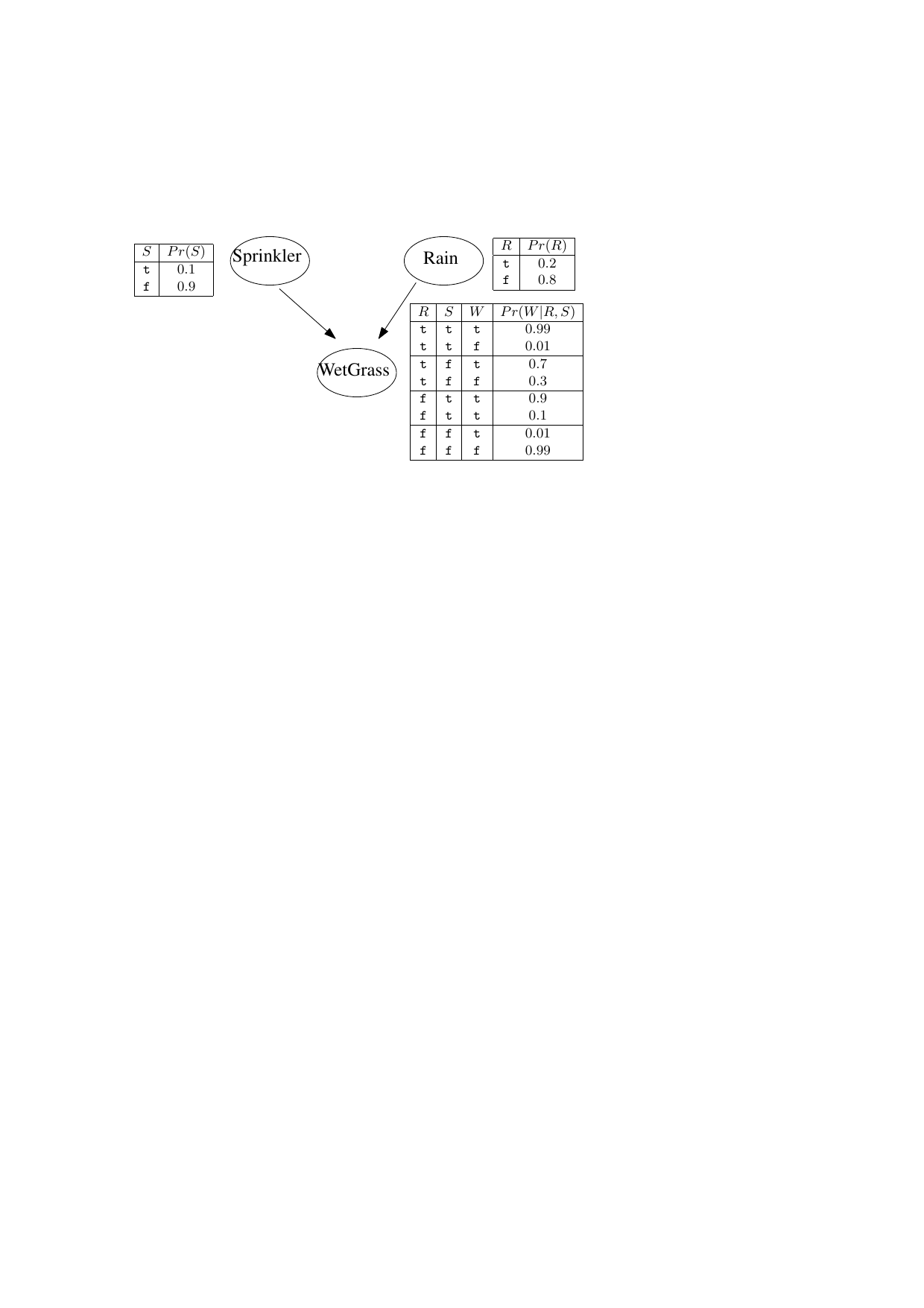}
	\caption{}
	\label{fig:BNrain}
\end{figure}
Let us start with an informal example. We want
to model  the  fact that the lawn being  Wet in the morning may depends on either Rain or the Sprinkler being on. In turn, both Rain and the regulation of the  Sprinkler  depend on the Season. Moreover, Traffic Jams are correlated with Rain.   
 The dependencies between these  five variables (shorten  into $ A,B,C,D,E $) are pictured in  \Cref{fig:BNrain},  where 
  the strength of the dependencies is quantified by 
   conditional probability tables.
Assume we wonder: did it rain last night? 
Assuming we are in DrySeason, our \emph{prior} belief is that Rain happens with probability $ 0.2 $. However, if we observe that the lawn is Wet, our confidence will increase.  The updated belief is called \emph{posterior}.
The model in \Cref{fig:BNrain} allows us to infer the \emph{posterior probability} of Rain, given the   evidence, formally:  $\Pr(R=\true | W=\true)$, or  to 
infer  how likely is it that the lawn is wet, \ie infer the \emph{marginal}  $\Pr(W=\true)$. 
Conditional probabilities and marginals are typical queries which can be answered by \emph{Bayesian inference}, whose core is Bayes conditioning:

\vspace{-8pt}
{{\small \begin{align*}
			\Pr(X=x|Y=y)=  \dfrac{\Pr(X=x,Y=y)}{\Pr(Y=y)}   = \dfrac{\Pr(Y=y|X=x) \Pr(X=x)}{\Pr(Y=y)} 
\end{align*}}}
\condinc{}{\begin{align}
	\Pr(X=x|Y=y)&=  \dfrac{\Pr(X=x,Y=y)}{\Pr(Y=y)} \label{CP} \\& = \dfrac{\Pr(Y=y|X=x) \Pr(X=x)}{\Pr(Y=y)}\label{BC}
\end{align} 

The   numerator of  the equations above often suffices, since the  posterior is proportional to it: 
\[\text{Posterior} \propto \text{Likehood }\cdot \text{Prior}\]
}
So, to compute the posterior $\Pr(\mathsf{Rain}=\true|\mathsf{Wet}=\true)$, we have to  compute the marginal $\Pr(\mathsf{Rain}=\true,\mathsf{Wet}=\true)$, which can be  obtained by 
summing out the other variables  from     the joint probability (\emph{marginalization}). 
The  marginal probability  $ \Pr(\mathsf{Wet}=\true)  $ of the evidence  is   computed in a similar way.
We then obtain the posterior by normalizing.
Notice that further evidence (for example, information on TrafficJams ) would  once again update our belief.

\emph{Summin Up.} The \BN in \Cref{fig:BNrain} expresses the joint probability 
of the five  variables, in a way which is  compact  (\emph{factorized representation}) and which allows for efficient inference (\emph{factorized computation}).

\condinc{}{
\[	\begin{array}{|c|c|c|}
	\hline
D&	R	&Pr(R|D)  \\
	\hline
\true&	\true & 0.2 \\
\true&	\false	& 0.8\\
	\hline
\false	&	\true & 0.75 \\
\false	&	\false	& 0.25\\
\hline	
\end{array}
\quad
\begin{array}{|c|c|c|}
	\hline
	D&	R	&Pr(R|D)  \\
	\hline
	\true&	\true & 0.8 \\
	\true&	\false	& 0.2\\
	\hline
	\false	&	\true & 0.1 \\
	\false	&	\false	& 0.9\\
	\hline	
\end{array}
\quad
\begin{array}{|c|}
	\hline
	\dots\\
	\hline	
\end{array}
\begin{array}{|c|c|c|}
	\hline
	R&	T	&Pr(T|R)  \\
	\hline
	\true&	\true & 0.7 \\
	\true&	\false	& 0.3\\
	\hline
	\false	&	\true & 0.1 \\
	\false	&	\false	& 0.9\\
	\hline	
\end{array}
\]
} 

\condinc{}{
{\ \[\begin{array}{c  c c}
	\begin{array}{|c|c|}
			\hline
		R	&Pr(R)  \\
		\hline
		\true & 0.2 \\
		\false	& 0.8\\
			\hline
	\end{array}
\quad
\begin{array}{|c|c|}
		\hline
	S	&  Pr(S) \\
	\hline
	\true & 0.1 \\
	\false	& 0.9\\
		\hline
\end{array}
\quad 
\begin{array}{|c|c|c|c|}
	\hline
R	& S  &  W& Pr(W|R,S)\\
\hline 
\true	& \true & \true & 0.99 \\
\true	& \true & \false & 0.01 \\
\hline
\true	&  \false & \true  & 0.7\\
\true	&  \false & \false  & 0.3\\
\hline
\false	& \true &  \true& 0.9  \\
\false	& \true &  \true& 0.1  \\
\hline
\false	& \false & \true & 0.01\\
\false	& \false & \false & 0.99\\
\hline
\end{array}
\end{array}
\]
}

}


\subsubsection{An example of Proof-Net}
\begin{figure*}
	\centering
	\begin{minipage}[b]{7cm}
		\includegraphics[page=9,width=1\linewidth]{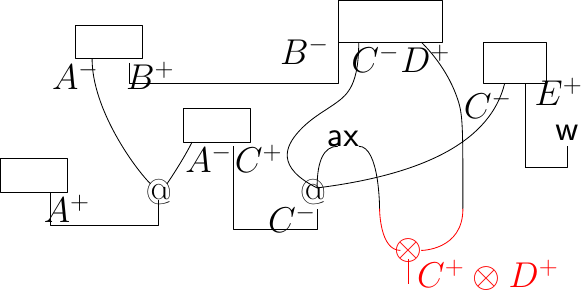}
		\caption{ Proof-net $\R_D$}
		\label{fig:ABCDE_D}
	\end{minipage}
	\quad
	\begin{minipage}[b]{9cm}
		\scriptsize{ $\infer[\cut (A)] {\quad\vdash D^+}
			{\infer[\bbox^A ]{\vdash A^+}{}&
				\infer[\cut(C) +@]{\vdash A^-,D^+ }
				{\infer[\bbox^C]{\vdash A^-,C^+}{} & \infer[\cut (B)]{\vdash A^-,C^-,D^+}{\infer[\bbox^B]{\vdash A^-, B^+}{} & 
						{
							\infer[cut(E)]{\vdash B^-,C^-,D^+}{\infer[mix+@]{\vdash B^-,C^-,D^+,E^+}
								{\infer[\bbox^D]{\vdash B^-,C^-,D^+}{} & \infer[\bbox^E]{\vdash C^-,E^+}{} }  
								& \infer[\w]{\vdash E^-}{} }
						}
					}
				}
			}
			$}
		\caption{A sequentialization of the proof-net $\R_D$}
		\label{fig:ABCDE_seq}
	\end{minipage}
\end{figure*}

Proof-nets \cite{ll,synsem}, are a graphical representation of  Linear Logic sequent calculus proofs.
The  \BN in \Cref{fig:BNrain} can be encoded in \emph{multiplicative}  linerar logic (\MLL) as the  proof-net in  \Cref{fig:ABCDE_D}.   A sequent calculus proof which corresponds to this \pn is in \Cref{fig:ABCDE_seq}.
The nodes $\bbox^A, \bbox^B, \bbox^C, ...$ (called \emph{boxes}, and corresponding to generalized axioms)
are  containers in which to store semantical information---for example,  the same conditional probability distributions as in \Cref{fig:BNrain}.
The purely \MLL part of the \pn acts as a \emph{wiring} which plugs together and transfers  such information. 

Notice that edges in \Cref{fig:ABCDE_D} are labelled by atomic formulas, either positive or negative.
The flow of information in a proof-net (its geometry of interaction \cite{goi0,goi1,multiplicatives}) follows the polarity of atoms, going downwards on positive atoms (which carry "output information")
and upwards on negative atoms  ("input information").
\MLL allows for \emph{duplication of the information carried by negative atoms}---via the   $\cn$-node (contraction). The $\w$-nodes (weakening) block 
the information.

The \pn 
$\R_D$ in \Cref{fig:ABCDE_seq} has a single conclusion $D^+$.
We will see that $\R_D$ represents a marginal probability $\Pr(D)$. 
Let us informally see how we can draw a sample from $\R_D$.
\begin{example}[Sampling from a \pn]
The only  node which is \emph{initial} (w.r.t the flow)  in this \pn is $\bbox^A$, which   outputs a sample from $\Pr(A)$. 	
Assume this value is $\true$. This sample is \emph{propagated} via the $\cn$-node (\emph{contraction})  to both $\bbox^C$ and $\bbox^B$. When $\bbox^C$  receives $\true$,  it samples a value from the distribution 
$\Pr(C|A=\true)$ (for example, $\true$ with probability $0.2$).   Assume the output of $\bbox^C$ is    $\true$, and that the output of   $\bbox^B$ (obtained with a similar procedure) is $\false$. When the box  $\bbox^D$
receives these values, it  outputs a sample from $\Pr(D|C=\true,B=\false)$, which is a sample from the marginal $\Pr(D)$.

\condinc{}{Notice  that all non initial boxes have  to wait for all their inputs (a form of synchronization \cite{popl17}).}
\end{example}

\condinc{}{
	We are now  able to informally see how we can draw a sample from a \pn---so clarifying the role of both \emph{weakening} and \emph{contraction}. 
	
\begin{example}[Sampling from a \pn] Let us consider the \pn  of single conclusion $D$
	in \Cref{fig:ABCDE2}. Assume that each box stores a conditional probability  (take the same CPT's as in \Cref{fig:BNrain}).
	The way information flows in the graph is captured by the 
	polarized order  (as standard in  Linear Logic).
	Nodes which are not initial  need to wait for inputs.
	
	The only initial node  in this \pn is $\bbox^A$, which  outputs  a sample from $\Pr(A)$. 	
	Assume this value is $\true$. This sample is \emph{propagated} via the $\cn$-node (\emph{contraction})  to both $\bbox^C$ and $\bbox^B$. When $\bbox^C$  receives $\true$,  it samples from
	$\Pr(C|B=\true)$ (which according to the CPT in \Cref{fig:BNrain} is $\true$ with probability $0.2$).  Assume the output of $\bbox^C$ is    $\true$, and that the output of   $\bbox^B$ (obtained with a similar procedure) is $\false$. When the box  $\bbox^D$
	receives these values (notice it needs to wait for both inputs), it  outputs a sample from $\Pr(D|C=true,D=false)$, which is a sample from the marginal $\Pr(D)$.
\end{example}
}

\section{Preliminaries}\label{sec:pre}
\subsection{ Bayesian modeling}\label{sec:basics}


\condinc{}{
An instantiation is a particular assignment of values to all variables. The configuration
space  is the set of all instantiations; it is the domain of the global function f.

In every fixed instantiation , every variable has some definite value. We may
therefore consider also the variables in a factor graph as functions with domain Ω. Mimicking the standard notation for random variables, we will denote such functions by capital
letters

Throughout this paper we deal with functions of many variables. Let , be a collection of variables, in which,
for each , takes on values in some (usually finite) domain
(or alphabet) . Let be an -valued function
of these variables, i.e., a function with domain

====
Inference over BNs can be either exact or approximate.
Perhaps the most popular exact inference algorithm,
belief propagation in junction trees, relies on the compilation of a BN into a junction tree. Exact belief updating (or marginalization) is then performed by message passing over the junction tree 

At a high-level the junction tree algorithm partitions the graph into clusters of variables; internally, the variables within a cluster could be highly coupled; however, interactions among clusters will have a tree structure,
}


\condinc{}{
=========

Given a probability space, a random variable $X$ is a
function  which  assigns a  value to each element (outcome)
in the sample space. The set of values random variable X can assume is called
the \emph{space} of $X$. A random variable is said to be \emph{discrete} if its space is finite
or countable. 

When doing Bayesian inference, there is some entity which has features,
the states of which we wish to determine, but which we cannot determine for
certain. So we settle for determining how likely it is that a particular feature is
in a particular state. 

A random variable represents some feature of the system 
being modeled, and we are uncertain as to the values of this features. 

For the purpose of probabilistic modeling,  a random variable $ X $ is  a symbol representing any
one of a set of values $\Val X$. 
==
}


Bayesian methods provide a formalism for reasoning about partial beliefs under
conditions of uncertainty. We  cannot determine for
certain  the state of some features of interest, and we settle for determining how \emph{likely} it is that a particular feature is in a particular state. 
\condinc{}{Think of    a system which has features,
the states of which we wish to determine, but which we cannot determine for
certain---so we settle for determining how \emph{likely} it is that a particular feature is
in a particular state. }
Random variables represent   features  of the system 
being modeled.   
For the purpose of probabilistic modeling,  a random variable  can be seen as  a symbol ("Wet") which assumes values from a space  ($\{\true,\false\}$).  The system  is  modeled as a joint probability on all possible values of the variables of interest -- an element in the sample space represents a possible state of the system. 

Notice that in Bayesian modeling, random variables are identified first, and only implicitly become functions on a sample space. 
We refer to the excellent textbook by Neapolitan \cite{NeapolitanBook} (Ch.~1) for a formal treatment relating the  notion of random variable as used in Bayesian inference, with the classical definition of function on a sample space.  

\subsubsection{Joint and Conditional Distributions}\label{sec:RV_basics}

 We  briefly revise the language  of Bayesian inference. We refer to \cite{DarwicheHandbook} for a concise presentation, and to standard texts for details \cite{Pearl88,DarwicheBook,NeapolitanBook}.
  Please notice that in this paper, we are only concerned with  random variables (\rvars, for short)  whose set of values is finite, so \emph{discrete random variables}.

\condinc{}
{
%

\begin{definition}[Set of random variables]\label{def:sets_rv}
	The metavariables $\bX,\bY,\bZ$  range over finite \textdef{sets of random variables}. Given  such a set $\bX$ we denote by 
	$\Val\bX$ the cartesian product $\BigCProd_{X\in\bX}\Val X$ of the value sets of the rv in $\bX$.

Given a subset $\bY\subseteq\bX$, we denote by $\Proj{\bx}{\bY}$ the restriction of $\bx$ to $\Val{\bY}$. 

	Given two sets of rv's $\bX$ and $\bY$, we say that $\bx\in\Val{\bX}$ and $\overline\ValB\in\Val{\bY}$ \textdef{agree on the common random variables}, $\bx\sim\overline\ValB$ for short, whenever $\Proj{\bx}{\bX\cap\bY}=\Proj{\overline\ValB}{\bX\cap\bY}$.
\end{definition}


%
%


}

To each   variable  $X$ is always associated a value set, which we denoted  $\Val X$.  Hence in this setting, by variable  $X$ we mean   a pair $(X, \Val X)$. 
 {A  set of variables $\bX=\{X_1, \dots, X_n\}$ defines a "compound" variable whose value set $\Val \bX$   is the Cartesian  product 
 	$ \BigCProd_{X\in\bX}\Val X$.}
 A \emph{value  assignment} for a set of variables $\bX$  is a mapping which  assigns to each $X\in \bX$ a value in $\Val X$.
 
	Given a  set of  variables $\bX$,
it is convenient and standard  
to assume as \emph{canonical sample space}\footnote{Alternatively, we can assume $\Omega$ to be the set of all value assignments for $\bX$}  the set $\Omega=\Val \bX$.
We can then see each  variable $X_i$ as a function from $\Omega$ to $\Val {X_i}$, projecting $ (x_1, \dots, x_n) $ into $x_i$.
Hence, a random variable in the classical sense.

\begin{notation}\label{not:agree}
	 By convention,  $x$ denotes a possible value for the random variable $ X$.
	%
\SLV{}{	Metavariables $\bX,\bY,\bZ$   range over finite \textdef{sets of random variables}.}
	The tuples in  $\Val\bX$ are denoted by overlined lowercase letters  such as $\bx, \bz$ .
	Each  element $\bx\in\Val\bX$ can be  seen as a value assignment (\ie, a mapping from $\bX$).
	 Given a subset $\bY\subseteq\bX$, we denote by $\Proj{\overline x}{\bY}$ the restriction of  $\bx$ to $\bY$. 
	%
	For two assignments $\bx$ (for $\bX$) and $\by$ (for  $\bY$), we write $\bx \sim \by$ 
	if they agree on the common variables, that is $\Proj{\bx}{\bX\cap\bY} =\Proj\by{\bX\cap\bY}$

\end{notation}

To each tuple  $\bx$ in the canonical sample space  $ \Val \bX $ is associated a probability, expressing a degree of belief in that values assignment. 	
\begin{definition}[Joint  probability distribution]
	  A  joint probability distribution $\Pr(\bX)$ over variables $\bX$ is a function from  $\Val \bX$  
	into the interval $[0,1]$ such that $\sum_{\bx\in \Val{\bX}} (\Pr(\bX))(\bx)=1$.

\end{definition}
\SLV{}{	An \emph{event} $\eta$ is a subset of $ \Val \bX $. 
$\Pr(\bX)$ assigns a
probability  to each event $\eta$ as 
$(\Pr(\bX))(\eta) = \sum_{\bx\in \eta} (\Pr(\bX))(\bx) $.
\SLV{}{Events are typically denoted by propositional sentences, which are defined inductively from atomic 
propositions having the form $X=x$.}
 Recall that $(X=x, Y=y)$ is short for the event $(X=x) \cap (Y=y)$.

\begin{example} The  canonical sample space for the five variables in \Cref{fig:BNrain}  consists of $2^5$ tuples. 
	The event $(R=\true) $  contains $2^4$ tuples,  the event $(R=\true) \cap (W=\true)$ is a subset of size $2^3$. 
\end{example}


\condinc{}{The notation $ \Pr (X | Z ) $ for conditional probability (\eqref{CP}) 
denotes a function. Intuitively, for each value of $Z$ , $ \Pr (X | Z ) $ gives a probability distribution over $X$.
}
A crucial notion when dealing with multiple variables is \textbf{conditional independence}: $X$ is independent from $Y$ given $Z$ if $\Pr(X|Y,Z) = \Pr(X|Y)$. 
Intuitively, learning the value of $Y$ does not provide additional information about $X$, once we know $Z$. This simplifies
  both the structure of probabilistic models, and the computations needed to perform inference, and is at the core of the definition of \BN.
\condinc{}{Conditional independence has a crucial role in simplifying both the structure of probabilistic models, and the computations needed to perform inference.
}

}

\vspace*{-4pt}
\subsubsection{Factorized, representation and  computation.}
The challenge of Bayesian  reasoning is that 
a joint probability distribution is usually too large to feasibly represent it explicitly. For example, a joint probability over $32$ binary  random variables, corresponds to $2^{32}$ entries.

\condinc{}
{The power of Bayesian networks stems  from both their
ability to represent large probability distributions compactly, and the availability of
inference algorithms that can answer queries about these distributions without necessarily constructing them explicitly. 
}
A \BN is a \emph{factorized representation} of an exponentially sized probability distribution. Inference algorithms then implement    \emph{ factorized  computations.}
\vspace*{-4pt}
\subsubsection{Bayesian Networks}\label{sec:BN}

A  Bayesian network (BN)  is a DAG in
which each node is a random variable. 
%
 \newcommand{\Pa}[1]{\mathsf{Pa}(#1)}
 \begin{definition}[ \BN over variables $\bX$]\label{def:BN}\label{def:bn}
 	A \BN $\netB$ over variables $\bX$ is a pair $(\mathcal G,\aphi)$ where
 	\begin{itemize}
 		\item $\mathcal G$ is a directed acyclic graph (DAG) over variables $\bX$, called  \emph{Bayesian structure}. 
 		$\Pa X$  denotes the parents of $X$ in $\mathcal G$. 
 		
 		\item $\aphi $   assigns, to each variable $\vX\in \bX$, a   \emph{conditional probability table} (\CPT) for $X$, noted $\phi^X$, which is interpreted as 
a conditional probability distribution, given   the parents of $X$ 
\begin{center}
		$ \phi^X=\Pr(X|\Pa X)	 $.
\end{center}

 	\end{itemize}
 	A $\BN$ over variables $\bX$ specifies   a \emph{unique} \cite{Pearl88} probability distributions
 	over $\bX$ (noted $\Pr_\netB$) ,  defined as  
 		\begin{align}\label{eq:BN}
	 			\Pr(\bX)(\bz) \defeq  \prod_{\substack{X\in \bX\\   x\by\sim \bz}} \phi^X(x\by) 
 		\end{align}
 where	$ x\by\sim \bz $ means that	 $  x\by $ agrees with  $ \bz $  on the values of their common variables (\Cref{not:agree}).
 	\end{definition}
 	The distribution in \Cref{eq:BN}  follows from \emph{a particular interpretation} of the structure and factors of $\netB=(\netG,\aphi)$ . Namely:
 	  each variable $X$ is assumed to be  \emph{independent} of its nondescendants  in $\mathcal G$, 
 	given its parents $\Pa X$
 	\footnote{A variable   $Z$ is  nondescendant of $X$ in the graph $\mathcal G$ if  $Z\not \in \{X\}\cup \Pa X$ and there is no directed path from 
 		$X$ to $Z$.};
 	 each  $\phi^X$ is interpreted as a conditional probability distribution.
 	
 Notice that 	a \CPT for $X$ is---essentially---just a stochastic matrix. It defines a probability distribution over $X$, given each possible assignment of values to its parents. 
 	\begin{definition}[\CPT]\label{def:CPT}
For $X\in \bX$, and $\bY\subseteq \bX$, 
a \CPT $\phi(X|\bY)$ for $X$ is a function (with domain  $\Val X \times \Val \bY$) mapping  
each tuple  $x\by$ 
to
a positive real   $\phi_{x\by}$
such that    $\sum_x \phi_{x\by} =1$. 

When $X,\bY$ are obvious, we simply  write $\phi$ for $\phi(X|\bY)$. 
 	\end{definition}

%


\condinc{}
{
 \begin{definition}[ \BN over variables $\bX$]\label{def:bn}
	A \BN $\netB$ over variables $\bX$ is a pair $(\mathcal G,\aphi)$ where
	\begin{itemize}
		\item $\mathcal G$ is a directed acyclic graph (DAG) over variables $\bX$. We denote   $\Pa X$  the parents of $X$ in $\mathcal G$. 
		
		\item $\aphi $   assigns to each variable $\vX\in \bX$ a factor $\phi^X$
		over  
		$\{X\}\cup \Pa X$ which  maps each  $x\by$  ($x\in \Val{X}$, $\by\in \Val{\Pa X}$) to
		a probability   $\phi_{x\by}$
		such that    $\sum_x \phi_{x\by} =1$.
	\end{itemize}
	A $\BN$ over variables $\bX$ defines  a \emph{unique} probability distributions
	over its variables,  defined as follows 
	\begin{align}\label{eq:BN}
		\Pr(\bX) \defeq  \BigFProd_{X\in \bX} \phi^X
	\end{align}
	\blue{
		\begin{align}\label{eq:BN}
			\Pr(\bX)(\bx) \defeq  \prod_{\substack{X\in \bX\\ \bz\in \Val{\Var{\phi^X}}~\tst~  \bz\sim \bx}} \phi^X(\bz)
		\end{align}
	}
\end{definition}

The distribution given by \Cref{eq:BN} follows from \emph{a particular interpretation} of the structure and factors of a \BN . Namely:
\begin{itemize}
	\item in $\mathcal G$,  each variable $X$ is assumed to be  independent of its nondescendants  $\bY$
	given its parents $\Pa X$\footnote{A variable   $Z$ is  nondescendant of $X$ in the graph $\mathcal G$ if  $Z\not \in \{X\}\cup \Pa(X)$ and there is no directed path from 
		$X$ to $Z$.}.
	\item each factor $\phi^X$ is interpreted as a conditional probability distribution 
	\[\phi^X=\Pr(X|\Pa X)\]
	
\end{itemize}
Such an interpretation is satisfied by an unique probability distribution \cite{Pearl88}, the one given by \Cref{eq:BN}.

In this sense  \Cref{eq:BN}  can also be stated as   \[\Pr(\bX) \defeq \BigFProd_{X\in \bX} \Pr(X|\Pa{X}).\]

}

\vspace*{-4pt}
\subsubsection{Factorized computation}
Inference algorithms  rely on basic operations on a class of functions 
(known as factors) which generalize   joint  probability distributions.
%
\begin{definition}[Factors]
	A \emph{factor} $\phi(\bX)$  over  variables $\bX$ is a function  from   $\Val{\bX}$  to $\mathbb R_{\geq 0}$
	(mapping each   $\bx\in \Val{\bX}$  to a 
	 real).
	 
	For readability (when $\bX$ is clear from the context) we   write    $\phi$ for $\phi(\bX)$.  $\Var{\phi}$  denotes $\bX$.
	We  often write $\phi_\bx$ for $\phi(\bx)$.
	
	Sum and product of factors are defined as follows.
	\begin{itemize}

				\item   Given  $\phi(\bX)$    and  $X\in \bX$, the result of \emph{summing out}  $X$ from  $\phi(\bX)$ is  a factor over  $ \bY = \bX  - \{X\} $ denoted by $\sum_X \phi $  or  $ \project{\phi}{\bY} $ (the {projection} of  $\phi$ over $\bY$), and  
				 defined as \begin{center}
				 	$(\sum_X \phi)_{\by}\defeq\sum\limits_{x\in \Val{X}}\phi_{x\by}$. 
				 \end{center}
		\item The result of \emph{multiplying} $\phi^1(\bX)$ and  $\phi^2(\bY)$  is a factor over  $\bZ = \bX\cup \bY$
		 denoted by  $\phi^1\FProd \phi^2$   and   defined as	
\begin{center}
	$ (\phi^1\FProd \phi^2)_\bz \defeq \phi^1_\bx \phi^2_\by  $
\end{center}
		where $\bx\sim \bz$ and $\by\sim \bz$  (\ie $\bx$ and $\by$ agree with $\bz$ on  the common variables).
	We denote   n-ary products by $\BigFProd_{n} \phi_n $.
		
	\end{itemize}
\end{definition}
The key property  used for inference in graphical models is that  
factor product and summation behave precisely as do product and summation over numbers. Both operations are commutative, product is associative, and---crucially---they distribute: if $X\not \in \Var{\phi} $, then 
$ \sum_X ( \phi_1 \FProd \phi_2)  = \phi_1 \FProd ( \sum_X  \phi_2 )  $, which allows for smaller computations.
%
%
\SLV{}{The  insight behind exact  inference algorithms is to 
exploit sum-product distributivity to decomposing the problem of inferring a marginal into smaller problems.
}

 Notice that CPT's are   factors.   \Cref{eq:BN}    can also be written:	
\begin{center}
	$ 	\Pr(\bX) \defeq  \BigFProd_{X\in \bX} \phi^X $
\end{center}
\begin{remark}[Cost of  operations on factors]\label{rem:factors_cost}
	Summing out any number of variables from a factor $\phi$ demands $\BigO{\Exp{w}}$ time and space, where $w$ is the number of variable over which $\phi$ is defined. To multiply $m$ factors demands $\BigO{m\cdot \Exp{w}}$ time and space, where $w$ is the number of variables in the resulting factor.
\end{remark}



\newcommand{\Module}{\M}
\newcommand{\contr}{\cn}
\newcommand{\ProofN}{\R}
\newcommand{\NodeB}{\bbox}
\newcommand{\AtM}{\mathscr{AT}}
	\newcommand{\arule}{\mathtt{a}}
\newcommand{\mrule}{\mathtt{m}}
\newcommand{\drule}{\mathtt{d}}
\newcommand{\Def}{:=} 

\newcommand{\OHide}{\mathsf{Hide}}
\newcommand{\OShow}{\mathsf{Show}}
\newcommand{\AtF}[1]{\mathtt{At}(#1)}
\condinc{}{\paragraph{Trees, Inputs and Output}
	
	Given a tree, the choice  of a  node $k$ as the root  determines an orientation of all edges  towards it.  Accordingly, we speak of \emph{inputs} and \emph{outputs} nodes and edges. 
	Notice that in a tree,  each non-root node   has \emph{exactly one output edge}.  
}

	\subsection{\MLLAx: Multiplicative Linear Logic  (with Boxes)}\label{sect:mll}

%
	We assume a countable alphabet of symbols, denoted by metavariables $ X,  Y,  Z$, to which we  refer as \emph{propositional variables, or atoms}.
	The grammar of the formulas is that  of the multiplicative fragment of linear logic (\MLL{} for short):
	\begin{align*}
		\FormA,\FormB&::= 
		X^+
		\mid X^-
		\mid\One
		\mid\Bottom
		\mid\FormA\otimes\FormB
		\mid\FormA\parr\FormB. 
	\end{align*}
	$ X^+$ (resp.~$ X^-$) denotes a \textdef{positive} (\emph{negative}) \emph{atomic formula}. 
	The \atom $X$ is the \emph{support} of the atomic formulas $\X,\Xn$. 
	
	Negation is defined by
		$( X^+)^\perp\Def X^-,( X^-)^\perp\Def X^+, 
	\One^\perp \Def\Bottom, \Bottom^\perp \Def\One,
	(\FormA\otimes\FormB)^\perp \Def\FormA^\perp\parr\FormB^\perp, (\FormA\parr\FormB)^\perp \Def\FormA^\perp\otimes\FormB^\perp.
	$


	A \textdef{sequent} is a finite sequence $\FormA_1,\dots, \FormA_n$ of \MLL{} formulas. Capital Greek letters $\Gamma,\Delta,\dots$  vary over sequents.  Given  $\Gamma=\FormA_1,\dots, \FormA_n$, we write $\Gamma^\perp$ for the sequent $\FormA_1^\perp,\dots, \FormA_n^\perp$. 

	In Linear Logic,  sequent calculus proofs are represented by special graphs. We first define modules, than proof-nets as a sub-class.
\condinc{}	{To  the standard $\MLL$-nodes we add a new sort of node, $\bn$.}
	\begin{definition}[pre-modules]
		\label{def:module}
		A \textdef{pre-module}\footnote{Usually called proof-structure in the literature.} $\Module$  is a labelled DAG with possibly pending edges (\ie some edges may have no source and/or no target)
		built over the alphabet of nodes which is
		represented in  \Cref{fig:pn} , where the  edges orientation is  top-bottom. The edges are labelled by \MLL formulas. 

		The \textdef{conclusions} (resp.~\textdef{premises}) of a node  are the outcoming (resp.~incoming) edges of that node. The \textdef{conclusions} (resp.~\textdef{premises}) of $\Module$ is the set of \emph{pending} edges of $\Module$ without target (resp.~source). 
		We  often write  
		"a conclusion  $ F $”  for “a conclusion
		 labelled by $F$”.

		$\Module$ is \textdef{atomic}  
		if  all formulas labelling its edges are atomic  
		formulas.   
		Metavariables $e,f$
		denote  edges.
		An edge $e$ is \emph{supported by} the atom $X$ ($e$ is positive/negative) if  its label is.
	\end{definition}
	The grammar 
		 in \Cref{fig:pn}  adds to  standard  $\MLL$ nodes, a new one $\bn$, which has \emph{atomic conclusions,  exactly one being positive. }   
\begin{notation}[ \MLL and \MLLAx] If we need to be specific, we  denoted  the extended calculus by \MLLAx.
 We  refer to  a (\emph{purely} multiplicative) module without $\bn$-nodes as    "$\MLL$ module".
\end{notation}
	\begin{remark}[Weakening and Contraction]\label{rk:negatives}
		The  syntax allows structural rules (weakening and contraction) on \emph{negative} atomic formulas. It is well-known in liner logic that $\bot$ admits weakening and contraction, and so do all the formulas built from $\bot$ by means of $\with $   and $\parr$ (\ie all formulas of \emph{negative polarity.})
		Recall that   \emph{a propositional variable stands for un unknown proposition }. Here, $\Xn$ stands for a  formula  of negative  polarity,
		and   $X^+$ for its dual\SLV{}{\footnote{For the purpose of this paper, the reader familiar with linear logic can simply think of $\Xn,\Yn,\Zn$ as occurrences of the same negative formula $\bot \with \bot$, and 
			$\X,\Y,\Z$ as occurrences of $\One\oplus \One$.}}.
	\end{remark}
	A proof-net is a pre-module  which fulfills a correctness criterion defined by means of switching paths (see [15]).
	\begin{definition}[Proof-nets]\label{def:pn}
		Given a \emph{raw} module $\Module$, a \textdef{switching path} in $\Module$ is an undirected path such that 
		for each $\parr$-node and $\cn$-node, the path uses  at most one    premise.
		
		$\Module$ is  a  \textdef{switching acyclic} module---or simply a \textbf{module}---
		whenever it has no switching cycle. 
		A \textbf{{proof-net}}  is a switching acyclic module which  has no pending premise. 
	\end{definition}

	The  conditions in \Cref{def:pn}  characterize the graphs which correspond to a sequent calculus proof. The sequent  calculus 
which corresponds to  \MLLAx proof-nets is standard---for  completeness we give  in \Cref{app:sequent} the generating rules and their image as \pn.
Please note that the calculus  includes a (non necessary, but convenient) \emph{mix}-rule.
\begin{theorem}[Sequentialization]\label{th:sequentialisation}
	Let $\Module$ be a \emph{pre-module} without  premises.  $\Module$ is a proof-net if, and only if, 
	$\Module$ 
	is the image of a sequent calculus derivation.
\end{theorem}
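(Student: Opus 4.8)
The plan is to prove the two implications separately, following the classical strategy for sequentialization of multiplicative proof-nets (in the style of Danos--Regnier), adapted to the box node $\bn$, to the structural nodes $\w,\cn$ on negative atoms, and to the presence of mix. A useful observation up front is that, because mix is available, the correctness criterion here is \emph{pure switching acyclicity}, with no connectedness requirement; this makes it free to glue disconnected pieces together, and it is what lets the argument absorb the units and the structural rules smoothly.

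The \emph{``if''} direction (a sequentializable pre-module is a proof-net) I would prove by a routine induction on the sequent derivation whose image is $\Module$, checking that each generating rule (given in the appendix) preserves both switching acyclicity and the absence of pending premises. The base cases are the $\ax$/$\bn$/$\One$ rules, producing a single source node with atomic conclusions and no premises, which is trivially acyclic and premise-free. In the inductive step, the switching constructors $\parr$ and $\cn$ add a node on conclusions of a single acyclic subnet, and since these are exactly the nodes at which a switching uses at most one premise, no switching cycle can be created. The multiplicative constructors $\otimes$, $\cut$ and the mix rule join two \emph{disjoint} acyclic subnets (by linking one conclusion of each, or by plain juxtaposition); any switching cycle would have to cross twice between the two pieces, which is impossible since before the new node they sit in distinct components. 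This direction carries no real obstacle.

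The \emph{``only if''} direction (a proof-net is the image of a derivation) is the substantial one, by induction on the number of nodes of $\Module$. Call a node \emph{terminal} when none of its conclusions is a premise of another node (so $\cut$-nodes, having no conclusion, are always terminal); a terminal node always exists. If $\Module$ is disconnected I split off one connected component $C$: both $C$ and $\Module\setminus C$ inherit acyclicity and premise-freeness, so by the \ih\ each sequentializes and mix recombines them. If $\Module$ is connected and has a single node, that node is necessarily one of the sources $\ax,\bn,\One,\w,\Bottom$ (a lone $\parr,\cn,\otimes,\cut$ would have a pending premise), giving the base case as the image of the corresponding axiom, box, or structural/unit rule. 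If $\Module$ is connected with more than one node, then every source node has a conclusion feeding downward (else it would be an isolated component), so the terminal nodes are among $\parr,\cn,\otimes,\cut$; if some terminal node is a $\parr$ or a $\cn$, I peel it off (removing it turns its two premises back into conclusions, leaving a smaller acyclic, premise-free module to which the \ih\ applies, with the peeled node becoming the last $\parr$- or contraction-rule).

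The remaining, and genuinely hard, case is when every terminal node of the connected module is a $\otimes$- or $\cut$-node; here I need a \textbf{splitting lemma}: among them there is one, $v$, whose removal disconnects $\Module$ into two acyclic, premise-free submodules with the two premises of $v$ lying in the two distinct parts, so that $\Module$ arises from them by a $\otimes$- (resp.\ $\cut$-) rule and the \ih\ closes the case. I expect this lemma to be the main obstacle. The naive attempt---if no terminal multiplicative node splits, there is an undirected path between its two premises, closing a cycle through $v$---does not work directly, because an arbitrary path need not give a \emph{switching} cycle at the $\parr$- and $\cn$-nodes it traverses. The standard remedy is the empire/kingdom construction (or, equivalently, an Euler-characteristic count over switchings), which produces a provably splitting terminal multiplicative node; I would transport this machinery to \MLLAx, checking that the new node types cause no difficulty: boxes, weakenings, and units are sources/leaves that the empire construction simply absorbs, while contractions behave exactly like $\parr$-nodes with respect to switchings, so the Danos--Regnier argument goes through essentially unchanged.
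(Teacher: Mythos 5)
The paper gives no proof of this theorem: it is invoked as the standard sequentialization result, and \Cref{app:sequent} only records the sequent rules and their images as proof-nets. So there is no in-paper argument to compare yours against; judged on its own, your reconstruction is correct and is essentially the classical proof the paper is implicitly citing. The ``if'' direction is routine, as you say. In the ``only if'' direction your case analysis is exhaustive (mix for disconnected modules; lone source nodes as base cases; peeling terminal $\parr$- and $\cn$-nodes; a splitting lemma when every terminal node is a $\otimes$ or $\cut$), and you correctly identify both the one genuine difficulty and its standard remedy. Two remarks. First, since the criterion here is acyclicity alone, the splitting lemma you need is not literally the Danos--Regnier one, which is proved for switching-acyclic \emph{and connected} structures; the right precedent is sequentialization for MLL with mix (Fleury--Retor\'e), and your move of first reducing to connected modules and only then running the empire argument is exactly how that adaptation is organized. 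Within a connected module whose terminal nodes are all $\otimes$/$\cut$, maximal empires do yield a splitting node; boxes, weakenings and units are sources that empires absorb, and $\cn$ switches like $\parr$, so the transport is unproblematic, as you claim.

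Second, a corner case your base case silently relies on: a pre-module consisting of a single $\w$-node (or $\Bottom$-node) is a proof-net under \Cref{def:pn}, and it is the image of a derivation only because this paper's calculus admits weakening as a zero-premise rule---see \Cref{fig:ABCDE_seq}, where $\vdash E^-$ is derived by $\w$ from nothing. Under the more common formulation (from $\vdash\Delta$ infer $\vdash\Delta,\Xn$) that single node would \emph{not} be sequentializable, since the empty sequent is underivable, and the theorem as stated would fail. Your treatment happens to be consistent with the paper's convention, so this is not a gap, but the dependence on that convention deserves to be made explicit.
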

\begin{example}
	The derivation in \Cref{fig:ABCDE_seq} is a sequentialization of the proof-net in $\Cref{fig:ABCDE_D}$.
	
	\SLV{}{	Notice that (as rather standard) we  sequentialize in a proof where contractions are greedy, that is they are performed as soon as possible---so in fact we  embed contraction into the rules for $\cut,\otimes$, and mix.}
\end{example}

	\SLV{}{\begin{remark}\label{rk:subnet}
			Notice that  a sub-graph of a switching acyclic module is  switching acyclic. 
		\end{remark}
	}

			\begin{notation}[Boxes]\label{not:boxes}		 We call   \emph{box} (resp. axiom) a $\bn$-node ($\ax$-node) together with its conclusions.
		We   denote
		by $\Ax(\R)$  the \emph{set of all boxes} of $\R$. We denote by   $\nax \R$  the \MLL module obtained from $\R$ by removing all the $\bn$-nodes. 
	\end{notation}
	
	\begin{notation}\label{notation:sequents=edges}	
		Given  an edge $e$ in the module $\M$, we write  $e: F$ meaning that $e$ is labelled by the formula $ F$. As  standard, we  denote  a sequence $e_1: F_1,\dots, e_n: F_n$ of labelled edges  with the sequent $ F_1,\dots,  F_n$.
		In particular,   we   write $\R:\Delta$   for a proof-net of conclusions  $\Delta$.
		We write $\M:\Gamma\vdash\Delta$ meaning that the module $\M$ has premises $\Gamma$, and conclusions $ \Delta $.
		
			By convention, the orientation  is represented  in the  figures by drawing the edges so that they are  directed downwards.  
			
		We write  $\cut(\X,\Xn)$ 
		to denote a $\cut$-node of premises $\X,\Xn$. Similarly for  $\ax(\X,\Xn)$.
	\end{notation}
	
	Notice that any module can  be completed in a \pn. 
	\begin{definition}\label{def:completion}
		The \textbf{completion} $\overline \M$ of a module $\M: \Gamma \vdash \Delta$  of premises $\Gamma$ and conclusions $\Delta$ is obtained by completing every pending premise $e:F$ with a node   $\ax(F,F^\bot)$.  $\overline \M$  is a \emph{proof-net} of conclusions $\vdash \Gamma^\bot,\Delta$.
	\end{definition}

	\subsubsection{Normalization and  normal forms}
	

		\Cref{fig:rewriting} sketches the standard \MLL{} $\cut$-rewriting steps plus the structural   rules for contraction and weakening. Reduction preserves  switching acyclicity and formula labeling. 

		\begin{definition}[Reduction, expansion and normal forms] \label{def:rewriting} The normalization rules on proof-nets are given in \Cref{fig:rewriting}. 
			We speak of  atomic rules  ($\arule$-rules) for  $\arule\in\{ \axrule,    \neut, \assoc, \cid\}$.
			%
			%
			Each  rewriting rule $r$ in \Cref{fig:pn} defines a binary relation $\Red[r]$ on proof-nets, called a \emph{reduction} step  and  written $\R\Red[r] \R'$ ($\R$ $r$-reduces to $\R'$). The inverse step is called an \emph{expansion}. So if $\R\Red[r] \R'$, then  $\R$ is an   $r$-expansion of $\R'$.
			We write $\Red[]$ for $(\Red[\arule] \cup \Red[\otimes/\parr] )$.
			We say that $\R$ is $r$-normal if there is no $\R'$ such that $\R\Red[r] \R'$. 
			We write simply \emph{normal} for   $\Red[]$-normal.
		\end{definition}

		\begin{example}The \pn  in \Cref{fig:ABCDE2} (where   $\R$ is as in \Cref{fig:ABCDE1}) reduces to the \pn of \Cref{fig:ABCDE_D}, which is normal.
		\end{example}
		
		\begin{remark}[Normal forms]\label{rem:normal}
Since we have boxes, a {normal} \pn  can still 
contain cuts, weakening, and  contractions.
		\end{remark}
		
		Reduction $\Red[]$  is  terminating and confluent. 
		
%

	\SLV{}{	
		\begin{remark}\label{rk:normal_atomic}
			If 	  $\ProofN$ is a   normal \pn with atomic conclusions, then $\ProofN$ is  atomic.
		\end{remark}
		\begin{lemma}[normal proof-nets ]\label{lem:normal_net}
			Let  $\R:\Delta$ be a  \emph{normal} \pn. Then 
				(1.)  the premises of each $\cut$-node are atomic, 
				(2.) the  positive premise  of each cut-node is  conclusion of an $\Ax$-node. 
		\end{lemma}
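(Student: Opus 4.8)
The plan is to prove both statements by contradiction, showing that any cut violating (1.) or (2.) would display a redex, against the hypothesis that $\R$ is normal. The only ingredient I need beforehand is an \emph{inversion} principle on the node grammar of \Cref{fig:pn}: since $\R$ is a proof-net it has no pending premise (\Cref{def:pn}), so each premise edge of a $\cut$-node does have a source, and the outermost shape of its label pins down the kind of node that produces it. Concretely, a $\otimes$-formula is the conclusion only of a $\otimes$-node, a $\parr$-formula only of a $\parr$-node, $\One$ only of a $\One$-node, $\Bottom$ only of a $\Bottom$-node, and a positive atom $\X$ only of an $\ax$-node or a $\bn$-node --- since boxes and axioms are the only nodes with atomic conclusions, while $\cn$- and $\w$-nodes introduce negative atoms exclusively (\Cref{rk:negatives}).

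For (1.), suppose some $\cut$-node had a non-atomic premise. Its two premises carry dual formulas $\FormA,\FormA^\perp$, so both are non-atomic, and exactly one of them --- say $\FormA$, the positive premise --- is of the form $\FormB\otimes\FormC$ or $\One$. In the tensor case, inversion forces the positive premise to be the conclusion of a $\otimes$-node and the dual premise $\FormB^\perp\parr\FormC^\perp$ to be the conclusion of a $\parr$-node; the cut is then a $\otimes/\parr$-redex, so $\R\Red[\otimes/\parr]\R'$ for some $\R'$, contradicting normality. The case $\FormA=\One$ is handled identically through the corresponding $\One/\Bottom$ step. Hence every cut premise is atomic.

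For (2.), by (1.) the positive premise of an arbitrary $\cut$-node is a positive atom $\X$. By inversion its source is either an $\ax$-node or a $\bn$-node. Were it the conclusion of an $\ax$-node, that axiom together with the cut would form an $\axrule$-redex, so $\R$ would fail to be $\axrule$-normal, hence not normal --- a contradiction. Therefore the source is a $\bn$-node, that is an $\Ax$-node, which is the claim.

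I expect no genuine difficulty here: the whole content is that ``normal'' means ``no redex of the two shapes above''. The single point deserving care is the inversion principle, which must be checked against the full alphabet of \Cref{fig:pn} --- in particular that contraction and weakening produce negative atoms only, so that a positive-atom premise can come solely from an axiom or a box --- together with confirming, in \Cref{fig:rewriting}, that the $\One/\Bottom$ cut is indeed among the reduction steps, so that the unit case of (1.) really is a redex.
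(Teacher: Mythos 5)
Your proof is correct, and it supplies exactly the reasoning the paper leaves implicit: the paper never proves this statement, restating it in the appendix as an unproved Property, and the intended argument is precisely your inversion-on-the-node-grammar plus no-redex contradiction (noting that $\cn$- and $\w$-nodes have only negative atomic conclusions, so a positive atom can come only from $\ax$ or $\bn$). Your closing caveat is also the right one to flag: the rules the paper names explicitly are only the $\arule$-rules $\{\axrule,\neut,\assoc,\cid\}$ and $\otimes/\parr$, so the $\One/\Bottom$ case of (1.) does rest on the units cut being among the ``standard \MLL{} cut-rewriting steps'' of \Cref{fig:rewriting}, as you note.
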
}
		\condinc{}{	\begin{lemma}[normal proof-nets ]\label{lem:normal_net}
				A  proof-net $\R$  which is \emph{normal} (\ie $m$-normal)  satisfies the following properties: 
				\begin{enumerate}
					\item  the premises of each $\cut$-node are atomic
					\item the  positive premise  of each cut-node is  conclusion of an $\Ax$-node
					\item the negative premise of any  cut-node is conclusion of either an $\Ax$-node or a $@$-node or a $\w$-node
					\item the conclusion of any $\weak$-node is either conclusion of $\R$, or premise of a cut.
				\end{enumerate}
				
		\end{lemma}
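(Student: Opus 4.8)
The plan is to argue by a case analysis on the nodes involved, powered by two structural facts. The first is a \emph{typing of sources}: since $\R$ has no pending premise (\Cref{def:pn}), every edge has a source node, and the label of the edge constrains that node. An edge labelled by a compound formula $\FormA\otimes\FormB$, $\FormA\parr\FormB$, $\One$ or $\Bottom$ can only be the conclusion of the matching logical node ($\otimes$-, $\parr$-, $\One$- or $\Bottom$-node), these being the only nodes producing such a conclusion (\Cref{fig:pn}). An edge supported by a \emph{positive} atom $\X$ can only be the conclusion of an $\ax$-node or a $\bn$-node, since weakening and contraction act on \emph{negative} atoms only (\Cref{rk:negatives}) and the logical nodes produce compound or unit conclusions. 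An edge supported by a \emph{negative} atom $\Xn$ can only be the conclusion of an $\ax$-, $\bn$-, $\cn$- or $\w$-node. The second fact is simply that $\R$ is $\Red[]$-normal, hence contains neither a $\Red[\otimes/\parr]$- nor a $\Red[\axrule]$-redex.

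For clause (1), let $c$ be a $\cut$-node with premises $e\colon\FormA$ and $f\colon\FormA^\perp$, and suppose $\FormA$ were compound, say $\FormA=\FormB\otimes\FormC$ (so $\FormA^\perp=\FormB^\perp\parr\FormC^\perp$). By the typing of sources, $e$ is then a conclusion of a $\otimes$-node and $f$ a conclusion of a $\parr$-node, so $c$ together with these two nodes forms a $\Red[\otimes/\parr]$-redex, contradicting normality; the unit case $\One/\Bottom$ is handled identically. Hence $\FormA$ is atomic.

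For clauses (2) and (3), clause (1) lets us write $c$ as an atomic cut with a positive premise $e\colon\X$ and a negative premise $f\colon\Xn$. If the source of either premise were an $\ax$-node, then $c$ would be a cut one of whose premises is an axiom conclusion, i.e. a $\Red[\axrule]$-redex, contradicting normality; so \emph{no} premise of a cut is the conclusion of an axiom. By the typing of sources, the positive premise $e$ must then be a conclusion of a $\bn$-node, that is an $\Ax$-node (clause (2)); and the negative premise $f$ must be a conclusion of one of the remaining admissible nodes, a $\bn$-, $\cn$- or $\w$-node (clause (3)). Note that clauses (2) and (3) share the same engine: the typing of sources restricts the candidates by polarity, and $\Red[\axrule]$-normality removes the axiom from each list.

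For clause (4), let $w$ be a $\w$-node with (negative, atomic) conclusion $e\colon\Xn$. If $e$ has no target then, $\R$ having no pending premise, it is a pending conclusion, hence a conclusion of $\R$. Otherwise the target of $e$ consumes the negative atom $\Xn$, so it is a $\cut$-, $\cn$-, $\otimes$- or $\parr$-node; the $\cn$ case is excluded because a weakening feeding a contraction is a $\Red[\neut]$-redex. I expect clause (4) to be the main obstacle: it remains to exclude the multiplicative targets, i.e. to show that in a normal net a weakening conclusion is never absorbed by a $\otimes$- or $\parr$-node, after which the only remaining possibility is that $e$ is a premise of a $\cut$-node. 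Unlike clauses (1)--(3), this does not follow from $\Red[\otimes/\parr]$- and $\Red[\axrule]$-normality alone (a weakening $\parr$'d into a compound conclusion is a genuine normal configuration in full \MLLAx), so I would close it by appealing to the way weakenings are introduced in the nets under consideration --- equivalently, to the sequentialization of \Cref{th:sequentialisation} with greedy structural rules, where a weakened atom is either carried to a conclusion or immediately confronted with a cut --- so that the multiplicative cases do not arise.
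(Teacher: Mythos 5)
Your argument for clauses (1)--(3) is correct, and it is essentially the argument the paper leaves implicit: note that the paper never proves this statement at all --- the appendix records only the two-clause version (premises of cuts atomic; positive premise a box conclusion) as an unproved Property --- so for those clauses the typing-of-sources observation (every edge has a source since a \pn has no pending premise; compound labels force $\otimes$/$\parr$/$\One$/$\Bottom$-nodes; positive atoms force $\ax$ or $\bn$; negative atoms force $\ax$, $\bn$, $\cn$ or $\w$) combined with the absence of $\Red[\otimes/\parr]$- and $\Red[\axrule]$-redexes is exactly the intended routine proof. Your parenthetical that the exclusion of the $\cn$-target in clause (4) uses $\neut$-normality is also right, and is covered by the paper's definition of normal, since $\Red[]$ includes the atomic rules $\{\axrule,\neut,\assoc,\cid\}$.

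Clause (4) is where the genuine gap lies, and the patch you propose does not close it. As you half-suspect, the clause is simply \emph{false} for arbitrary normal \MLLAx proof-nets: take a $\w$-node of conclusion $\Zn$ and a box $\bbox^X$ of single conclusion $\X$, and let both edges be the premises of a $\otimes$-node; the result is a switching-acyclic, cut-free (hence normal) proof-net of conclusion $\X\otimes\Zn$ in which the weakening conclusion is the premise of a $\otimes$-node --- neither a conclusion of $\R$ nor the premise of a cut. Your appeal to sequentialization with greedy structural rules cannot repair this: \Cref{th:sequentialisation} puts no constraint on where weakened atoms go (the calculus of \Cref{app:sequent} derives the counterexample above by $\w$ followed by $\otimes$), and in any case a geometric property of a net cannot be inferred from one preferred sequentialization --- one would have to show it holds of every net in the class. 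The correct closing move is \emph{atomicity}, which is the paper's actual working hypothesis when this lemma is used: the paper restricts attention to atomic nets (\bpns, via \Cref{fact:canonical_dec}), and \Cref{rk:normal_atomic} guarantees that a normal net with atomic conclusions is atomic. In an atomic net the grammar admits no $\otimes$-, $\parr$-, $\One$- or $\Bottom$-nodes at all, so the target of a $\w$-conclusion, if it has one, is a $\cut$- or a $\cn$-node, and the latter is a $\neut$-redex; clause (4) then follows. So the fix is to restate clause (4) --- and ideally the whole lemma --- for normal \emph{atomic} proof-nets (or nets with atomic conclusions); with that hypothesis added, your case analysis for (1)--(3) together with this two-case argument for (4) is a complete proof.
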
}

		\condinc{}{\begin{lemma}
				If $\R$ is $\Red[m,s]$-normal,  it contains no $\weak$-node.
			\end{lemma}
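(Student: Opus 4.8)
The plan is to argue by contradiction, leaning on the structural description of normal nets that is already available. Suppose $\R$ is $\Red[m,s]$-normal yet contains a $\weak$-node $w$, and let $e$ be its conclusion. By the shape of weakening (\Cref{rk:negatives}) the edge $e$ is a negative atom, $e:X^-$. Since $\R$ is in particular $m$-normal, I can invoke \Cref{lem:normal_net}: its clause on weakenings tells us that the conclusion of $w$ is either a conclusion of $\R$ or the premise of a $\cut$-node. The whole argument then reduces to discharging these two cases, by inspecting the node immediately below $w$.

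First I would treat the case where $e$ is a premise of a $\cut$. Being negative, $e$ can only be the negative premise of that cut; by the normal-form analysis (\Cref{lem:normal_net}) the companion positive premise is the conclusion of a box, and the premises are atomic. So the redex is a cut between a box and a weakening. The key observation is that this configuration is \emph{compatible with $m$-normality} — indeed \Cref{lem:normal_net} explicitly allows the negative premise of a cut to issue from a weakening — so it is not reached by the multiplicative steps alone; it is precisely a structural ($s$) redex, in which the weakening absorbs the box and turns the box's remaining negative conclusions into weakenings. Hence $\R$ would fail to be $s$-normal, contradicting the hypothesis and closing this case.

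The remaining, genuinely delicate, case is when $e$ is a conclusion of $\R$. Here the point to settle is that a negative atom $X^-$ cannot be a conclusion of the nets under consideration, and I expect this to be the main obstacle, since in full generality a proof-net may certainly have negative conclusions (and then an isolated weakening standing at a conclusion is irreducible, so the bare statement would be false). I would close it by appealing to the standing restriction on the shape of the (Bayesian) proof-nets in play: their conclusions are the \emph{positive} atoms representing outputs, so a negative weakening-conclusion is excluded outright. If one insists on a fully general statement, the conclusion case must instead be folded into the reduction relation — one adds among the $s$-steps a rule that, using $\mathit{mix}$, discards a weakening sitting alone at a conclusion — after which the same ``no redex'' argument applies verbatim.

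Combining the two cases shows that no $\weak$-node can survive in a $\Red[m,s]$-normal net, which is the claim. The only non-routine ingredient is the verification that the box/weakening cut is indeed an $s$-redex (rather than an $m$-redex) and that its contraction preserves switching acyclicity and labelling, both of which are guaranteed by \Cref{def:rewriting}; everything else is a short case analysis on the node below $w$.
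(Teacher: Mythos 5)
First, something you could not have known but that frames this review: the paper contains \emph{no proof} of this lemma at all. The statement sits in a disabled block of the source (it is the suppressed branch of a \texttt{condinc}), the relation $\Red[m,s]$ is never defined in the compiled text, and the visible paper in fact asserts the opposite behaviour for its actual rewriting system: \Cref{rem:normal} says that, because of boxes, a normal proof-net \emph{can} still contain weakenings. So the lemma is a vestige of an older, stronger reduction system, and there is nothing to compare your proof against. Within that constraint, two of your observations are exactly right and are the forced ones: for the statement to have any chance of holding, the structural steps $s$ must contain the erasure of a box cut against a $\w$-node (a single box $\bbox^X$ with conclusion $X^+$ cut against $X^-$ is otherwise a normal net with a weakening), and nets with negative conclusions are outright counterexamples, so the statement is missing a positivity (or empty-conclusion) hypothesis.

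The genuine gap is in your reduction to two cases. You justify ``the conclusion of $w$ is either a conclusion of $\R$ or the premise of a cut'' by citing a clause on weakenings of \Cref{lem:normal_net}; in the compiled paper that statement (the Property on normal proof-nets in the appendix) has only two clauses, both about premises of cuts, and says nothing about weakenings. Worse, the clause you invoke is false for the nets the lemma actually quantifies over: $\R$ is an arbitrary proof-net, not an atomic one, so the conclusion $X^-$ of a weakening may be the premise of a $\parr$-node or a $\otimes$-node. Concretely, take a box of conclusion $Y^+$, a $\w$-node of conclusion $X^-$, and a $\parr$-node forming $X^-\parr Y^+$: this is a cut-free (hence normal for every rule in play) proof-net containing a weakening, and it escapes both of your cases and both of your patches, except insofar as Bayesian proof-nets are \emph{atomic}. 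So atomicity must appear as an explicit hypothesis, at which point the analysis does close: in an atomic net a negative edge can only enter a $\cut$ or a $\cn$-node, the latter is excluded by $\neut$-normality, and your two cases become exhaustive. (Also, your fallback patch for the conclusion case --- an $s$-step erasing a weakening standing at a conclusion --- is not really available: it changes the conclusions of the net, hence its type and its PCoh denotation, which no reduction in this paper does.) In short, your argument is right in spirit but proves a re-hypothesised statement (atomic, positive conclusions, $s$ containing box-erasure); the third case for general nets needs to be named and excluded, not absorbed into a citation.
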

		}

\subsubsection{Decompositions.}
We will extensively use the following decomposition of a proof-net in its set of  boxes {$ \Ax(\R) $} and the remaining  (purely \MLL) module. 
\begin{notation}[\MLL decomposition of proof-nets]\label{not:module} 
	
	%
\textbf{	(1.)}
	Let  $\R$ be  a  proof-net, and    $\Nets=\{\N_1:\Gamma_1, \dots, \N_h:\Gamma_h\}$  a \emph{set} of  distinct sub-nets   which include all the $\bn$-nodes of $\R$. Then $\Nets$
	induces a \textbf{decomposition} of $\R$, denoted by  $\R=\pair \N\M$, where $\M$ is the \MLL module which is obtained by removing from $\R$ the sub-nets in $\Nets$ (except   the edges in  $\Gamma_1, \dots, \Gamma_h$).
	Please observe that   the edges    $\Gamma= \Gamma_1, \dots, \Gamma_h$  are the pending premises of $\M$. With a slight abuse of language, we sometime refer to $\Gamma$ as the conclusions of $\Nets$.\\
%
\textbf{	(2.)}	Every    proof-net $\R$  can be decomposed  as $\pair {\Ax(\R)} {\nax \R}$. 
	
	%
\end{notation}	
	 In Linear Logic, it is standard to decompose a \emph{cut-free}  \MLL proof-net of conclusion $F$   into the sub-module constituting the syntax  tree of $F$,  and a set of axioms on the atomic formulas. A similar result holds  for any $\MLLAx$ \pn which is \emph{normal}, taking into account \Cref{rem:normal}.
	\begin{property}\label{fact:canonical_dec}If  $\R$ is a \emph{normal}  proof-net of conclusion  $F$, and $\Gamma$ the sequence of its atomic subformula, 
		 then 
		$\R$	 can be decomposed in an \emph{atomic proof-net} $\R'$ of conclusions $\Gamma$, and the module  consisting in the   syntax  tree of $F$.  A similar result hold if $\R$ has conclusions $F_1, \dots, F_k$.
	\end{property}
\begin{example} In \Cref{fig:ABCDE1}, the syntax tree of the (only) conclusion $C^+\otimes D^+$  is   red-colored.	
\end{example}
	Because of this, and w.l.o.g., in the rest of the paper   we   focus  on atomic proof-nets.
	


\subsubsection{Atoms and internal atoms.}

\begin{notation}[$\Var{\R}$]	
	\SLV{}{Two atomic  edges $e_1, e_2$ in a module $\M$ are \emph{\supported by the same \atom}  if their labels have the same support. For example, $e_1:\X$ and $e_2:\Xn$.
}
	
	An atom $X$ \emph{appears} in the formula  $F$ if $\X$ or $\Xn$ are atomic sub-formulas of $F$. 		
	We denote by		$\At{F} $ (resp. $\At{\M}$)  the set of \atoms which appear in $F$  (resp., in the formulas labelling $\M$). E.g.,  $\At{\X}=\At{\Xn}=\{X\}$.

\end{notation}

\begin{definition}[Internal atoms]
	Let $\R:\Gamma$ be a proof-net. 
	An \atom $X\in \At{\R}$  is  \emph{internal} if it does not appear in the conclusions of $\R$, \ie $X \in (\At{\R} - \At{\Gamma})$.
	Notice that if $Y\in \Var \R$ is internal   $\R$ contains  at least one node  $\cut(\Y,\Yn)$.
\end{definition}

\begin{remark}[Weakening as {hiding}.] \label{rem:weak}
If a \pn $\R$ has conclusions $\Gamma$, then 
	any \emph{positive} atomic formula $\Y$ appearing in $\Gamma$  can be hidden  from the conclusions (hence made internal) by opportunely
	composing (via cut) $\R$ with a proof-net where $\Yn$ is introduced by a $\weak$-node (\Cref{fig:deweak}).
	\condinc{}{	Formulated in sequent calculus:
		\begin{center}
			$	\infer[\cut]{\vdash \Gamma'}{\infer{\vdash \Gamma', \Y}{\dots}  &\infer[\w]{\vdash \Yn}{}}$
		\end{center} 
		Routine  operations  allow us to access the 	
		atomic sub-formulas	of a conclusion $F$.
	}
	%

\end{remark}

\begin{example}In \Cref{fig:ABCDE2} we  obtain the \pn $\R_D:D^+$ of  $\Cref{fig:ABCDE_D}$  from the  proof-net  $\R:C^+\otimes D^+$ of  \Cref{fig:ABCDE1}. 
	
\end{example}

\begin{figure}
	\centering
	\fbox{	\includegraphics[page=2,width=1\linewidth]{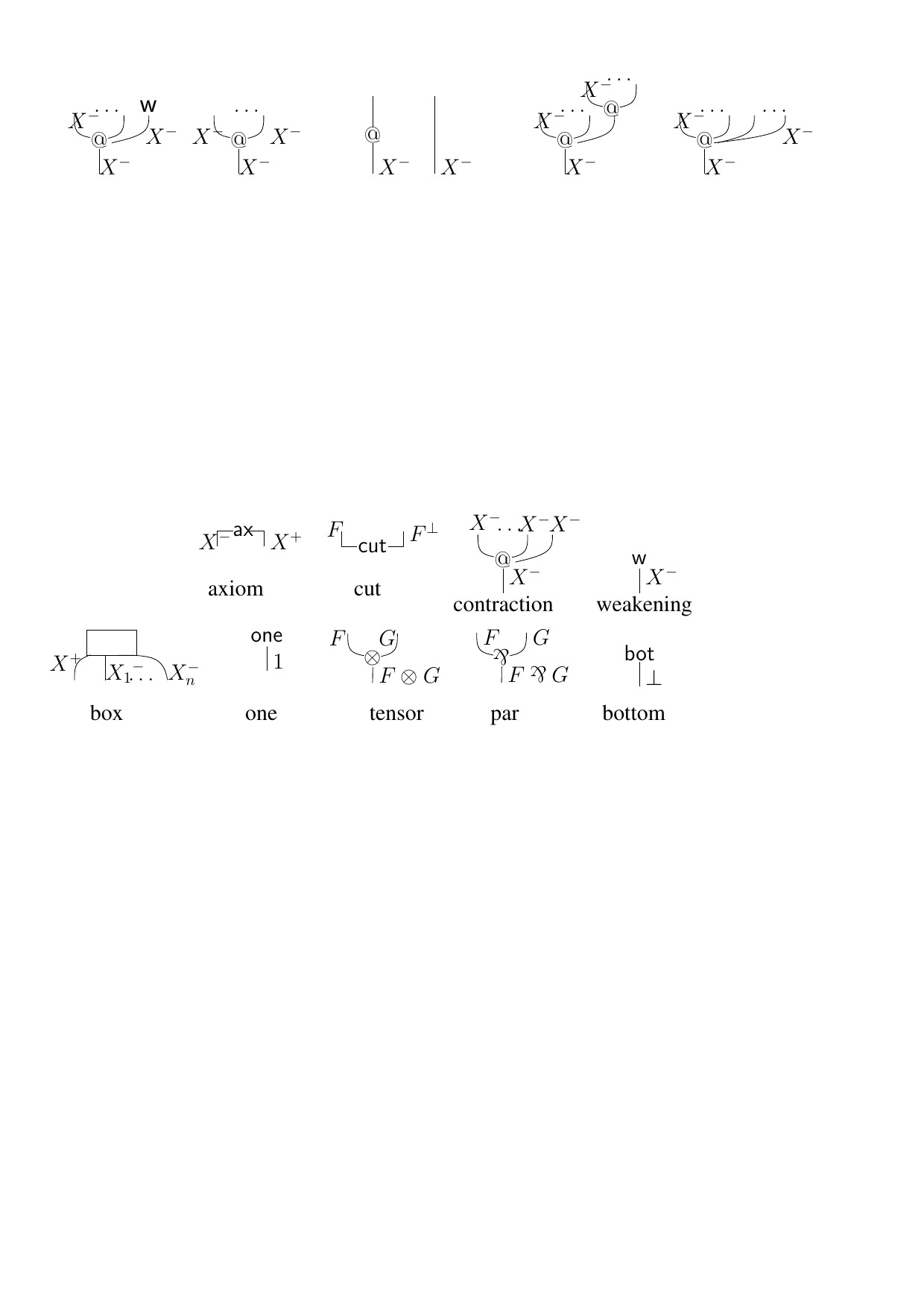}\vspace*{-8pt}
	}	\caption{Grammar of nodes} \label{fig:pn}
	\vspace{-15pt}
\end{figure}
\begin{figure}
	\centering
	\fbox{	\includegraphics[page=3,width=1\linewidth]{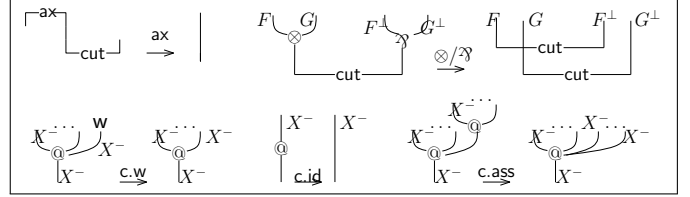} \vspace{-8pt}
	}\caption{Reduction Rules}
	\label{fig:rewriting}
\end{figure}
\vspace{-8pt}
\begin{figure}
	\centering
	\begin{minipage}[b]{4cm}
		\includegraphics[page=1,width=1\linewidth]{FIGS/Fig_Example}
		\caption{$\R$}
		\label{fig:ABCDE1}
	\end{minipage}
	\quad
	\begin{minipage}[b]{4cm}
		\includegraphics[page=2,width=1\linewidth]{FIGS/Fig_Example}
		\caption{Hiding $C$ in $\R$}
		\label{fig:ABCDE2}
	\end{minipage}
	\vspace{-15pt}
\end{figure}
\begin{figure}
	\centering
	\fbox{\includegraphics[page=4,width=1\linewidth]{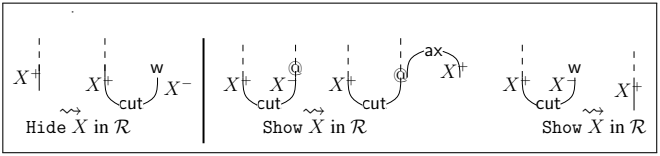}}
	\vspace*{-8pt}
	\caption{Hide  from conclusions, and Show}
	\label{fig:deweak}
\end{figure}

\section{Bayesian Proof-Nets}\label{sec:bpn}

A
\pn is composed of  a set of boxes $\Ax{(\R)}$,  and of a \emph{purely multiplicative} module $\nax{\R}$ (\Cref{not:boxes}).  Intuitively, the boxes are just containers, in which to store semantical information. We call  \emph{interface} of a \pn  its conclusions (the external interface), and the conclusions of the boxes (the internal interfaces). 
We define  Bayesian a class of proof-nets 
   whose \emph{interfaces} satisfy  the typing conditions in \Cref{def:bpn}.
It is  evident that every \BN---seen as a proof-net with the obvious translation---satisfies such  conditions. More interestingly, we can associate a \BN to any proof-net which satisfies the interfaces conditions. To formalize this, we need the notion of polarized order (\Cref{sec:polarized}). The interface  conditions have  a graph-theoretical (and proof-theoretical) meaning 
that will be the topic of   \Cref{sec:MLLcomponents}, and that will be crucial to relate  with   Bayesian inference.

	\begin{definition}[\repfree]  
		\condinc{}{If $\Gamma$ denotes a sequence of atomic edges, it is   {\emph{\repfree}} if no two labels are supported by the same \atom.		
		$\Gamma$ is \emph{positive} if all labels  are positive or if it is \emph{empty}, negative otherwise. }

		  A sequent  
		$\Gamma$ 
		is   \emph{\repfree} if no two occurrences of atomic sub-formula  are supported by the same \atom
	(so, the sequent  $ \X,\Xn$ is \emph{not} \repfree).	\end{definition}
\vspace*{-6pt}
		\begin{definition}[Bayesian proof-net]\label{def:bayesian_pn}\label{def:bpn}
		A  proof-net $\R:\Delta$ is   \emph{Bayesian}  whenever  the following \emph{interface conditions} hold. 
		\begin{itemize}
			\item[1.] 	 ${\Delta}$ and the conclusions of each box in $\R$ are   \repfree sequents.
			\item[2.] 	  If $\X$ is 
			 the positive conclusion of a  box, then no other box has conclusion $\X$,  and  $\Xn$ does not appear in $ \Delta$.
		\end{itemize}
	We write  \textbf{\bpn} for ``\emph{atomic} Bayesian \pn".
		A   \bpn  $\R:\Delta$ is \emph{positive} if $\Delta$ is \emph{empty}, or if all its atomic formulas  are  positive.
	\end{definition}
	Condition (2.) states that 	each atom $X\in \At{\R}$
	appears either positive in the conclusion of exactly one box, or  negative in the conclusion of $\R$. 
\begin{notation}
		We   denote each box in  a Bayesian \pn $\R$ by the atom which labels its unique positive conclusion (for example,   $\bbox^X$ for the box of  positive conclusion $\X$).
\end{notation}
	

 Observe that if $\R$ is a Bayesian \pn in normal form, 
	\Cref{fact:canonical_dec} applies. The atomic sub-net obtained by removing the syntactic tree of the conclusions is necessarily  a \bpn.
	 In the rest of the paper, we focus our  analysis on  \bpn's.
	 

\subsection{{Atomic proof-nets}	 and polarized order }\label{sec:polarized}
Atomic proof-nets are a simple case of \emph{polarized} proof-nets \cite{Laurent03}, where every edge is labelled by a positive or a negative  formula.
The following is an easy property of polarized nets.

\begin{lemma}[Polarized correctness \cite{phdlaurent}]\label{lem:pol_correct} For $\M$  an atomic   pre-module, we denote by  $\pol{\M}$  
	the  graph which has the same nodes and edges  as $\M$,  but where the edges are directed 
	\SLV{}{according to their polarity:} downwards if positive,  upwards if negative. 	
The following are equivalent: (1.)
  $\M$ is acyclic correct, (2.)
	 $\pol \M$   is a DAG.
\end{lemma}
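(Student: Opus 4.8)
The plan is to reduce the equivalence to a purely combinatorial statement about how a cycle sits with respect to the polarity orientation, exploiting the drastic simplification that the \emph{atomic} hypothesis imposes. First I would observe that, since every edge of $\M$ is labelled by an atomic formula, $\M$ can contain no $\One$, $\Bottom$, $\otimes$ or $\parr$ node (their conclusions are non-atomic); hence the only nodes occurring are $\ax$, $\cut$, $\bn$, $\cn$ and $\w$ (the mix combination contributes no edges). In particular the \emph{only} switching node present is the contraction $\cn$, so a switching cycle is simply a cycle that uses at most one premise of each $\cn$-node. I will then record, for each node, its in/out behaviour in $\pol\M$: directing positive edges downwards and negative edges upwards, $\ax$ has out-degree $1$ (the positive conclusion) and in-degree $1$; $\cut$ has in-degree $1$ and out-degree $1$; a box $\bn$ has out-degree $1$ (its unique positive conclusion) and in-degree equal to its number of negative conclusions; a contraction $\cn$ has in-degree $1$ (its conclusion) and out-degree equal to its number of premises; and a weakening $\w$ is a degree-$1$ sink. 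The key facts are that \emph{every node except $\cn$ has out-degree at most $1$}, and that at a $\cn$-node the unique in-edge is the conclusion.

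I would prove the equivalence in the contrapositive form "$\M$ has a switching cycle iff $\pol\M$ has a directed cycle". For the direction $(2)\Rightarrow(1)$: a directed cycle of $\pol\M$, read as an undirected cycle of $\M$, passes through each contraction by entering along its unique in-edge (the conclusion) and leaving along a single out-edge (one premise); since there are no $\parr$ nodes, it therefore uses at most one premise at every switching node, so it is a switching cycle.

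For $(1)\Rightarrow(2)$ I would take a switching cycle $\gamma$ and show that it is already \emph{directed} in $\pol\M$. Orient the edges of $\gamma$ according to $\pol\M$ and classify each vertex of $\gamma$ as a \emph{source} (both incident cycle-edges outgoing), a \emph{sink} (both incoming), or a \emph{through} vertex. The central claim is that \emph{$\gamma$ has no source}: a source would require two used out-edges, but every node other than $\cn$ has out-degree at most $1$, while at a $\cn$-node the switching condition forbids using two of its premises (which are its only out-edges). I would then invoke the elementary fact that in any undirected cycle with a fixed edge-orientation the number of sources equals the number of sinks (reading the orientation pattern around the cycle, the $\mathsf{F}\mathsf{B}$ and $\mathsf{B}\mathsf{F}$ transitions alternate and hence are equinumerous). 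Since $\gamma$ has no source it has no sink, so every vertex is a through vertex, which is exactly the condition for $\gamma$ to be a directed cycle of $\pol\M$.

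The step I expect to be the main obstacle — and the one where the geometry of boxes really enters — is precisely this no-source/no-sink balance. A priori a switching cycle could traverse a box through two of its \emph{negative} conclusions, which in $\pol\M$ are both incoming edges, turning that box into a sink and breaking directedness; the source-equals-sink counting, combined with the absence of sources, is what rules such a traversal out and forces $\gamma$ to be globally consistent with the polarity orientation. Once this is in place, the two implications assemble into the stated equivalence.
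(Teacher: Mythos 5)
Your proof is correct, and it rests on the same key observation as the paper's: under the polarized orientation, the only nodes of an atomic module with out-degree greater than one are contractions, and a switching uses at most one of their premises. The easy direction (a directed cycle of $\pol\M$ can enter a contraction only through its unique in-edge, its conclusion, hence uses at most one premise and is therefore a switching cycle) is exactly what the paper dismisses as immediate --- note only that your labels are swapped: that implication is the contrapositive of $(1)\Rightarrow(2)$, not of $(2)\Rightarrow(1)$; this is harmless, since you prove both implications of the cycle-level equivalence. Where you genuinely diverge is in the nontrivial direction. The paper argues globally: after a switching, every node has out-degree at most one in the polarized orientation, and a DAG in which every node has out-degree at most one is a forest, so every switching graph of $\M$ is a tree. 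You instead argue locally on a single switching cycle: no vertex of the cycle can be a source (the out-degree bound handles all nodes except contractions, and the switching condition handles those), and the sources-equal-sinks balance around an oriented cycle then forces the cycle to be directed in $\pol\M$, contradicting DAG-ness. The two are contrapositive reorganizations of one another, but yours proves the slightly sharper fact that switching cycles and directed cycles of $\pol\M$ coincide, and it makes explicit why a cycle cannot traverse a box through two of its negative conclusions --- the sink this would create is excluded by the counting --- a case the paper's forest argument disposes of silently. The price is that you need the elementary sources-equal-sinks lemma, which the paper sidesteps by quoting the forest characterization.
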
		
\SLV{}
{\begin{proof}We only prove $2. \Rightarrow 1.$ (the other direction is immediate). Recall that a DAG where each node has at most an outcoming edge is a tree. By inspecting the grammar of nodes, we observe that only $\parr$ and $@$ nodes may have more than one outcoming edges (outcoming for the polarized orientation). Hence each switching graph is a tree. 
\end{proof}}

\condinc{}{If moreover $\R$ is weakening-free, the following are equivalent:
	\begin{enumerate}[i.]
		\item $\R$ is \CC
		\item $\pol \R$   has a single positive conclusion (and possibly several negative conclusions).
	\end{enumerate}
}

For   atomic proof-nets, the polarized orientation  induces a partial order on the nodes, and  in particular on $\Ax(\R) $, with 
$ \bbox_1<\bbox_2 $ if there is a polarized path  from $ \bbox_1$ to $ \bbox_2$.

\begin{definition}[Polarized order] If   $\M$ is  an atomic module (hence acyclic correct), 
the polarized orientation induces a partial order on the nodes and edges of $\M$, with $x \leq y$ if in $\pol{\M}$ there is an oriented path from $x$ to $y$. We refer to this order as the \emph{polarized order}.
And edge (a node) is \emph{\textbf{initial}} (final) if it is minimal (maximal) w.r.t. the polarized order.	
\end{definition}

\condinc{}{
\begin{remark}If $\M$ is an atomic module, a positive (negative) conclusion of $\M$  is always final (initial). 
	Conversely,  a positive (negative) pending premise is always initial (final).
\end{remark}
}


The set $\Box(\R)$ with the induced partial  order is am  \emph{invariant} of the proof-net under several 
important transformations, and in particular under normalization. 
\begin{definition}[$ \bnet{\R}$]\label{def:core} For $\R$ an atomic proof-net, we denote by  $\bnet \R$ 
	the DAG over  $\Ax{(\R)}$ induced by the polarized order.
\end{definition}
\SLV{}{We refer to $ \bnet{\R} $ also as the \emph{core invariant} of $\R$. }
\begin{lemma}
	$\bnet{\R}$ is invariant under $\arule$-reductions and $\arule$-expansions.
\end{lemma}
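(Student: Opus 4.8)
The plan is to reduce the statement to a purely graph-theoretic \emph{reachability} claim and then to verify it rule by rule. Since an $\arule$-expansion is by definition the inverse of an $\arule$-reduction (\Cref{def:rewriting}), and since the asserted equality $\bnet{\R}=\bnet{\R'}$ is symmetric, it suffices to treat the case $\R\Red[\arule]\R'$. By \Cref{lem:pol_correct} (applied to the atomic modules $\nax\R$ and $\nax{\R'}$), together with the fact that reduction preserves switching acyclicity, both the polarized graphs of $\R$ and $\R'$ are DAGs, so $\bnet\R$ and $\bnet{\R'}$ are well defined.

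First I would observe that none of the four atomic rules $\axrule,\neut,\assoc,\cid$ acts on a $\bn$-node: each of them rewrites a local configuration built solely from $\ax$-, $\cut$-, $\cn$- and $\weak$-nodes (\Cref{fig:rewriting}). Consequently the rewrite carries every box of $\R$ unchanged to a box of $\R'$, yielding a canonical bijection $\Ax(\R)\cong\Ax(\R')$; in particular $\bnet\R$ and $\bnet{\R'}$ have the same set of vertices. By \Cref{def:core} it then remains to show that the two induced orders coincide, i.e. that for boxes $\bbox_1,\bbox_2$ there is an oriented path from $\bbox_1$ to $\bbox_2$ in the polarized orientation of $\R$ iff there is one in that of $\R'$.

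This I would establish through a \emph{local reachability-preservation} property: each rule replaces a redex subgraph $D$ by a contractum $D'$ while leaving the rest of the net (nodes, edges and their orientation) untouched, and the replacement preserves directed reachability between the boundary edges that connect $D$ to its context. Reading the nodes in the polarized orientation, $\ax$ and $\cut$ are pass-through nodes (one in-edge, one out-edge), a $\cn$-node is a binary branching (one in-edge, two out-edges), and a $\weak$-node is a sink. For $\axrule$ the redex is an oriented segment through an axiom and a cut sharing the positive edge; the reduct is a single edge with the very same source and target, so reachability is unchanged. For $\neut$ one premise of a contraction is a weakening: the branch leading into the $\weak$-sink carries no box and is simply discarded, so reachability among the surviving nodes is preserved. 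For $\assoc$ a nested pair of contractions is reassociated; in the polarized orientation this is a branching from a common in-edge to the same three out-edges through a different, box-free, internal shape, hence identical reachability. The remaining rule $\cid$ is treated analogously. Because boxes lie outside every redex, any oriented path between two boxes decomposes into segments inside the untouched context and segments across the redex, and each kind of segment is preserved; composing them gives the equivalence of reachability, and thus $\bnet\R=\bnet{\R'}$.

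The main obstacle I anticipate is the bookkeeping in the polarized orientation for the contraction rules: since a $\cn$-node becomes a branching point rather than a pass-through, one must check carefully that collapsing a contraction against a weakening ($\neut$), reassociating contraction trees ($\assoc$), and the $\cid$ step neither create nor destroy reachability between boxes---including the delicate case of paths that enter and leave the redex several times---and that the erased weakening branch never carries a box downstream.
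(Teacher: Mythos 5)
Your proof is correct. For what it is worth, the paper states this lemma \emph{without any proof at all} --- it is treated as immediate from the definitions --- so there is no official argument to compare against; your write-up supplies exactly the justification the authors leave implicit. Your two key observations are the right ones: (i) the atomic rules $\axrule,\neut,\assoc,\cid$ rewrite configurations built only from $\ax$-, $\cut$-, $\cn$- and $\w$-nodes, so the set $\Ax(\R)$ of boxes is carried over unchanged; (ii) each such local rewrite preserves directed reachability, in the polarized orientation, between the boundary edges of the redex (with $\ax$ and $\cut$ pass-through, $\cn$ a branching, $\w$ a sink), and since boxes never lie inside a redex, any polarized path between boxes splits into context segments and redex-crossing segments, each preserved in both directions. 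The only soft spot is dismissing $\cid$ as ``treated analogously'' without spelling out its redex; this is forgivable in a blind attempt (the rule's exact shape is only given in a figure), and in any case it falls under the same umbrella argument, being a box-free local rewrite of structural nodes that neither creates nor destroys reachability between its boundary edges.
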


\subsection{Bayesian proof-nets and Bayesian Structures}\label{sec:bpn_BN}

Let $\R$ be an atomic \bpn.
 Since to  each box  corresponds exactly one positive occurrence of atom, 
the  correspondence between \bpn's and  Bayesian structures is  immediate\footnote{Here, we do not say yet what is the space of values associated to the variable $X$. This will come with the interpretation. The values sets associated to the  \rvars is however irrelevant when reasoning with Bayesian Structures. }. 
\begin{property}
\begin{enumerate}
	\item 	Let   $\R:\Gamma$ be a  positive \bpn and  $\At{\R}=\{X_1, \dots,X_n\}$. Then	$\bnet{\R}$ is isomorphic to a  Bayesian Structure  $\netG$ over   $X_1, \dots,X_n$.   Notice that   ${\Gamma}= \X_1, \dots,\X_n$ when   $\R$ has no internal atoms.  
	\item Conversely, given a Bayesian Structure  $\netG$ over   $X_1, \dots,X_n$, there exists   a 
	   \bpn (of conclusions $\X_1, \dots,\X_n$) such that  $\bnet{\R}$ is isomorphic to $\netG$. 
\end{enumerate}
\end{property}
\begin{example}Consider the \pn $\R$ in  \Cref{fig:ABCDE1}. It is  clear that $\Bnet{\R}$ is exactly the DAG of the BN in \Cref{fig:BNrain}.
\end{example}
Point (1.)  is evident when $\R$ is in normal form, but  it holds also when  $\R$ is not normal, perhaps having a complicated structure, because   $\bnet \R$  is  invariant  under normalization.

Clearly, given a  Bayesian Structure  $\netG$, there are several \bpn's $\R$ such that  $\bnet{\R}$ is isomorphic to $\netG$.  The obvious canonical choice is   the one in normal form and with no internal atom.
\SLV{}{The choice about the  conclusions  will be justified in \Cref{sec:PR}. }
Notice  that $\bnet \R$ is independent from the conclusions of $\R$.


\subsection{Weakening and de-weakening.}

Proof-nets which have the same  $\bnet \R$ 
may have very different conclusions, spanning from no internal atom, to empty conclusions.
We observed (\Cref{rem:weak}) that weakening allows us to hide \emph{positive} atomic formulas. 
Such an operation is reversible. Elementary $\MLL$ manipulations produce  a dual operation which given an internal \atom $Y$, produces a conclusion labeled by $\Y$.

\begin{lemma}\label{def:de-weakening} Let  $\R$  be  a  \bpn.

\textbf{\textsf{Hide}.} For $\R:\Gamma',\Y$, the operation 
  ${\OHide}$ consists in cutting the conclusion $\Y$ with a $\w$-node  (\Cref{fig:deweak}).
		The resulting proof-net $\W\R Y$ is a \bpn conclusions $\Gamma'$ ($Y$ is  internal).
		
	\textbf{\textsf{Show}.} 
	Assume  $Y$ is   an internal \atom of $\R:\Gamma$. 	Necessarily, $\R$ contains  at least one node $\cut{(\Y,\Yn)}$.
		The \emph{de-weakening} of  $Y$ in $\R$ is the operation ${\OShow}$ described in \Cref{fig:deweak}, where  we assume that the negative edge $e:\Yn$ is conclusion of a $\cn$-node (which is always possible by $\neut$-expansion). 
		The resulting 
		proof-net $\deW \R {Y}$ is a \bpn of conclusions $\Gamma, \Y$. 
		We call \emph{full}  de-weakening of $\R$  the result  of repeatedly applying the operation to  all   internal \atoms of $\R$. 

\end{lemma}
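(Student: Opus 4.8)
The two operations are local graph surgeries, so in each case the plan is to check three things: that the result is again a \pn (switching-acyclic with no pending premise, \Cref{def:pn}), that the two interface conditions of \Cref{def:bpn} are preserved, and that the conclusions are the claimed ones. It is convenient to read the two constructions as mutually inverse, with $\OShow$ reversing $\OHide$ modulo the atomic ($\arule$) reductions, so that the correctness of one informs the other.

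\textbf{Hide.} Given $\R:\Gamma',\Y$, the construction (\Cref{fig:deweak}) adds a $\w$-node with conclusion $\Yn$ and a node $\cut(\Y,\Yn)$ whose positive premise is the former conclusion $\Y$. No pending premise is created, since the fresh $\Yn$ is immediately consumed by the cut and the $\w$-node has no premise. For switching acyclicity I would argue through \Cref{lem:pol_correct}: in $\pol{\W\R Y}$ the former positive conclusion $\Y$ (a final edge of $\pol\R$) now flows into the cut, whose negative premise flows upward into the $\w$-node, which is a fresh sink; appending a cut and a sink below a final edge of a DAG keeps it a DAG, so $\W\R Y$ is a \pn. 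Its conclusions are exactly $\Gamma'$, and $Y$ has become internal since $\Y\notin\Gamma'$ while the net now contains $\cut(\Y,\Yn)$. The interface conditions are inherited: $\Gamma'$ is \repfree, being a subsequence of the \repfree sequent $\Gamma',\Y$; no box is touched, so every box conclusion stays \repfree and each positive box-conclusion $\X$ is still produced by a unique box; and $\Xn\notin\Gamma'$ for each such $\X$ because already $\Xn\notin\Gamma',\Y$. Hence $\W\R Y$ is a \bpn.

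\textbf{Show.} Here I first note the construction is well defined: since $\R$ is atomic it has no $\otimes$- or $\parr$-node, so the unique positive occurrence of $Y$ --- the conclusion $\Y$ of the box $\bbox^Y$ granted by condition~(2.) --- is a single edge which, being internal, is not a conclusion and can therefore only be the premise of one $\cut(\Y,\Yn)$; thus both the cut and the atom are determined. By $\neut$-expansion I present the negative premise $e:\Yn$ of this cut as the conclusion of a $\cn$-node carrying a weakening premise $\Yn$, and then $\OShow$ replaces that weakening premise by the negative conclusion of a fresh axiom $\ax(\Y,\Yn)$, adding the axiom's positive conclusion $\Y$ to the interface; intuitively the contraction now copies the value of $Y$ both to its internal consumers and, through the axiom, to the new output. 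Since expansions preserve switching acyclicity and labelling, and the grafted axiom turns the upward $\Yn$-flow into a new downward output ending in a fresh sink (creating no pending premise and, by \Cref{lem:pol_correct}, no cycle), $\deW\R Y$ is again a \pn. Its conclusions are $\Gamma,\Y$, and this sequent is \repfree because $\Gamma$ was and, $Y$ being internal in $\R$, the atom $Y$ did not occur in $\Gamma$. No box is added (the new node is an $\ax$-node, not a $\bn$-node), so condition~(1.) on box conclusions is untouched, and condition~(2.) holds as well: $\Y$ is still the positive conclusion of the unique box $\bbox^Y$, no other box has conclusion $\Y$, and $\Yn$ still does not occur in $\Gamma,\Y$. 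Hence $\deW\R Y$ is a \bpn, and iterating $\OShow$ over all internal atoms yields the full de-weakening.

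The two genuinely delicate points are the switching-acyclicity checks and the well-definedness of $\OShow$. I expect the former to be the main obstacle and would discharge it uniformly through \Cref{lem:pol_correct}, reducing acyclicity of each surgered net to that of a DAG to which one has only appended a sink (the $\w$-node) or grafted an axiom redirecting an upward flow to a fresh output; the latter rests on the observation that in an atomic \bpn the positive edge $\Y$ singled out by condition~(2.) feeds a unique cut. That $\OHide$ and $\OShow$ are mutually inverse then follows by normalizing their composite with the axiom and $\neut$ reduction steps.
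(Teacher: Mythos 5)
Your verification is essentially the one the paper intends but leaves implicit (the lemma is stated without proof): check proof-net-hood via the polarized criterion of \Cref{lem:pol_correct}, check the two interface conditions of \Cref{def:bpn}, and read off the conclusions. Both for $\OHide$ and for $\OShow$ you carry these checks out correctly, and the polarized-DAG argument (appending a sink below a final edge, resp.\ grafting an axiom whose only outgoing polarized edge is a fresh pending conclusion) is the right way to discharge switching acyclicity.

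One claim, however, is false: the well-definedness argument for $\OShow$ asserting that the positive occurrence of the internal atom $Y$, and hence the node $\cut(\Y,\Yn)$, is unique. Condition (2.) of \Cref{def:bpn} constrains only \emph{box} conclusions; it does not forbid further positive occurrences of $Y$ introduced by $\ax$-nodes. Concretely, $\ax$-expanding the cut under $\bbox^Y$ (exactly what is done in \Cref{lem:trivial} and \Cref{ex:cutnet}) yields a \bpn containing the chain $\bbox^Y$, $\cut(\Y,\Yn)$, $\ax(\Y,\Yn)$, $\cut(\Y,\Yn)$, $\w$: two distinct cuts on $Y$. This is precisely why the statement says that $\R$ contains \emph{at least one} node $\cut(\Y,\Yn)$, not exactly one. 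The slip is inessential to the lemma itself: what $\OShow$ needs is existence, which your argument does establish (the positive edge $\Y$ issued by $\bbox^Y$ is internal, hence not a conclusion, hence the premise of some node, and in an atomic module only a $\cut$-node can have a positive premise); and your subsequent checks --- acyclicity via the grafted polarized sink, repetition-freeness of $\Gamma,\Y$ from internality of $Y$, stability of the box conditions --- hold verbatim for any chosen cut. But you should present $\OShow$ as defined relative to a choice of $\cut(\Y,\Yn)$ (or fix canonically the cut fed by the positive conclusion of $\bbox^Y$), rather than claim that the choice is forced.
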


\condinc{}{
	\begin{remark}
		Assume that the cut-node has premises $f:\Y, e:\Yn$. 
		\begin{enumerate}[i.]
			\item   $e:\Yn$ is  conclusion of a $w$-node: remove  it and the cut, so that $f:\Y$ becomes a conclusion of the proof-net.  
			\item   $e:\Yn$ is  conclusion of a $@$-node: 
			add an axiom $\ax(\Y,\Yn)$, making its  negative edge  premise of the $@$-node.  The positive edge becomes a conclusion of the proof-net. 
			\item  $e:\Yn$ is conclusion of an axiom ($\ax$ or $\Ax$):  insert one $@$-node in $e$ (note that this is an $m$-expansion rule), and procede as in (ii).
		\end{enumerate}
	\end{remark}
}
\begin{example}
For $\R_D$ as in \Cref{fig:ABCDE_D}, $\Hide {\R_D} C$ produces the black-colored  \bpn in \Cref{fig:ABCDE1} (please ignore  the part in red).
\end{example}
The following properties are easy to check.
\begin{lemma}
	Let $\R$  be a \bpn.
 \begin{itemize}
 	\item $\bnet \R $ is invariant under  $\OHide$ and ${\OShow}$.
\item  $\Show {\Hide {\R} X} X    \Red[]^* \R$.
\item $\Hide {\Show {\R} X} X   \Red[]^* \R$.
 \end{itemize}
\end{lemma}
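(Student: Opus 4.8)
The plan is to treat the three items separately: the invariance claim I would handle structurally, and the two round-trips as short reduction computations.

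For the invariance of $\bnet{\R}$, I would first recall that by \Cref{def:core} the DAG $\bnet{\R}$ is determined by two data only: the vertex set $\Ax(\R)$ of boxes, and the polarized reachability between boxes in $\pol{\R}$ (\Cref{lem:pol_correct}). Both $\OHide$ and $\OShow$ insert only nodes of sort $\w$, $\cut$, $\cn$ and $\ax$ — never a $\bn$-node — so $\Ax(\R)$ is unchanged, and it remains to check that no box-to-box polarized path is created or destroyed. For $\OHide$, the added material cuts the former conclusion $\X$ against a fresh $\w$-node: in $\pol{\cdot}$ the positive edge $\X$ runs down into the new $\cut$, whose outgoing negative edge $\Xn$ runs up into the $\w$-node, a polarized sink of in-degree one and out-degree zero; hence the only vertex newly reachable from the boxes is this sink, and no path between two boxes changes. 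For $\OShow$, the added axiom $\ax(\X,\Xn)$ takes its negative leg as an extra premise of the contraction $\cn$ (so $\cn$ reaches the axiom) and sends its positive leg to the new conclusion $\X$, again a sink; so the axiom has in-degree one and out-degree one, dead-ending at the conclusion, and box-to-box reachability is once more untouched. Finally, since $\OShow$ is defined up to a $\neut$-expansion, I invoke the already-established invariance of $\bnet{\cdot}$ under $\arule$-reductions and $\arule$-expansions to cover that preliminary step. Together these give $\bnet{\Hide{\R}{X}}=\bnet{\R}=\bnet{\Show{\R}{X}}$.

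For the two round-trips the strategy is to exhibit the local reducible pattern created by composing the two operations and to collapse it using the atomic reductions $\axrule$, $\neut$, $\cid$, all of which lie in $\Red[]$. I would do the third item first, as it is cleaner. Starting from $\R$ with $X$ internal, $\Show{\R}{X}$ inserts an axiom $\ax(\X,\Xn)$ whose $\Xn$ is a new premise of the contraction $\cn$ feeding the cut on $X$, and whose $\X$ is exposed as a conclusion (\Cref{fig:deweak}); then $\Hide{\cdot}{X}$ cuts exactly this $\X$ against a fresh $\w$-node. The cut now meets a conclusion of the axiom, so an $\axrule$-step removes the axiom together with the cut and turns the weakening edge into a premise of $\cn$; a $\neut$-step then deletes that weakening premise, restoring $\cn$ — and hence the whole net — to $\R$. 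The second item is symmetric: $\Hide{\R}{X}$ cuts the exposed conclusion $\X$ against a $\w$-node; the $\neut$-expansion built into the definition of $\Show{\cdot}{X}$ presents this negative premise as the conclusion of a $\cn$-node, $\Show$ adds an axiom feeding that $\cn$ and a fresh conclusion $\X$, after which $\neut$ and $\cid$ collapse the now-trivial contraction while an $\axrule$-step eliminates the remaining axiom-cut, leaving precisely $\R$.

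The main obstacle I anticipate is the bookkeeping around the $\neut$-expansion clause in $\OShow$: I must check that the contraction/weakening pattern it introduces is exactly the one erased by $\neut$ and $\cid$, independently of whether $X$ already had other internal uses, so that the reduction lands on $\R$ itself rather than on some $\arule$-expansion of it. This is where I would carefully separate the \emph{structural} part of the net — the contraction tree on $\Xn$, governed by $\assoc$, $\neut$, $\cid$ — from the box $\bbox^X$ and its cut. Since $\Red[]$ is terminating and confluent, the particular reduction order is immaterial and it suffices to produce one sequence reaching $\R$; moreover the invariance of $\bnet{\cdot}$ proved in the first item guarantees en route that no reduction ever duplicates or deletes a box, which is the only way the target of these reductions could fail to be exactly $\R$.
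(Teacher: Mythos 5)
Your proof is correct, and since the paper states this lemma with no proof at all (``The following properties are easy to check''), your argument --- the box set $\Ax(\R)$ and the polarized box-to-box reachability are untouched because the nodes added by $\OHide$ and $\OShow$ are polarized dead-ends (with the $\neut$-expansion clause of $\OShow$ absorbed by the earlier lemma on invariance of $\bnet{\R}$ under $\arule$-expansions), plus explicit $\axrule$/$\neut$/$\cid$ reduction sequences for the two round-trips --- is precisely the routine verification the authors leave implicit. The one loose point is your closing claim that invariance of $\bnet{\R}$ ``guarantees'' the reductions land exactly on $\R$: two nets with the same $\bnet{\cdot}$ can differ, so this remark carries no weight, but it is also not needed, since your explicit reduction traces (including the extra $\neut$/$\cid$ steps undoing any preliminary expansion performed by $\OShow$) already reach $\R$ on the nose.
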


\vspace*{-4pt}
\section{\MLL  connected components: wirings}\label{sec:MLLcomponents}


A \bpn is composed of  a set of boxes $\Box{(\R)}$,  and of a \emph{multiplicative, atomic} module $\nax{\R}$. The latter  acts as a \emph{wiring}, plugging  together and carrying around the  information  stored in the boxes. In this section we identify  
a property of \emph{\goodness} which  guarantees that the behavior of the wiring is as expected   (see \Cref{sec:PCoh}). We then prove that  the interface conditions  in \Cref{def:bpn} are equivalent to  \goodness.

Such a property    is  essential to the  interpretation of \bpns. Remarkably 
(\Cref{sec:pt})  the  same   property will turn out to be the proof-theoretical counterpart of  the characterizing  property  of \emph{cliques trees}, the data structure underlying  the message passing algorithm for Bayesian inference.

%
%
%
%
%

\begin{definition}[Atomic \MLL module] 
	We refer to an atomic module $\M$ which  only contains  nodes of sort $\ax,@,\weak,\cut$ (and  no  $\bn$-node) as an atomic \MLL module.
\end{definition}

\condinc{}{
\begin{definition}[\good (wirings)]\label{lem:support}  
 An atomic  \MLL module  $\M$  is said   \emph{\good} if 
	any   pair of edges which are supported by the same \atom are connected. 
	
	Stated otherwise, when  partitioning $\M$ in maximal connected components, (the edges of) \emph{distinct connected components are supported by  distinct \atoms}, that is:
\begin{center}
		 \emph{one component = one \atom}.
\end{center}

We call \emph{\wiringM} (resp. \emph{\wiring}) a module (resp. a proof-net) which is atomic and \good.
\end{definition}	
}
\begin{property}[Initial edges]\label{fact:initial}
The initial edges of an atomic \MLL module (\ie the minimal edges w.r.t. the polarized order) are its negative conclusions, and its positive pending premisses.
\end{property}
The proofs of the following  statements are in \Cref{app:MLL}.
\begin{lemma}\label{lem:connected}Let $\M$ be an atomic \MLL module  which is a connected graph. Then:
		(1.) 	All edges are \supported by the same \atom. 
		(2.) $\pol \M$---\ie $\M$ with the polarized orientation---is a directed tree, with a \emph{unique} initial edge.
\end{lemma}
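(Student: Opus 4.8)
The plan is to handle the two parts separately: part (1) is an immediate consequence of the shape of the admissible nodes, while part (2) rests on a short in-degree count on $\pol{\M}$.

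For part (1), the key observation is that every node of an atomic \MLL module is \emph{monochromatic}. Inspecting the four admissible sorts: an $\ax$-node carries $\X,\Xn$; a $\cut$-node joins $\X,\Xn$; a $\weak$-node carries a single $\Xn$; and an $@$-node has all its premises and its conclusion equal to one and the same $\Xn$. In each case every edge incident to the node is \supported by the same \atom. Hence two edges meeting at a node always have the same support, and since $\M$ is connected any two of its edges are linked by a path whose consecutive edges share a node; transitivity of ``having the same support'' then forces all edges of $\M$ to be \supported by a single \atom.

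For part (2), I first recall that, $\M$ being a module (switching acyclic), \Cref{lem:pol_correct} yields that $\pol{\M}$ is a DAG. The heart of the argument is the remark that in $\pol{\M}$ \emph{every node has in-degree exactly $1$}. This is again read off the grammar under the polarised orientation (positive edges downward, negative upward): the unique edge pointing into a node is the negative conclusion of an $\ax$- or $@$-node, the conclusion of a $\weak$-node, or the positive premise of a $\cut$-node, whereas every other incident edge points away from the node. From this I derive the tree structure. First, $\pol{\M}$ has no undirected cycle: a cycle in a DAG is never directed, so its topologically maximal node would receive both of its (distinct) cycle edges, contradicting in-degree $1$. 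Thus the nodes together with the internal edges form a connected acyclic graph, i.e.\ a tree, with $n-1$ internal edges on $n$ nodes. Summing in-degrees gives $n = (n-1) + i$, where $i$ counts the pending edges directed into a node; hence $i=1$, so there is a \emph{unique} initial edge, which by \Cref{fact:initial} is a negative conclusion or a positive pending premise. Since each node then has a single parent and there is a single source-side pending edge feeding the root, $\pol{\M}$ is a directed tree rooted at this initial edge (with branching possible only at the $@$-nodes).

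The one delicate point, and the place where care is needed, is the bookkeeping with pending (dangling) edges: one must count an edge toward a node's in-degree exactly when its head in the polarised orientation is that node, and one must correctly identify the source-side pending edges with the initial edges described in \Cref{fact:initial}. Once the orientation conventions are fixed, both the acyclicity argument and the degree count are routine, and the degenerate cases (a single $\ax$- or $\weak$-node with dangling conclusions) are checked directly.
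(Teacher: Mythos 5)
Your proof is correct and takes essentially the same route as the paper's: part (1) by observing that every admissible node sort ($\ax$, $\cut$, $\weak$, $@$) is monochromatic and invoking connectivity, and part (2) by combining polarized correctness (\Cref{lem:pol_correct}, giving that $\pol{\M}$ is a DAG) with the fact that every node has in-degree exactly one under the polarized orientation, which forces the tree structure and the uniqueness of the initial edge. The paper compresses this into a few lines; your write-up only makes explicit what it leaves implicit (the no-undirected-cycle argument and the degree count yielding a single initial pending edge), and your node-by-node in-degree inventory, including the $\cut$-node, is accurate.
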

	\begin{proposition}\label{lem:M_connected}  	
		Let $\M: \Gamma \vdash \Delta$ be an atomic \MLL module (of conclusion $\Delta$,  premises $\Gamma$). Then  $\At{\M} = \At{\Gamma} \cup \At{\Delta}$, and 
			the following are \emph{equivalent} for each  $X\in \At{\M}$:
\textbf{(1.) }	 $\M$ has exactly one initial edge supported by $X$. 
		\textbf{	(2.)} $\M$ has exactly  one maximal connected component supported by $X$.

	\end{proposition}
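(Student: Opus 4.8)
The plan is to reduce both assertions to the two facts about connected components recalled just above, namely Lemma~\ref{lem:connected} and Property~\ref{fact:initial}. The guiding observation is that the maximal connected components of $\M$ partition its edges, and that each such component is itself a connected atomic \MLL module, so both results apply to it unchanged.

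First I fix a maximal connected component $C$ of $\M$. By Lemma~\ref{lem:connected}(1) all edges of $C$ carry one and the same supporting atom, which I denote $X_C$; this makes ``supported by $X$'' a well-defined property of components, and the components supported by a given $X$ are precisely those with $X_C=X$. By Lemma~\ref{lem:connected}(2), $\pol{C}$ is a directed tree, so $C$ has exactly one initial edge $e_C$, necessarily supported by $X_C$. Applying Property~\ref{fact:initial} to $C$, the edge $e_C$ is either a negative conclusion or a positive pending premise of $C$; since $C$ is a \emph{maximal} component it has no edges crossing its boundary, so every pending edge of $C$ is pending in $\M$, and hence $e_C$ belongs to $\Gamma$ or to $\Delta$. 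This gives $X_C\in\At{\Gamma}\cup\At{\Delta}$; ranging over all components yields $\At{\M}\subseteq\At{\Gamma}\cup\At{\Delta}$, while the reverse inclusion is immediate because premises and conclusions are edges of $\M$. This settles the equality $\At{\M}=\At{\Gamma}\cup\At{\Delta}$.

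For the equivalence I would prove the sharper statement that $C\mapsto e_C$ is a bijection between the components of $\M$ supported by $X$ and the initial edges of $\M$ supported by $X$. It is well defined and injective because distinct components are edge-disjoint. It is surjective because any initial edge $e$ of $\M$ supported by $X$ lies in a unique component $C$, and, $C$ containing all edges incident to its nodes, minimality of $e$ in $\M$ forces minimality of $e$ in $C$, whence $e=e_C$ by uniqueness of the initial edge of $C$; moreover $C$ is supported by $X$ since $e$ is. Counting then shows that the number of $X$-supported components equals the number of $X$-supported initial edges, so one of these numbers is $1$ exactly when the other is, which is precisely the equivalence $(1)\Leftrightarrow(2)$.

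The only genuinely delicate point is the passage between a component and the ambient module: I must verify that ``initial'', ``pending'', ``premise'' and ``conclusion'' are inherited correctly when restricting to a maximal connected component, so that the unique initial edge furnished inside $C$ is really a pending edge of $\M$ rather than an edge merely severed at the component boundary. Maximality of the component is exactly what removes this worry, since a maximal component has no boundary-crossing edges. Once this bookkeeping is in place, the statement is a direct consequence of Lemma~\ref{lem:connected} and Property~\ref{fact:initial}.
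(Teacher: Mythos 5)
Your proof is correct and follows essentially the same route as the paper's: both rest on Lemma~\ref{lem:connected} (one atom and one initial edge per maximal connected component) together with the key observation that an edge is initial in $\M$ if and only if it is initial in the maximal connected component containing it, which the paper records as a one-line remark and you justify via maximality of the component. Your bijection $C\mapsto e_C$ is simply a compact packaging of the paper's two implications ($1.\Rightarrow 2.$ by contradiction, $2.\Rightarrow 1.$ directly), and you additionally spell out the equality $\At{\M}=\At{\Gamma}\cup\At{\Delta}$, which the paper's proof leaves implicit.
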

\condinc{}{
\begin{corollary}Let $\D:\Delta$ be an    atomic \MLL proof-net. $\D$ is a connected graph if and only if $\Delta$ contains exactely one negative conclusion (and any number of positive conclusions.)
\end{corollary}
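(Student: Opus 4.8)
The plan is to count the maximal connected components of $\D$ by counting its initial edges, and then to identify the initial edges of a proof-net with its negative conclusions. First I would observe that, since $\D$ is a proof-net, it has no pending premises; hence by \Cref{fact:initial} the initial edges of $\D$ (the minimal edges for the polarized order) are exactly its negative conclusions, the positive pending premisses mentioned there simply not occurring.

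Next I would argue that the maximal connected components of $\D$ are in bijection with its initial edges. Partition $\D$ into its maximal connected components; each such component $\netC$ is a connected graph, and being a subgraph of the atomic, switching-acyclic module $\D$ it is itself an atomic \MLL module (subgraphs of switching-acyclic modules remain switching acyclic). By \Cref{lem:connected} (2), $\pol{\netC}$ is a directed tree with a \emph{unique} initial edge. Since there are no edges between distinct components, minimality for the polarized order is a local notion: an edge is initial in $\D$ if and only if it is initial in the component containing it. Thus each component carries exactly one initial edge of $\D$, and every initial edge lies in a unique component, giving a bijection between components and initial edges. A component inherits no premises from $\D$, so applying \Cref{fact:initial} to it again shows that its unique initial edge is in fact a negative conclusion; combining this with the first paragraph, the maximal connected components of $\D$ are in bijection with the negative conclusions occurring in $\Delta$.

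Finally I would read off the statement: $\D$ is connected exactly when it has a single maximal connected component, which by the bijection happens exactly when $\Delta$ contains exactly one negative conclusion, while the positive conclusions remain unconstrained in number. The point requiring care---and the main obstacle---is the consistency of the notion of ``initial edge'' between the ambient module $\D$ and its isolated components, together with the degenerate boundary case of an empty $\D$: a nonempty finite polarized DAG always has a minimal edge, so a nonempty proof-net without premises necessarily possesses at least one negative conclusion. This confirms both that ``exactly one negative conclusion'' is the correct condition for connectedness and that the count of zero negative conclusions cannot arise for a nonempty $\D$.
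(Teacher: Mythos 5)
Your proof is correct and takes essentially the same route as the paper: you count maximal connected components through their unique initial edges (\Cref{lem:connected} plus the locality of initiality, which is precisely the observation underlying the paper's proof of \Cref{lem:M_connected}) and then identify the initial edges of a premise-free net with its negative conclusions via \Cref{fact:initial}. Your explicit handling of the empty-net boundary case is a minor addition that the paper leaves implicit.
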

}
\begin{theorem}[Characterization]\label{thm:char}Let $\M: \Gamma \vdash \Delta$ be an atomic \MLL module. The following are equivalent:
	\begin{itemize}
		\item[1.]	Any   two  edges supported by the same \atom are connected. 
		\item[2.]  When  partitioning $\M$ in maximal connected components, (the edges of) \emph{distinct connected components are supported by  distinct \atoms}.
		\item[3.] Distinct initial edges of $\M$ are supported by distinct \atoms.
		\item[4.]  The  \emph{positive} atomic formulas in $\Gamma,\Delta^\bot$ are pairwise distinct.
	\end{itemize}
	
\end{theorem}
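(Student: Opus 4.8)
The plan is to reduce all four conditions to bookkeeping about the maximal connected components of $\M$ and their initial edges, so that \Cref{lem:connected} and \Cref{fact:initial} do essentially all the work. First I would partition $\M$ into its maximal connected components $C_1,\dots,C_k$, each of which is again an atomic \MLL module. By \Cref{lem:connected}, every $C_i$ has all its edges supported by one and the same atom --- call it the \emph{support} of $C_i$ --- and, under the polarized orientation, $\pol{C_i}$ is a directed tree with a \emph{unique} initial edge $e_i$. Since there are no edges between distinct components, two edges in different components are incomparable in the polarized order, so the set of initial (\ie polarized-minimal) edges of $\M$ is exactly $\{e_1,\dots,e_k\}$, and the support of $e_i$ equals the support of $C_i$. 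This gives a \emph{support-preserving bijection} $C_i \mapsto e_i$ between the components of $\M$ and the initial edges of $\M$, which is the backbone of the whole argument.

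With this in hand the first two equivalences are immediate. For (1.)~$\Leftrightarrow$~(2.): because each component carries a single atom, saying that any two edges with the same support lie in one component is the same as saying that no atom is the support of two distinct components, which is (2.); conversely if two distinct components shared a support $X$, picking an $X$-edge in each would violate (1.). For (2.)~$\Leftrightarrow$~(3.): I would simply transport both statements through the bijection $C_i\mapsto e_i$, so that ``distinct components have distinct supports'' becomes ``distinct initial edges have distinct supports'', which is (3.). (Alternatively one can invoke \Cref{lem:M_connected} atom by atom, using $\At{\M}=\At{\Gamma}\cup\At{\Delta}$ and the fact that every atom of $\M$ supports at least one component, hence at least one initial edge, so that ``at most one'' and ``exactly one'' coincide.)

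For (3.)~$\Leftrightarrow$~(4.) I would appeal to \Cref{fact:initial}, which says that the initial edges of $\M$ are precisely its negative conclusions and its positive pending premises. A positive premise $e:\X$ contributes the positive atomic formula $\X$ of $\Gamma$, with support $X$; a negative conclusion $e:\Xn$ contributes, after negation, the positive atomic formula $\X$ of $\Delta^\bot$, again with support $X$; and every positive atomic formula occurring in $\Gamma,\Delta^\bot$ arises in exactly one of these two ways. Hence the initial edges correspond, preserving supports, to the positive atomic formulas of $\Gamma,\Delta^\bot$, and two such formulas coincide iff their supporting atoms coincide. Therefore ``distinct initial edges are supported by distinct atoms'' is literally ``the positive atomic formulas in $\Gamma,\Delta^\bot$ are pairwise distinct'', giving (3.)~$\Leftrightarrow$~(4.).

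The only genuinely delicate point I anticipate is the polarity bookkeeping in this last step: one must track that negating a conclusion flips its polarity, so it is the \emph{negative} conclusions (not the positive ones) that produce positive formulas in $\Delta^\bot$, and these must be matched against the \emph{positive} premises of $\Gamma$. Everything else is routine once the component/initial-edge bijection and its support-preservation are established; in particular I would take care to justify that global polarized-minimality coincides with minimality inside a single component, which is what makes the bijection between components and initial edges exact.
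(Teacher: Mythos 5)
Your proof is correct and follows essentially the same route as the paper: the paper's own proof is just the terse statement that (1.)$\Leftrightarrow$(2.) and (3.)$\Leftrightarrow$(4.) are immediate while (1.)$\Leftrightarrow$(3.) follows from \Cref{lem:M_connected}, and your component/initial-edge bijection (built from \Cref{lem:connected}) together with the polarity bookkeeping via \Cref{fact:initial} is exactly the content that makes those steps go through. You have merely written out in full what the paper leaves implicit, so there is nothing to correct.
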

\begin{proof}$ 1. \Leftrightarrow 2. $ and $  3. \Leftrightarrow 4.$ are immediate. $ 1. \Leftrightarrow 3. $ follows from    \Cref{lem:M_connected} .
\end{proof}
So the above are equivalent formulations of the property  
\begin{center}
	\emph{one connected component = one \atom}.
\end{center}
We call such a property \emph{\goodness}, and we call \emph{wiring} an atomic \MLL module  which satisfies such a property.

\condinc{}{
\begin{cor}\label{cor:connected}  	
		Let $\R$ be an atomic \MLLAx proof-net  
		which decomposes as $\pair \Nets \M$,
		for $\Nets$ is a proof-net of conclusions $\Gamma$, and  $\M$
		  an \MLL module. 
%
		The following are equivalent:
			\begin{enumerate}[i.]
				
%
			 \item Distinct positive conclusions of $\Nets$ (resp., distinct negative conclusions of $ \R $) are labelled by distinct atoms. Moreover, if   $\X \in \Gamma_1, \dots, \Gamma_n$ then   $\Xn \not\in  \Delta$.  
%

				\item $\M$ is \good.
			\end{enumerate}
	\end{cor}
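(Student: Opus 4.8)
The plan is to deduce the statement directly from the Characterization theorem (\Cref{thm:char}), applied to the module $\M$ read with premises $\Gamma$ and conclusions $\Delta$. The only preliminary work is to justify that \Cref{thm:char} is applicable here. First I would observe that $\M:\Gamma\vdash\Delta$ is genuinely an atomic \MLL module: it is atomic because $\R$ is atomic, so every edge surviving the decomposition is labelled by an atomic formula; and it is free of $\bn$-nodes because the set $\Nets$ collects \emph{all} the boxes of $\R$ (\Cref{not:module}), so removing those sub-nets leaves a purely multiplicative module. Its premises are $\Gamma=\Gamma_1,\dots,\Gamma_h$, the conclusions of $\Nets$, and its conclusions are $\Delta$, the conclusions of $\R$.

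Next I would invoke the equivalence $1.\Leftrightarrow 4.$ of \Cref{thm:char}: the module $\M$ is \good precisely when the positive atomic formulas occurring in $\Gamma,\Delta^\bot$ are pairwise distinct. The remaining step is to translate this single combinatorial condition into the vocabulary of $\Nets$ and $\R$ appearing in clause (i). Here the positive atoms occurring in $\Gamma$ are exactly the positive conclusions of $\Nets$, whereas a positive atom $\X$ occurs in $\Delta^\bot$ exactly when $\Xn$ occurs in $\Delta$, that is when $X$ labels a negative conclusion of $\R$.

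Consequently, the pairwise-distinctness of the positive atoms of $\Gamma,\Delta^\bot$ decomposes into the three requirements bundled in clause (i): distinct positive conclusions of $\Nets$ carry distinct atoms; distinct negative conclusions of $\R$ carry distinct atoms; and no atom is simultaneously a positive conclusion of $\Nets$ and a negative conclusion of $\R$, which is precisely the condition that $\X\in\Gamma_1,\dots,\Gamma_h$ forces $\Xn\notin\Delta$. Chaining these identifications with the theorem yields the desired biconditional. I expect the only delicate point to be this last translation, namely reading the positive occurrences inside $\Delta^\bot$ back as the negative conclusions of $\R$ and keeping the cross-condition between $\Gamma$ and $\Delta$ straight; everything else is bookkeeping that rests on \Cref{thm:char}.
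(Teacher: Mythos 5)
Your proof is correct and takes essentially the same route as the paper's: the paper's own proof cites \Cref{fact:initial} (the initial edges of $\M$ are its positive pending premises, i.e.\ the positive edges of $\Gamma_1,\dots,\Gamma_h$, and its negative conclusions, i.e.\ the negative edges of $\Delta$) together with \Cref{lem:M_connected}, which are exactly the two ingredients from which the equivalence $1.\Leftrightarrow 4.$ of \Cref{thm:char} that you invoke is assembled. Your closing translation of condition 4.\ into the three clauses of (i) is the same bookkeeping the paper performs via its identification of the initial edges, so the two arguments differ only in whether the characterization is cited pre-packaged or unfolded.
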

	\begin{proof}
		\condinc{}{Let  $\netC$ be a maximal connected component in $\M$
		(by \Cref{lem:connected} all edges are supported by the same \atom, and 
		$\pol{\netC}$ is an oriented tree). we observe that the   unique  initial edge of  $\pol{\netC}$  is either the positive conclusion of some $\N\in \Nets$,  or a negative conclusion of $\R$.}
		By construction and \Cref{fact:initial}, 
		the initial edges of $\M$ (with the polarized orientation) are its positive pending premises  (the positive edges in  $\Gamma_1, \dots, \Gamma_h$)
		and its  negative pending conclusions (the negative edges in $ \Delta $).
		The claim therefore follows from \Cref{lem:M_connected}.
	\end{proof}
}

%

\begin{definition}[\good]\label{lem:support} \label{def:wiring}
\begin{itemize}
	\item 	An atomic  \MLL module  $\M$  is said   \emph{\good} if 
	any   pair of edges which are supported by the same \atom are connected. 
	
%
\item 	We call \emph{\wiringM} (resp. \emph{\wiring}) a module (resp. a proof-net) which is atomic and \good.
	
\end{itemize}
\end{definition}	
  \Cref{thm:char} implies the following, which motivates \Cref{def:bayesian_pn}.
\begin{cor}\label{lem:B_connected}  	
	Let $\R:\Delta$ be an atomic \MLLAx proof-net such that its conclusions $\Delta$ and  the conclusions of each $\Ax$-node are \repfree sequents. The following are equivalent:
	\begin{itemize}
		\item 	 $\R$ is a \Bpn 
		\item the module $\nax{\R}$ 
		is \good.
	\end{itemize}
\end{cor}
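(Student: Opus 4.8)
The plan is to feed the \MLL module $\nax\R$ into the Characterization Theorem (\Cref{thm:char}) and then translate its clause (4.) into the interface conditions of \Cref{def:bayesian_pn}. Using the canonical decomposition $\R=\pair{\Ax(\R)}{\nax\R}$ of \Cref{not:module}, write $\M=\nax\R$. This is an atomic \MLL module (removing the $\bn$-nodes of the atomic net $\R$ leaves only $\ax,@,\weak,\cut$-nodes) whose pending conclusions are $\Delta$ and whose pending premises $\Gamma$ are the conclusions of the boxes of $\R$. By \Cref{def:wiring}, ``$\nax\R$ is \good'' is exactly clause (1.) of \Cref{thm:char} for $\M$, which by that theorem is equivalent to clause (4.): the \emph{positive} atomic formulas occurring in $\Gamma,\Delta^\bot$ are pairwise distinct.

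The core step is to read clause (4.) off the interface of $\R$. Because every $\bn$-node has atomic conclusions exactly one of which is positive, the positive atomic formulas occurring in $\Gamma$ are precisely the positive conclusions of the boxes, one occurrence per box. Dually, the positive atomic formulas occurring in $\Delta^\bot$ are exactly the $\X$ with $\Xn$ occurring in $\Delta$. Hence the pairwise-distinctness of clause (4.) decomposes into three conjuncts: (a) distinct boxes carry distinct positive conclusions; (b) distinct negative atoms of $\Delta$ have distinct supports; and (c) no positive box-conclusion $\X$ has its dual $\Xn$ in $\Delta$.

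Now I invoke the standing hypothesis that $\Delta$ is \repfree: a \repfree sequent has no two atomic occurrences of the same support, so (b) holds automatically, and clause (4.) collapses to the conjunction of (a) and (c). But (a)$\wedge$(c) is precisely clause (2.) of \Cref{def:bayesian_pn} --- (a) is ``no other box has conclusion $\X$'' and (c) is ``$\Xn$ does not appear in $\Delta$''. Since the hypothesis of the corollary already records clause (1.) of \Cref{def:bayesian_pn} (that $\Delta$ and every box's conclusions are \repfree), being a \Bpn is, under these hypotheses, equivalent to clause (2.). Chaining the equivalences gives: $\R$ is Bayesian $\iff$ (a)$\wedge$(c) $\iff$ clause (4.) $\iff$ $\nax\R$ is \good.

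The only point demanding care --- and what I would treat as the main (if mild) obstacle --- is the bookkeeping in the core step: one must check that each box contributes exactly one positive occurrence to $\Gamma$, so that distinctness of the $\Gamma$-positives is literally the statement that distinct boxes have distinct positive conclusions, and that no positive occurrences beyond these two sources contaminate the list of clause (4.). Once this correspondence is laid out, the rest is a direct transcription between the two lists of conditions, with \Cref{thm:char} doing the genuine graph-theoretic work.
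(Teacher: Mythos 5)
Your proposal is correct and follows exactly the route the paper intends: the paper states this corollary as an immediate consequence of \Cref{thm:char} applied to the module $\nax{\R}$ (whose pending premises are the box conclusions and whose conclusions are $\Delta$), with clause (4.) of that theorem matching the interface conditions of \Cref{def:bayesian_pn} under the standing repetition-freeness hypotheses. Your write-up merely makes explicit the bookkeeping (one positive occurrence per box, negatives of $\Delta$ handled by repetition-freeness) that the paper leaves implicit, which is exactly the right content.
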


We conclude  with a slogan which   underlies  the rest of this paper:\quad	\textbf{\textit{Bayesian proof-nets = boxes +  wiring}}

\condinc{}{
\todocf{Proof-theoretically, an  \MLL component corresponds to the notion of occurrence of a formula, as used in sequent calculus cut-elimination and in Geometry of Interaction literature. Intuitively, atoms can be renamed in order to satisfy the interaces conditions, recalling that 
	  propositional variables stand for unknown propositional formulas (which need not be different formulas, but will be different occurrences). The formal  notion of renaming is  developed in the sibling paper \cite{licsA}.}
}

\section{BN's and Proof-Nets: Representation}\label{sec:PR}
So far, we only worked with the qualitative structure.  What is still missing  is the quantitative information to be associated to  the boxes. Such a content is given  by the semantics of Linear Logic. We first present PCoh, together with  a "recipe" to compute the interpretation of a \bpn. 
 We then complete the picture started  in  \Cref{sec:bpn_BN}. In  \Cref{sec:cost}, we   analyze the cost of the PCoh interpretation. Such an analysis will motivate  the factorization of   proof-nets in the following sections.

\vspace*{-4pt}
\subsection{On the Quantitative Semantics: PCoh}\label{sec:PCoh}
Probabilistic coherence space semantics (PCoh) \cite{Girard2003, danosehrhard} is 
a  quantitative semantics particularly suitable for modelling probabilistic programs. 
We refer  the reader to standard literature  \cite{panorama, EhahrdPT18fa, EhrhardT19} for  details---  we only mention what we need.
\vspace*{-4pt}
\begin{definition*}
	A \textdef{probabilistic coherence space} (PCS) is a pair
	$\Pcs{X}=(\Web{\Pcs{X}}, \Clique{\Pcs{X}})$ where $\Web{\Pcs{X}}$ is a
	finite set 
	and
	$\Clique{\Pcs{X}}$ is a subset of $\RP^{\Web {\Pcs{X}}}$ satisfying  conditions which allow for the involutive negation of linear logic.

	The \textdef{objects} of the category $\PCoh$ are PCS's and the set 
	of \textdef{morphisms from $\Pcs{X}$ to $\Pcs{Y}$} is the set of 
	matrices $\MatA\in \RP^{{\Web{\Pcs X}}\times\Web{\Pcs{Y}}}$ such that 
	$ 		\forall \VecA\in \Clique{\Pcs{X}}, \App  \MatA \VecA \in \Clique{\Pcs{Y}}$. 
	\textdef{Composition} is matrix multiplication,  \textdef{identity} is the identity matrix.
\end{definition*}
The elements in $\Web{\Pcs X}$ represent possible values, while the vectors in $\Clique{\Pcs X}$ generalize the notion of discrete probabilistic distribution.
Formulas and sequents  of multiplicative Linear Logic are  interpreted as   PCS's.
%
%
A  proof  is interpreted as a morphism, or
---equivalently---as a vector: given an interpretation $\Ev$ for the atoms and for the generalized axioms (the boxes),  the  interpretation  of a \pn $\R$ is  defined (by structural induction on the sequent calculus derivation) as 
a  vector $\Sem[\Ev]\ProofN\in\Clique{\Sem[\Ev]{\Delta}}$, where $\Sem[\Ev]{\Delta}$ is the interpretation of   $\Delta$.


\paragraph{ PCoh Interpretation of  a \bpn: a recipe.} Given a \bpn, 
the natural PCoh interpretation of boxes are CPT's (see \Cref{def:CPT}). 
The PCoh interpretation of multiplicative proofs can  be formulated in terms of factors---the development of this formulation in the general  setting is the object of a sibling paper.
%
The formulation in the case of   \bpns is however especially simple, and we state \Cref{thm:PCoh}\footnote{This result is proved in a sibling paper, in a general setting of which \bpns are a special case. The formal  argument relies on an analysis of the connected components. Thanks to  the good labelling property, in the case of  \bpns, \Cref{thm:PCoh}
 can  also be proved directly, in a straightforward way,   by induction on a smartly chosen sequent derivation,  where contraction is greedy.
} directly in the setting of \bpn's. 
For the purpose of this paper, the  reader who is not familiar with PCoh, can simply take \Cref{thm:PCoh} as an operational recipe to compute $\phisem \R$.
%
%
The PCoh interpretation of a \bpn  $\R$ is 
fully determined  as soon as we provide a  \groundinterpretation $\iphi$, as follows.
\begin{definition}[\groundinterpretation]
Given a \bpn $\R$, a   \groundinterpretation $\iphi$ is a mapping that associates 
\begin{itemize}
	\item to  each \atom $X\in \At{\R}$ a  values set  $\iphi (X)= \Val X$, and 
	\item to each box $\bbox^X$   of conclusions $ \X, \Yn_1, \dots, \Yn_k$ a \CPT  $\iphi(\bbox)= \phi^X(X| Y_1, \dots, Y_k)$
	which is a \CPT for $X$ (\Cref{def:CPT}).
\end{itemize}
\end{definition}
\begin{theorem}\label{thm:PCoh}
	Let $\R:\Delta$ be a \bpn and $\iphi$ a \groundinterpretation. 
	\begin{center}
		$ 		\phisem{\R} = \FProj{\BigFProd_{X\in \At{\R}}\iphi(\bbox^X)}{\At{\Delta}} $
	\end{center}
\end{theorem}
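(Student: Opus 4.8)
The plan is to use Sequentialization (\Cref{th:sequentialisation}) to present $\R$ as the image of a sequent calculus derivation, and then to compute $\phisem\R$ by induction on that derivation, choosing the derivation so that \emph{contraction is greedy} --- performed as soon as a negative atom is duplicated, so that every intermediate sequent stays \repfree. Since $\R$ is a \bpn it is already atomic, so no $\otimes/\parr$ nodes occur and the only rules to treat are box, $\mathtt{mix}$, $\cut$, contraction ($\cn$-nodes), $\weak$, and $\ax$. The heart of the argument is a short dictionary translating the operations realizing the PCoh interpretation into the two elementary factor operations. Its first entry is purely about spaces: because every conclusion $\Delta'$ occurring in the derivation is \repfree and atomic, $\Web{\sem{\Delta'}}=\BigCProd_{X\in\At{\Delta'}}\Val X$, so a vector $\phisem{\R'}\in\Clique{\sem{\Delta'}}$ \emph{is} exactly a factor over $\At{\Delta'}$; the two sides of the statement therefore live in the same space (the product on the right being read as ranging over the boxes $\Boxs{\R}$, the neutral factor being attached to any boxless atom occurring negatively in $\Delta$).

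I would then read off each rule as a factor operation. A box $\bbox^X$ of conclusions $\X,\Yn_1,\dots,\Yn_k$ interprets as the CPT $\iphi(\bbox^X)=\phi^X(X\mid Y_1,\dots,Y_k)$, a factor over $\{X,Y_1,\dots,Y_k\}$ (\Cref{def:CPT}); this is the base case. The crucial observation is that the factor product $\FProd$ simultaneously realizes the multiplicative \emph{tensor} (outer product on disjoint variables) \emph{and} contraction (the diagonal that identifies shared variables), precisely because $\FProd$ multiplies two factors while forcing them to agree on their common variables. Hence $\mathtt{mix}$ followed by greedy contraction interprets as $\phisem{\R_1}\FProd\phisem{\R_2}$, and $\cut$ on an atom $Y$ followed by greedy contraction interprets as $\FSum{Y}{\phisem{\R_1}\FProd\phisem{\R_2}}$ --- multiply, share, and sum out the cut atom --- since the PCoh cut contracts the two vectors along the $Y$-index. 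A $\weak$-node on $\Yn$ interprets as the all-ones factor over $Y$, so that cutting it against $\Y$ (as in the $\OHide$ operation of \Cref{def:de-weakening}) merely sums $Y$ out; an $\ax(\X,\Xn)$ interprets as the identity (Kronecker) factor, which under cut identifies variables and contributes no new CPT.

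The induction then closes by the sum--product distributivity recalled just before \Cref{rem:factors_cost}: since $Y\notin\Var\phi$ entails $\FSum{Y}{\phi\FProd\psi}=\phi\FProd\FSum{Y}{\psi}$, all the summations introduced by cuts and weakenings commute past the products they do not concern. Pushing them outward, the accumulated value is the product of \emph{all} box CPTs with every \emph{internal} atom summed out, that is exactly $\FProj{\BigFProd_{X\in\At\R}\iphi(\bbox^X)}{\At\Delta}$, the atoms of $\Delta$ being the only ones that survive the projection.

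The hard part will be to make the two dictionary entries for $\mathtt{mix}$ and $\cut$ fully precise --- i.e. to prove that greedy contraction in PCoh is \emph{exactly} $\FProd$ --- and to guarantee that each internal atom is summed out \emph{once and only once}, with the product over boxes well formed (no atom counted twice, none missing). This is precisely where \goodness is indispensable: by \Cref{thm:char} (and \Cref{lem:B_connected}) the wiring $\nax\R$ satisfies ``one connected component $=$ one atom'', so by \Cref{fact:initial} and \Cref{lem:M_connected} each atom has a \emph{unique} source in $\nax\R$ --- the positive output of a single box, or a single negative conclusion of $\R$ --- and each component is a tree with a single initial edge (\Cref{lem:connected}). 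This uniqueness is what makes $\BigFProd_{X\in\At\R}$ range over exactly one CPT per box and forces each $\cut$ (resp.\ $\weak$) to sum out a \emph{distinct} internal atom without clashes; intuitively, it is the proof-theoretic counterpart of the connected-components bookkeeping used in the more general argument. A subsidiary check, also delivered by \Cref{thm:char}, is that every sub-proof-net arising in the greedy derivation is again a \bpn, so that the induction hypothesis applies at each step.
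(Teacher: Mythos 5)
Your proposal is correct and follows essentially the same route as the paper, which (deferring the fully general case to a sibling paper based on a connected-components analysis) proves \Cref{thm:PCoh} for \bpns exactly as you propose: by induction on a sequentialization with greedy contraction, using \goodness and \repfree{ness} to read each vector over an atomic sequent as a factor, with $\mathtt{mix}$+contraction as $\FProd$ and cut as product-then-sum-out. The one caveat is that your blanket claims that every intermediate sequent stays \repfree and every sub-net is again a \bpn fail at $\ax$-leaves (whose sequent $\vdash \X,\Xn$ repeats an atom); this is harmless, since your dictionary already treats $\ax(\X,\Xn)$ jointly with the cut or contraction that consumes it (equivalently, one may first eliminate axioms by normalization, under which both sides of the stated equation are invariant).
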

Notice what the interpretation is doing here: first, it computes the product of all the factors $\iphi(\bbox^X)$, then  projects on the \atoms 
which support the conclusions.  
Intuitively, the  multiplicative structure  performs the product of the factors, by connecting together all the edges labelled by the same atom (as represented in \Cref{fig:ABCDE3}).

We will use extensively  the fact that  the PCoh interpretation is invariant  under normalization.
\begin{prop*}Let   $\R:\Delta$ be a \bpn, and $\iphi$ a \groundinterpretation.\\ If $\R:\Delta\Red[]^*\R':\Delta$ then
$\phisem \R =\phisem{\R'}$.
\end{prop*}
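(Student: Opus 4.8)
The plan is to derive the statement directly from the closed form provided by \Cref{thm:PCoh}, rather than re-proving soundness of the PCoh model under cut-elimination from scratch. By that theorem, for any \bpn $\R:\Delta$ and valuation $\iphi$ we have $\phisem{\R} = \FProj{\BigFProd_{X\in \At{\R}}\iphi(\bbox^X)}{\At{\Delta}}$, where the product ranges over the boxes of $\R$ (each $\bbox^X$ being indexed by its unique positive atom $X$). Hence the interpretation is a function of only three data: the conclusions $\Delta$ (through $\At{\Delta}$), the set of boxes $\Ax(\R)$, and the CPT that $\iphi$ stores in each box. The whole proof reduces to showing that every $\Red[]$-step leaves all three of these invariant.

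First I would record the two invariance facts. For the conclusions: reduction preserves formula labelling (stated just before \Cref{def:rewriting}), so a step $\R:\Delta \Red[] \R':\Delta$ keeps the conclusion sequent $\Delta$ unchanged, and in particular $\At{\Delta}$ is the same for $\R$ and $\R'$. Since $\Delta$ and the box conclusions are untouched, the interface conditions of \Cref{def:bpn} still hold for $\R'$, so $\R'$ is again a \bpn and \Cref{thm:PCoh} applies to it. For the boxes: I would inspect the rewriting rules of \Cref{fig:rewriting}. The reductions in $\Red[]=\Red[\arule]\cup\Red[\otimes/\parr]$ act only on $\cut$-, $\ax$-, $\cn$-, $\weak$-, $\otimes$- and $\parr$-nodes; none of them creates, erases, duplicates, or relabels a $\bn$-node. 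Indeed cut-elimination is blocked on boxes (cf. \Cref{rem:normal} and \Cref{lem:normal_net}), so a box is never consumed. Consequently $\Ax(\R)=\Ax(\R')$ and $\iphi$ assigns the same CPT to each box in both nets.

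With these in hand the conclusion is immediate. Because the factor $P \defeq \BigFProd_{X\in \At{\R}}\iphi(\bbox^X)$ depends only on the set of boxes and their $\iphi$-images, the box-invariance makes $P$ literally the same factor for $\R$ and $\R'$; the conclusion-invariance makes the projection target $\At{\Delta}$ coincide. Applying \Cref{thm:PCoh} to both nets therefore gives $\phisem{\R} = \FProj{P}{\At{\Delta}} = \phisem{\R'}$ for a single step, and the case of $\Red[]^*$ follows by a trivial induction on the length of the reduction sequence, since both invariants are preserved at every step.

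The main obstacle is the bookkeeping of the box-invariance claim: one must check, rule by rule, that no reduction touches a $\bn$-node, and that the atoms eliminated during a $\cut$-elimination step are exactly \emph{internal} ones. The latter point is what keeps the projection harmless: an atom summed out by a reduction does not occur in $\Delta$, so its disappearance from $\At{\R}$ does not affect $\FProj{P}{\At{\Delta}}$. Everything else is routine, and notably no fresh appeal to the categorical soundness of PCoh under cut-elimination is required, since that soundness is already packaged inside \Cref{thm:PCoh}.
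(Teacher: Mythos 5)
Your argument is correct in substance, but it is not the route the paper takes: the paper gives no combinatorial proof at all and simply records this proposition as the standard soundness property of PCoh as a denotational model of linear logic --- the interpretation $\phisem{\R}$, being defined by structural induction on a sequentialization, is invariant under cut-elimination, a fact inherited from the PCoh literature and logically independent of \Cref{thm:PCoh}. Your proof instead piggybacks on \Cref{thm:PCoh}: since the closed formula depends only on $\At{\Delta}$, on $\Ax(\R)$, and on the CPTs that $\iphi$ assigns to the boxes, and since the rules of \Cref{fig:rewriting} never create, erase or relabel a $\bn$-node nor touch the conclusions, both nets receive literally the same formula. This is legitimate and non-circular (the theorem is proved by induction on a sequent derivation, not via normalization, so no vicious loop arises), and it buys a self-contained, purely syntactic argument for the class of \bpns; what it loses is generality: the semantic soundness argument applies to arbitrary \MLLAx proof-nets and does not lean on \Cref{thm:PCoh}, which is presumably why the paper keeps the two statements separate.

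One inaccuracy in your closing paragraph should be fixed rather than relied upon: no atom is ever ``summed out by a reduction''. In a \bpn, every $X\in\At{\R}$ labels either the positive conclusion of a box or a negative conclusion in $\Delta$ (condition 2 of \Cref{def:bpn}), and both kinds of edges survive every $\arule$- and $\otimes/\parr$-step, so $\At{\R}=\At{\R'}$ outright; this equality is exactly what you need for the product $\BigFProd_{X\in\At{\R}}\iphi(\bbox^X)$ to coincide on both sides when the product is read as indexed by $\At{\R}$. Your fallback claim --- that the disappearance of an internal atom would be harmless for the projection --- is not immediate as stated: if such an atom $X$ vanished together with its box, the factor $\iphi(\bbox^X)$ would be missing from one product but not the other, and reconciling the two would require the CPT normalization $\sum_{x}\phi^X_{x\by}=1$ together with the fact that $X$ occurs in no other factor, neither of which you establish. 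Since the scenario cannot occur, the remark is a red herring; with it removed, the proof is sound.
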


We  have everything in place to   explicit   the correspondence between   \bpns (equipped with a \groundinterpretation) and  BN's.

\subsection{Bayesian Networks and Proof-nets}\label{sec:representation}

\condinc{}{
In  this section, let 
  $\R$ be an    arbitrary but fixed  $\bpn$ of \emph{positive} conclusions $\Gamma$, for which we   fix a \groundinterpretation $\iphi$.}

Given  $\R$  a \bpn and  $\iphi$ a \groundinterpretation, 
we can see each atom $X\in \At{\R}$ as a variable (a \rvar in the sense discussed in \Cref{sec:basics}) assuming values in $\Val X$. 
 From now on we silently identify the set of atoms $\At{\R}=\{X_1, \dots, X_n\}$ with  the set of variables $\{X_1, \dots, X_n\}$ having respective  value sets $\iphi(X_1)=\Val {X_1}, \dots, \iphi(X_n)=\Val {X_n}$.  \\

Recall that  $\bnet \R$ denotes the DAG over  $\Ax{(\R)}$ induced by the polarized order (\Cref{def:core}).
\begin{property}Let $\R$ be a positive  \bpn, $\iphi$ a \groundinterpretation, and   $\bX=\{X_1, \dots, X_n\}$ be (the set of \rvars corresponding to) $\At{\R}$.
	\begin{itemize}
		\item  The pair $(\bnet{\R}, \iphi)$ is isomorphic to a  \BN over  $\bX$, which we denote $\Bnet{\R, \iphi}$.
		\item Conversely, a \BN $\netB=(\netG, \aphi)$ induces a \bpn in normal form and a \groundinterpretation.
	\end{itemize}
\end{property}
As  for  $\bnet \R$, the following holds. 
\begin{prop}[Invariance] For  $\R$  a \bpn, and $\iphi$ a \groundinterpretation,  $\bnet{\R,\iphi}$ is invariant under all the following transformations of $\R$: $\arule$-reduction and $\arule$-expansion,  $ \OHide $ and $ \OShow $. 
\end{prop}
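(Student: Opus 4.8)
The plan is to split $\bnet{\R,\iphi}$ into its two constituents and show each is fixed by all four transformations. Recall that this object packages the DAG $\bnet{\R}$ over the boxes $\Ax(\R)$ together with the data supplied by $\iphi$: a value set $\iphi(X)=\Val X$ for every atom $X\in\At\R$, and a \CPT $\iphi(\bbox^X)$ for every box $\bbox^X$. So it is enough to verify, for each of $\arule$-reduction, $\arule$-expansion, $\OHide$ and $\OShow$, that (a) the DAG is unchanged and (b) this labelling is unchanged.

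First I would dispose of (a) by invoking what is already proved: $\bnet{\R}$ is invariant under $\arule$-reduction and $\arule$-expansion, and also under $\OHide$ and $\OShow$. These lemmas provide, in each case, a canonical identification of the box set $\Ax(\R)$ together with the polarized order between boxes---which is exactly the Bayesian structure underlying $\bnet{\R,\iphi}$---before and after the transformation.

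Then I would establish (b). The crucial remark is that all four operations act only on the purely multiplicative wiring $\nax\R$ and never on a $\bn$-node: the $\arule$-rules merely rearrange $\ax$-, $@$-, $\weak$- and $\cut$-nodes; $\OHide$ cuts a positive conclusion against a freshly introduced $\weak$-node; and $\OShow$ performs elementary $\MLL$ surgery around a $\cut(\Y,\Yn)$, at worst inserting an axiom $\ax(\Y,\Yn)$ on an atom $Y$ that is already present. Consequently the set of boxes and the labels $\X,\Yn_1,\dots,\Yn_k$ of the conclusions of each box survive verbatim, so each box retains its associated \CPT $\iphi(\bbox^X)=\phi^X(X\mid Y_1,\dots,Y_k)$. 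It then remains to see that the atom set $\At\R$ is preserved, whence the value sets $\iphi(X)$ are unchanged too; combining this with (a) yields the claim.

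The part I expect to require care is precisely this stability of $\At\R$ under the transformations, since $\arule$-reduction genuinely eliminates cuts (and hence edges) and could a priori delete an atom. Here the interface conditions of \Cref{def:bpn} settle the matter: any atom $Y$ internal to a \bpn carries a node $\cut(\Y,\Yn)$ and, not appearing in the conclusions, must by condition (2) be the positive output of a box $\bbox^Y$; since boxes are never touched, $\bbox^Y$---and therefore $Y$---persists. Atoms that occur (necessarily negatively) in the conclusions are kept because $\OHide$ only removes \emph{positive} conclusions and reduction preserves the conclusion sequent $\Delta$. This closes the only genuine gap, the remaining verifications being routine inspection of the four rewrite patterns.
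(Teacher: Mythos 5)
Your proposal is correct and follows essentially the route the paper intends: the paper states this proposition without proof, treating it as an immediate consequence of the two preceding lemmas (invariance of $\bnet{\R}$ under $\arule$-reduction/expansion and under $\OHide$/$\OShow$) together with the observation that none of these transformations touches the $\bn$-nodes or the valuation data. Your additional care about the stability of $\At{\R}$ (via the interface condition that every internal atom is the positive output of a box) is a legitimate filling-in of the routine part the paper leaves implicit, not a deviation from its argument.
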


\subsubsection{Probabilistic view}\label{sec:PView}
The  BN 
associated to a positive \bpn $\R$ only depends on  $\bnet \R$, and 
the \groundinterpretation.  It  is independent from the  actual conclusions of $\R$, which  instead  specify a marginal.
If $\R$ has no internal \atoms, then its semantics $\phisem \R$ is exactly  the joint probability represented by  $\Bnet{\R, \iphi}$ (see \Cref{def:BN}).
\begin{prop}\label{prop:bpn_vs_bn}Let $\R:\Gamma$ be a  \emph{positive} \bpn,  $ \iphi $     a \groundinterpretation with $\iphi(\bbox^X)=\phi^X$,  $\bX=\At{\R}$.     
 Let   $\Pr_\netB(\bX)$ be the probability distribution over $\bX$ specified by the BN $\netB=\Bnet{\R, \iphi}$.
\begin{enumerate}
		\item \emph{Joint probability.} Assume    $\R$ has no internal atoms. \SLV{}{\ie $\At{\Gamma}=\At{\R}$.} Then 
	\begin{center}
		$ 	 \phisem \R =   \BigFProd_{X\in \bX} \phi^X \defeq  \Pr_{\netB}(\bX)   $\\
	\end{center}
	\item \emph{Marginals.}  Let $ \At{\Gamma} \subseteq \At{\R}$. Then
\begin{center}
$ 		\phisem{\R} =  \project {\BigFProd_{X\in \bX} \phi^X } {\bY}=  \Pr_{\netB}(\bY)   $
\end{center}
\ie,  $\phisem \R$ is   the marginal probability over $\bY=\At{\Gamma}$.
\end{enumerate}
\end{prop}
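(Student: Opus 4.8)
The plan is to read off both parts directly from the ``recipe'' of \Cref{thm:PCoh}, reinterpreting the projection of a product of conditional probability tables as the marginalization of a joint distribution. Beyond \Cref{thm:PCoh}, the only ingredients I need are two bookkeeping facts. First, that for a \emph{positive} \bpn every atom $X\in\At{\R}$ is the positive conclusion of exactly one box $\bbox^X$ (this is Condition~(2) of \Cref{def:bpn} specialized to the case $\Delta=\Gamma$ positive, where no atom can occur negatively in the conclusions), so the index set $\At{\R}$ of the product in \Cref{thm:PCoh} coincides with the variable set $\bX=\At{\R}$ of the BN. Second, that under the hypothesis $\iphi(\bbox^X)=\phi^X$ the box interpretations are exactly the factors of $\netB=\Bnet{\R,\iphi}$, so that the product computed by the wiring is the very product defining $\Pr_{\netB}$.

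First I would spell out \Cref{thm:PCoh} for $\R:\Gamma$, obtaining
\[
\phisem{\R}=\FProj{\BigFProd_{X\in\At{\R}}\iphi(\bbox^X)}{\At{\Gamma}}
=\FProj{\BigFProd_{X\in\bX}\phi^X}{\At{\Gamma}}.
\]
By \Cref{eq:BN} the inner product $\BigFProd_{X\in\bX}\phi^X$ is precisely the joint distribution $\Pr_{\netB}(\bX)$, so $\phisem{\R}$ is the projection of this joint distribution onto $\At{\Gamma}$. For Part~1, the hypothesis that $\R$ has \emph{no internal atoms} means exactly $\At{\Gamma}=\At{\R}=\bX$ (by the definition of internal atom, together with the always-valid inclusion $\At{\Gamma}\subseteq\At{\R}$); hence the projection is onto the full set of variables, which is the identity on factors, giving $\phisem{\R}=\BigFProd_{X\in\bX}\phi^X=\Pr_{\netB}(\bX)$. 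For Part~2, with $\bY=\At{\Gamma}\subseteq\bX$, summing out the variables of $\bX-\bY$ from the joint distribution is by definition the marginal, i.e. $\FProj{\Pr_{\netB}(\bX)}{\bY}=\Pr_{\netB}(\bY)$, so $\phisem{\R}=\project{\BigFProd_{X\in\bX}\phi^X}{\bY}=\Pr_{\netB}(\bY)$. Part~1 is thus the degenerate case $\bY=\bX$ of Part~2.

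I do not expect a genuine obstacle, since \Cref{thm:PCoh} already carries all the analytic content: it is what guarantees that the multiplicative wiring performs the product of the box factors followed by a projection onto the atoms of the conclusions. What remains is purely a matter of matching definitions. The only point I would state carefully is the equivalence ``$\R$ has no internal atoms'' $\iff$ $\At{\Gamma}=\At{\R}$, and the elementary observation that projecting a factor onto its full variable set leaves it unchanged, so that the joint-probability case is literally recovered from the marginal case.
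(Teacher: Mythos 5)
Your proposal is correct and follows the same route the paper intends: the paper states \Cref{prop:bpn_vs_bn} without a separate proof, treating it as the direct combination of the interpretation recipe of \Cref{thm:PCoh} with the factor form of \Cref{eq:BN}, which is exactly what you spell out. Your additional bookkeeping---positivity of $\Gamma$ forcing the bijection between atoms of $\R$ and boxes, the equivalence between ``no internal atoms'' and $\At{\Gamma}=\At{\R}$, and the observation that projection onto the full variable set is the identity so that Part~1 is the degenerate case of Part~2---is precisely the definitional matching the paper leaves implicit.
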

\SLV{}{Let $\R$ be a  \Bpn, and $\iphi$     a \groundinterpretation. Let $\Box(\R) = \{\bbox^{\X_1}, \dots,\bbox{\X_n}\}$, and $\bX =\{X_1, \dots, X_n\}$.}
\noindent
 More generally, we define (omitting $\iphi$ when irrelevant)
\begin{center}
		$ 	\Pr_{\R,\iphi} \defeq \BigFProd_{X\in \bX} \iphi(\bbox^X) = \Pr_{\Bnet{\R, \iphi}}$. 
\end{center}
By \Cref{prop:bpn_vs_bn} (with the same assumptions of conditional independence as for Bayesian Networks) $ \Pr_{\R,\iphi} $ specifies a joint probability distribution over $\bX$.  

\begin{remark}[Weakening.]
	The role of weakening for marginalization has since long been pointed out in the literature \cite{JacobsZ, Stein2021CompositionalSF}. De-weakening has a dual role.
\end{remark}	

\condinc{}{\begin{remark}[Role of weakening and de-weakening.]
The role of weakening for marginalization has already been pointed out in the literature \cite{JacobsZ, Stein2021CompositionalSF}. De-weakening has a dual role. 
 For any positive \bpn $\R$, its  full de-weakening  yields  $\Pr_{\R,\iphi}  $.
 We will use $\OShow$ also more subtly,  as a tool to simplify computations.  
 
 \SLV{}{$\OHide$ and $\OShow$ allow us to hide some  \atoms ---or  conversely to make internal atoms visible---by modifying the conclusions of  $\R$, without altering   $\bnet{\R}$. }
\end{remark}}
In the rest of the paper, we  often  write $\phi^X$ for $\iphi{(\bbox^X)}$.
\subsection{On the  cost of  interpretation by \Cref{thm:PCoh}}\label{sec:cost}
Let us now consider the cost of the interpretation  recipe given by \Cref{thm:PCoh}. Assume  the cardinality $\size{\At{\R}}$ of $\At{\R}$ is $n$.  Recalling \Cref{rem:factors_cost}, to compute  $\phisem{\R}$ 
according to \Cref{thm:PCoh} demands to compute and store $\BigO{\Exp{n}}$ values. 
However, \emph{we can do better} if we are able to compute smaller products (similarly to what  inference algorithms do on Bayesian Networks). We do so in the next sections.

\section{Factorized form of a proof-net}\label{sec:factorized_form}
\begin{figure}
	\centering
	\includegraphics[page=3,width=0.7\linewidth]{FIGS/Fig_Example}\vspace*{-5mm}
	\caption{$\R_D$}
	\label{fig:ABCDE3}
	\quad
	\includegraphics[page=4,width=0.7\linewidth]{FIGS/Fig_Example}
	\caption{$\R_D'$. \quad Notice $\sem {\R_D}= \sem{\R_D'}$}
	\label{fig:ABCDE4}
\end{figure}

In this section, we factorize a proof-net in the composition  of smaller nets, whose interpretation has a smaller cost. 
In proof-theory, the natural way to factorize  a proof in smaller components is to factorize  it in sub-proofs which are then composed together (via cuts). 
\begin{example}[Roadmap]\label{ex:roadmap}	Take the \bpn in \Cref{fig:ABCDE3}. Assuming all variables are binary, the cost of interpreting $\R_D$ via \Cref{thm:PCoh} is $2^5$.
	The idea  which we  develop, is to rewrite  $\R_D$ 
into a  \bpn  like $\R_D'$ in	\Cref{fig:ABCDE4}. Notice that  $\R_D'\to^* \R_D$, and so $\phisem {\R'_D} =\phisem {\R_D}$. 
\end{example}

In \Cref{sec:MP} we will  show that there is a tight correspondence between the decomposition of  a proof-net into a cut-net (introduced below), 
and the  well-known decomposition of Bayesian Networks into clique trees.
\condinc{}{ The correspondence turns out to be so tight, that the cost of computing the interpretation of a cut-net is the same as the cost of inference on  the corresponding clique tree.}

%
%
%

\subsubsection{Cut-nets}

In the literature of linear logic, the decomposition of a proof-net in sub-nets which are connected by cuts is called a cut-net \cite{locus}.
The notion of correction graph is  similar to that of switching path. 
	\begin{definition}[Cut-nets]\label{def:cut-structure}
	We write  $\R= \CutS{\R_1, \dots, \R_n}$ to denote  proof-net $\R$ for which is given a \emph{partition}    into sub-nets $\R_1, \dots, \R_n$ and $\cut$-nodes, where  the two premises  of each such $\cut$ are     conclusions of two distinct sub-nets. 
		 The \textbf{correction graph} of $\CutS{\R_1, \dots, \R_n}$ is  the graph which has nodes 
		$\{\R_1,...,\R_n\}$ and  an edge $(\R_i,\R_j)$ exactly when there is a cut between $\R_i$ and $\R_j$.	
		  $\CutS{\R_1, \dots, \R_n}$ is   a \textbf{\cutnet}  if the correction graph is a \emph{tree}. 
		{We denote this graph $\tree_{\R_1,...,\R_n}$, or simply $ \tree_\R $ if the partition is clear from the context.}
		
		\condinc{}
		{	\item A {cut-net} is \emph{proper} where   between any pair $\R_i, \R_j$ there is at most one cut. }

%
%
\end{definition}

\begin{example}[cut-net]\label{ex:cutnet}The \bpn in \Cref{fig:ABCDE4} can be partitioned as a \cutnet with 3 components (and 4 cuts). 
	If we $\ax$-expand the  conclusions of the boxes, we can  partition the resulting \bpn  as a \cutnet with  8 components (and 14 cuts).
\end{example}
\condinc{}{
	\begin{itemize}
		\item if $\R$ is AC,  \emph{each sub-structure   $\R_i$ is AC}, because any cycle in the   switching graph of $\R_i$ implies a cycle in the switching graph of $\R$.
		\item if $\R$ is CC, \emph{each $\R_i$ is CC}, because if $\R_i$ has a switching graph which is not connected, so does $\R$.
		\item A \cutnet with  $n$ components and $n-1$ traversing cuts is a \CC cut-net.
\end{itemize}}

\subsubsection{Factorized form of a proof-net}
	Recall that 	 a 	\emph{\wiring} $\D$	is an atomic \MLL proof-net  where 	any   pair of edges  supported by the same \atom are connected.  When in normal form, 
	each such  component  has an extremely simple form.

\SLV{}{
	\begin{lemma}[normal connected components]	An atomic \MLL proof-net  $\D[X]$ which is connected 
		is  either a single  axiom, or a single    $w$-node with its conclusion, or a single  $@$-node together with its  conclusion and its premises, where  each premise completed by an axiom.

	\end{lemma}
	\begin{proof} By inspecting the rules.
		Notice that in a \emph{normal} atomic \MLL proof-net, the positive conclusion of an $\ax$ node is necessarily conclusion of the net, because it cannot be premise of a cut (which would create a redex).
	\end{proof}
}

\SLV{}{
\begin{remark}
   $\D$  is a \wiring exactly when it is a justaposition of    connected   proof-nets, each supported by a distinct \atom. We then can write $\D=\biguplus_{X\in \At{\D}} \D[X]$. 
   
   Observe that  each connected net $\D[X]$ has exactly one negative conclusion (this is obvious when $\D[X]$ is normal).
\end{remark}
}

\begin{definition}\label{def:wiring_cut}		We write $ \D(\N_1, \dots, \N_h):\Delta$  to denote a  cut-net     $\R=\CutS{\D, \N_1, \dots, \N_h}:\Delta$    where  
	\begin{itemize}
		\item $\D:\Delta'$ is a \wiring;
		\item each conclusion of each proof-net  $\N_i:\Gamma_i$ ($1\leq i\leq h$) is connected by a cut to  a  conclusion of   $\D$.
	\end{itemize}
	We call $\Gamma_1,...,\Gamma_h$ the \emph{inputs} of $\D$ and $\Delta$ the \emph{output} of $\D$.  Observe that  $\Delta' =  \Gamma_1^\bot,\dots,\Gamma_h^\bot,\Delta$.
\end{definition}

\SLV{}{
	\begin{remark}Notice that , if $\R =\D(\N_1, \dots, \N_h):\Delta$  is a \bpn, then $\Delta$ is \repfree. 
	\end{remark}
}

%


\condinc{}{
\begin{definition}	
		 Assume $\D:\Delta'$ is a \wiring, and $\N_1:\Delta_1, \dots, \N_h:\Delta_h$ are \bpns. 
		We write $ \D(\N_1, \dots, \N_h):\Delta$  to denote a  cut-net    $\CutS{\D, \N_1, \dots, \N_h}:\Delta$   where   each conclusion of each $\N_i$ is connected by a cut to  a  conclusion of   $\D$, and whose conclusions $\Delta$ are a \repfree sequent.
		
		We call $\Delta_1,...,\Delta_n$ the \emph{inputs} of $\D$ and $\Delta$ the \emph{output} of $\D$ (observe that  $\Delta' =  \Delta_1^\bot,\dots,\Delta_n^\bot,\Delta$).
\end{definition}
}

%

\begin{definition}[Factorized form of a proof-net]\label{def:factorized}\hfill
	\begin{itemize}
		\item A  \bpn which consists of a single box 
		is  \emph{\factorized}.
		\item A \bpn  $ \R= \D(\N_1,\dots,\N_h):\Delta $	is  \emph{\factorized} if 
		$\N_1:\Gamma_1, \dots, \N_h:\Gamma_h$ are  \factorized \bpn's. 
	\end{itemize}
We call $\D$  \emph{the root} of the factorized net.  
Notice that  the output $ \Delta $ of $\D$ is \repfree (by the  assumption that $\R$ is a \bpn).  Similarly  for 
each of its inputs  $\Gamma_1, \dots, \Gamma_h$.
\end{definition}
A factorized \bpn $\R$ is a \cutnet whose components are the boxes, and a set of \wirings, that we denote 
 $ \Wirings{\R} $.
\begin{center}
	$
\Wirings {\R}=
\left\{
\begin{array}{l}
	\emptyset   \quad\quad \mbox{if $\R$ is a box}\\
	\{\D\}\cup \bigcup_{1\leq i\leq h}\Wirings {\N_i}\\
	\quad \quad \mbox{if $\R=\D(\N_1, \dots, \N_h)$.}
\end{array}
\right.
$
\end{center}


	Every  \bpn has a trivial  factorized form. 
	
\begin{lemma}[Trivial factorized form]\label{lem:trivial}
		Every \bpn   $\R$ can be transformed by $\ax$-expansion into $\R'=\D(\Ax(\R))$, where $\D$ is the completion of ${{\nax \R}}$. 
\end{lemma}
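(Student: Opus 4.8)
The plan is to make the required $\ax$-expansion explicit and then recognise the result as a star-shaped cut-net whose centre is the completion of $\nax{\R}$. First I would decompose $\R$ as $\pair{\Ax(\R)}{\nax\R}$ following \Cref{not:module}(2): here $\nax\R$ is an atomic \MLL module whose pending premises $\Gamma$ are exactly the conclusions of all boxes of $\R$ (one positive plus possibly several negative per box), and whose conclusions are $\Delta$. I would then perform one $\ax$-expansion on each edge $e:F$ of $\Gamma$, i.e.\ on each edge joining a box to $\nax\R$. Each such step replaces $e$ by the configuration ``box-conclusion $F$ --- $\cut(F,F^\bot)$ --- axiom --- $F$ --- (target of $e$)'', inserting a fresh $\ax(F,F^\bot)$ whose one conclusion prolongs $e$ into $\nax\R$ and whose dual conclusion is cut against the box. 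Writing $\R'$ for the result of doing this for every edge of $\Gamma$, and recalling that $\ax$-expansion is the inverse of the $\ax$-reduction step, which relates proof-nets to proof-nets (\Cref{def:rewriting}), $\R'$ is again a proof-net with $\R'\Reds[\axrule]\R$.

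Next I would read off the cut-net structure of $\R'$. The inserted axioms are precisely the axioms supplied by the completion of \Cref{def:completion}: completion turns each pending premise $e:F$ of $\nax\R$ into a conclusion $F^\bot$ by plugging a fresh $\ax(F,F^\bot)$ as its source, and this is exactly the axiom produced by the $\ax$-expansion of $e$, its $F^\bot$-conclusion being the premise then cut against the box. Hence $\nax\R$ together with all inserted axioms is precisely $\D:=\overline{\nax\R}$, and $\R'=\CutS{\D,\bbox^{X_1},\dots,\bbox^{X_n}}$, every conclusion of every box being cut against the matching conclusion $F^\bot$ of $\D$. Since each such cut joins a box to $\D$ and no cut links two boxes, the correction graph is the star with centre $\D$ and the boxes as leaves, which is a tree; so $\R'$ is a cut-net of the shape $\D(\Ax(\R))$ prescribed by \Cref{def:wiring_cut}.

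It then remains to verify that $\D$ is genuinely a \wiring, i.e.\ atomic and \good. Atomicity is immediate, as $\R$ is atomic and completion adds only atomic axioms. For \goodness, \Cref{lem:B_connected} gives that $\nax\R$ is \good because $\R$ is a \bpn; and completion preserves \goodness, since attaching $\ax(F,F^\bot)$ to a pending premise $e:F$ merely enlarges the connected component of $e$ by a single pendant edge $F^\bot$ supported by the same \atom as $F$. This neither splits off a new \atom nor fuses two previously distinct components, so the property ``one component $=$ one \atom'' is retained. (Equivalently one may check \Cref{thm:char}(4) for $\D$ directly: the positive \atoms occurring in $\Gamma$ and in $\Delta^\bot$ are the box-outputs $X_1,\dots,X_n$ and the supports of the negative \atoms of $\Delta$, and these are pairwise distinct by interface conditions (1) and (2) of \Cref{def:bpn}.) Together with the previous paragraph this yields $\R'=\D(\Ax(\R))$ with $\D$ a wiring, as claimed.

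I expect the only real content to be this last point, the \goodness of $\D$; the rest is bookkeeping about the top-bottom versus polarized orientation of the box-conclusion edges and the routine check that the axioms inserted by $\ax$-expansion coincide edge-for-edge with the completion axioms. A minor point to state carefully is that inserting axiom-cut pairs introduces no pending premise and creates no switching cycle, so each $\ax$-expansion does return a proof-net; this is exactly what is guaranteed by $\ax$-reduction preserving switching acyclicity in \Cref{def:rewriting}.
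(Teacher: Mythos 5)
Your proof is correct and is essentially the paper's own argument, spelled out in full: decompose $\R$ as $\pair{\Ax(\R)}{\nax{\R}}$, $\ax$-expand the box conclusions so that the inserted axioms realize exactly the completion $\D=\overline{\nax{\R}}$, and conclude that $\D$ is \good from \Cref{lem:B_connected}. The paper's proof is just a terser version of this; your additional checks (the star-shaped correction graph being a tree, and completion preserving \goodness by adding only pendant edges on the same \atom) are precisely the routine details it leaves implicit.
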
	
\SLV{}{
\begin{proof}
	Every   proof-net $\R$  can be decomposed as $\pair {\Ax(\R)} {\nax \R}$, and transofrmed by  $\ax$-expansion, 
	  into $\R'=\D(\Ax(\R))$, where $\D$ is the completion of   $\nax \R$.

$\D$ is  \good  because 	 $ \nax \R  $ is (by \Cref{lem:B_connected} ). 
\end{proof}

\begin{remark}\label{rem:trivial}
Notices that, if $\R$ has empty conclusion and is normal, to obtain a cut-net  $\R'=\D(\Ax(\R))$ it suffices to $\ax$-expand the negative  conclusions of the boxes, because every positive conclusion is already the premise of a $\cut$.
\end{remark}
}

\begin{example}[Factorized form]\label{ex:fact}
	 Consider the  \bpn in 
	\Cref{fig:ABCDE4}. If we complete by $\ax$-expansion the three shaded modules (let $\D_i =\overline \M_i$), we immediately have a  factorized form. By choosing $\D_3$ as root, we can write this net as
\begin{center}
		$\D_3\big(\bbox^E, \D_2(\bbox^D,\D_1(\bbox^A, \bbox^B,\bbox^C))\big)$.
\end{center}
	
	The \bpn $\R_D$ in \Cref{fig:ABCDE3}  (once we  $\ax$-expanded the conclusions of the boxes) is  a (trivial) factorized form, which has a single large wiring, corresponding to   $\overline{\nax \R_D}$ (see \Cref{def:completion}).
\end{example}

\subsubsection{Changing the root of a \factorized net}
We state here a  fact that will be crucial when we  will want  to infer each single marginal in $\Pr_\R$.
A \factorized \bpn is a \emph{rooted}  cut-net. 
Recalling that the correction graph $\tree_\R$ is a tree, one  can easily realize that 
any $\D\in \Wirings \R$  in can be chosen as  root of the same cut-net, leading to a different tree-like  description of $\R$.
\begin{lemma}[Root change]\label{lem:root_change}
	Let $\R$ be a \factorized \bpn of empty conclusion.  Any choice of $\D_i\in \Wirings \R$ as root  induces a \factorized \bpn  of root $\D_i$.
\end{lemma}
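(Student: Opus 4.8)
The plan is to recognise a \factorized \bpn of empty conclusion as nothing more than a \emph{rooted presentation} of its correction tree, and to prove the lemma by re-rooting that tree at $\D_i$. First I would make the dictionary explicit. By \Cref{def:cut-structure} the partition of $\R$ into the boxes $\Ax(\R)$ and the wirings $\Wirings{\R}$ has correction graph $\tree_\R$ which is a \emph{tree}, with the boxes occurring as leaves (each box is cut only to the one wiring that consumes all its conclusions) and the wirings $\Wirings{\R}$ as its other nodes. The recursive form $\R=\D(\N_1,\dots,\N_h)$ is exactly what one reads off by rooting $\tree_\R$ at $\D$: the $\N_j$ are the sub-nets spanned by the sub-trees hanging off $\D$, and the conclusions of $\N_j$ are the cut-edges joining that sub-tree to $\D$. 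Since $\tree_\R$ is intrinsic to the partition and independent of the root, I would simply re-root it at $\D_i$ and define $\D_i(\N'_1,\dots,\N'_m)$ by the same recipe, then verify \Cref{def:factorized}.

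Several of the required conditions are immediate and manifestly root-independent. The correction graph of the re-rooted presentation is the \emph{same} tree $\tree_\R$, so $\R$ is still a \cutnet and so is every $\N'_k$. Because $\tree_\R$ is a tree, each child sub-tree attaches to its parent through a single correction-tree edge; hence \emph{all} pending edges of $\N'_k$ are cuts with its parent wiring, which is exactly the structural requirement of \Cref{def:wiring_cut} that the root be cut to every conclusion of each child. Finally $\D_i$ is a wiring by hypothesis, and since $\R$ has empty conclusion every conclusion of $\D_i$ is a cut; thus the output of $\D_i$ in the new presentation is empty and all of its conclusions are redistributed among the inputs $\Gamma'_1,\dots,\Gamma'_m$. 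What remains is to check the interface conditions of \Cref{def:bpn} for each $\N'_k$, and then conclude by induction on $\tree_\R$.

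This interface check is the only real content, and it is also where I expect the main obstacle. For condition (1.), the conclusions of $\N'_k$ form precisely one \emph{per-neighbour bundle} of cut-edges of its root wiring, namely the edges cut towards $\D_i$; in the original presentation that very bundle was either an input or the output of that wiring, hence \repfree by \Cref{def:factorized} (for a leaf box the bundle is the box's own conclusions, \repfree by \Cref{def:bpn}(1.) applied to $\R$). Thus each bundle is \repfree regardless of which end we declare to be the output. For condition (2.), I would use that $\R$ is a \Bpn, so by \Cref{lem:B_connected} the module $\nax{\R}$ is \good: each atom $X$ occupies a single connected component of $\nax\R$, which by \Cref{lem:connected} is a polarized tree whose unique initial edge is the positive conclusion $\X$ of the box $\bbox^X$. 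Hence $X$ has a single positive source, and if $\bbox^X\in\N'_k$ then $X$ can only leave $\N'_k$ positively, so $\Xn$ never appears among the conclusions of $\N'_k$ -- exactly condition (2.). The delicate points -- ruling out that one atom crosses the $\N'_k$/parent interface twice (which would break \repetition-freeness) and that a box's atom escapes negatively (which would break condition (2.)) -- are both controlled by this good-labelling together with tree-ness: the image in $\tree_\R$ of the single connected $X$-component is a sub-tree, so it traverses any fixed correction-tree edge at most once, and its unique positive initial edge fixes the direction in which $X$ may cross. With these two facts the induction on $\tree_\R$ goes through, giving that each $\N'_k$ is a \factorized \bpn and hence that $\D_i(\N'_1,\dots,\N'_m)$ is the desired factorized form of root $\D_i$.
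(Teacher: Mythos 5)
Your proof is correct and takes essentially the same approach the paper intends: the paper gives no explicit proof of \Cref{lem:root_change} beyond the preceding remark that, since the correction graph $\tree_\R$ is a tree, any $\D_i\in\Wirings{\R}$ can be chosen as root of the same cut-net. Your re-rooting argument---including the identification of sub-nets with the subtrees hanging off $\D_i$, the use of the empty-conclusion hypothesis to ensure every pending edge of each new sub-net is cut to $\D_i$, and the verification of the interface conditions of \Cref{def:bpn} via the \goodness of $\nax{\R}$ and repetition-freeness of the per-edge cut bundles---supplies exactly the details the paper leaves implicit.
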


\begin{example}[Root]\label{ex:root} Consider again the \bpn in \Cref{fig:ABCDE4}, with the same $\ax$-expansion as in \Cref{ex:fact}. 
	Let $\N$be the  net obtained by hiding $D$. We can write this net in three different way, and for example choose as root $\D_2$.
	Notice that $\deW{\N}{C}$  has conclusions  $C^+$ (which is conclusion of $\D_2$).
\end{example}

\subsubsection{Width}
The crucial parameter to determine the cost of interpretation, and so the quality of a factorization, is the
number of atoms in each components, which we call width (in analogy with  similar notions in BN's.).
\begin{definition}[Width of a \factorized net]\label{def:width_R} 
	We define the size 
	 of a \wiring $\D$ as the cardinality of $\At{\D}$.
	We define  $\Width{\R}$ of a \bpn $\R$ in \factorized form as the largest size of any of its \wirings, minus one.
\end{definition}
\begin{example}The factorization  in \Cref{fig:ABCDE3} has width $4$,  the factorization in \Cref{fig:ABCDE4} has width $2$. 
	Interpretation is the same.
\end{example}
\begin{lemma}
	 	$\Width{\R} =  \Width{\deW \R Y}$
\end{lemma}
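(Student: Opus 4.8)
The plan is to show that $\OShow$ perturbs exactly one wiring of the factorized form, and does so without altering the set of atoms that wiring carries; since the width is read off from the multiset of wiring-sizes (\Cref{def:width_R}), its maximum is then left untouched. So I take $\R$ in \factorized form, and I fix the cut $\cut(\Y,\Yn)$ at which the unmasking of the internal atom $Y$ is performed (\Cref{def:de-weakening}, \Cref{fig:deweak}).

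First I would unfold the operation. As prescribed in \Cref{def:de-weakening}, one may first apply a $\neut$-expansion so that the negative premise $e:\Yn$ of the chosen cut is the conclusion of a $\cn$-node, and then insert an axiom $\ax(\Y,\Yn)$ whose negative edge is wired as a fresh premise of that $\cn$-node and whose positive edge becomes a new conclusion $\Y$ of the net. The $\cn$-node is an $\MLL$-node, hence it sits inside one of the wirings, say $\D$, and all of the material introduced --- the $\cn$-node and $\w$-node of the $\neut$-expansion together with the new axiom --- is supported by $Y$ and attaches to the existing $Y$-component of $\D$. Consequently the modification is confined to $\D$: no cut between two distinct components of the cut-net is created or destroyed (the new axiom edge is a wire into the $\cn$-node, not a cut), so the partition of $\deW \R Y$ into boxes and wirings has the same components and the same correction tree as that of $\R$, with $\D$ the only altered wiring. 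In particular $\deW \R Y$ is again \factorized.

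Next I would check that $\At{\D}$ is unchanged. The chosen $\cn$-node already carries $Y$-supported edges, so $Y\in\At{\D}$ before the operation; since every edge added lies over $Y$ as well, no new atom enters $\D$ and none leaves, whence $\At{\D}$ is preserved. Because the added edges attach to the existing $Y$-component, that component stays connected and $\D$ remains \good (\Cref{def:wiring}), i.e.\ still a single wiring; switching acyclicity and the labelling are readily preserved by the expansion and by the insertion of a leaf axiom. Therefore the multiset $\{\,\size{\At{\D'}} : \D'\in\Wirings{\R}\,\}$ is the same for $\R$ and for $\deW \R Y$, and taking its maximum minus one (\Cref{def:width_R}) yields $\Width{\R}=\Width{\deW \R Y}$.

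The main obstacle is the bookkeeping of the second paragraph: one must verify against the precise diagram of \Cref{fig:deweak} that $\OShow$, together with the preparatory $\neut$-expansion, is genuinely local to a single wiring, introduces no edge supported by an atom other than $Y$, and leaves the affected component a single \good wiring. Everything else is then routine.
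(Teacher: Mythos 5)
Your proof is correct and takes essentially the same route as the paper: the paper's entire proof is the one-line observation that $\OShow$ does not change the size of any \wiring, which is precisely the claim you establish in detail (the operation is local to the single wiring containing the chosen $\cn$-node, and every edge it adds is supported by $Y$, an atom that wiring already carries, so each $\At{\D}$ and hence the maximum is unchanged). The bookkeeping you flag as the "main obstacle" is exactly what the paper glosses over, so your write-up is, if anything, more careful than the original.
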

\begin{proof}
The 
operation 	$\OShow$  does not change the size of any \wiring.
\end{proof}

%
%
%


\section{Inference by Interpretation, efficiently}
Let consider again the cost of the PCoh interpretation (\Cref{sec:cost}) taking advantage of the factorization of \bpn's. We  inductively  interpret  a (factorized) \bpn  following its tree-shape.

 \Cref{thm:PCoh} easily generalize to \Cref{lem:PCoh}, using the properties of sum and products, and  \goodness.
\begin{lemma}\label{lem:PCoh} Let  $\R:\Delta$  be  a \bpn which can be decomposed as $\pair \Nets \M$\footnote{Recall  that $\M$  is the  \MLL   module  obtained from $\R$ by removing  $\Nets$.}, where   $\Nets = \{\N_1, \dots, \N_h\}$ is a set of Bayesian sub-nets.
	For $\iphi$ a  \groundinterpretation,
	~~$\phisem{\R} = \project{\BigFProd_{i:1}^n\phisem{\N_i}}{\At{\Delta}} $.
\end{lemma}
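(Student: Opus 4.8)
The plan is to reduce the statement to an identity between factors and then settle it by sum–product distributivity, the only genuinely geometric input being a \emph{confinement} property of the internal atoms of the components, which is where \goodness enters.

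First I would apply \Cref{thm:PCoh} twice. Since $\Nets$ is a set of (disjoint) sub-nets containing all the boxes of $\R$ (\Cref{not:module}), the boxes of $\R$ are partitioned among the $\N_i$, so $\Box(\R)=\biguplus_{i=1}^h\Box(\N_i)$. Writing $P_i\defeq\BigFProd_{\bbox^X\in\Box(\N_i)}\iphi(\bbox^X)$ and $\Phi\defeq\BigFProd_{i=1}^h P_i$, \Cref{thm:PCoh} applied to $\R$ gives $\phisem{\R}=\FProj{\Phi}{\At{\Delta}}$, and applied to each $\N_i:\Gamma_i$ gives $\phisem{\N_i}=\FProj{P_i}{\At{\Gamma_i}}$. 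Thus everything reduces to the factor identity $\FProj{\BigFProd_i P_i}{\At{\Delta}}=\FProj{\BigFProd_i\FProj{P_i}{\At{\Gamma_i}}}{\At{\Delta}}$.

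The crux is a confinement claim. Put $I_i\defeq\At{\N_i}-\At{\Gamma_i}$, the atoms internal to the component $\N_i$. I claim that an atom $X\in I_i$ occurs nowhere in $\R$ outside $\N_i$; consequently the $I_i$ are pairwise disjoint and each is disjoint from $\At{\N_j}$ ($j\neq i$) and from $\At{\Delta}$. This is exactly where \goodness is used. Since $\R$ is a \bpn, $\nax\R$ is \good (\Cref{lem:B_connected}), so all edges supported by $X$ lie in a single connected component. In the decomposition $\R=\pair{\Nets}{\M}$ the sub-net $\N_i$ meets the rest of $\R$ only along its conclusions $\Gamma_i$; as $X\notin\At{\Gamma_i}$, every $X$-labelled edge of $\N_i$ is internal to $\N_i$, so the whole $X$-component is contained in $\N_i$ and $X$ cannot appear in $\M$ nor in any other $\N_j$. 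In particular $X\notin\At{\Delta}$, since $\At{\Delta}\subseteq\At{\M}$ by \Cref{lem:M_connected}. Finally $X$ is the positive conclusion of a box $\bbox^X$ of $\N_i$, so $X\in\At{P_i}$; hence $I_i\subseteq\At{P_i}$ and $\FProj{P_i}{\At{\Gamma_i}}=\sum_{I_i}P_i$.

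Finally I would push the projections inside. By confinement, the variables of $I_i$ occur only in the factor $P_i$ among the $P_j$, so sum–product distributivity ($\sum_X(\phi_1\FProd\phi_2)=\phi_1\FProd\sum_X\phi_2$ when $X\notin\Var{\phi_1}$) yields $\sum_{I_i}\Phi=(\BigFProd_{j\neq i}P_j)\FProd\sum_{I_i}P_i$. As the $I_i$ are pairwise disjoint these summations commute, and iterating over all $i$ gives $\sum_{\bigcup_i I_i}\Phi=\BigFProd_i\sum_{I_i}P_i=\BigFProd_i\phisem{\N_i}$. Since moreover $\bigcup_i I_i$ is disjoint from $\At{\Delta}$, the projection onto $\At{\Delta}$ factors through this partial summation, so $\phisem{\R}=\FProj{\Phi}{\At{\Delta}}=\FProj{\BigFProd_i\phisem{\N_i}}{\At{\Delta}}$, as required. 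The delicate point is the confinement claim; the remaining manipulations are the routine commutative-monoid bookkeeping guaranteed by the factor algebra.
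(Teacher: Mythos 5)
Your proposal is correct and takes essentially the same approach as the paper, which does not spell out a proof of this lemma but only remarks that \Cref{thm:PCoh} ``easily generalize[s]'' to it ``using the properties of sum and products, and \goodness''. Your confinement claim (internal atoms of a component cannot escape it, by \goodness of $\nax{\R}$) together with the sum--product distributivity bookkeeping supplies exactly the details that sketch appeals to.
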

So in particular:
\begin{cor} Let  $\R:\Delta$  be a   \bpn, and $\iphi$ a \groundinterpretation.
	Assume that  $\R$ is a cut-net   $\D(\N_1:\Gamma_1, \dots, \N_h:\Gamma_h)$, where $\N_1, \dots, \N_h$ are \bpn's.
	Then
			$\phisem{\D(\N_1, \dots, \N_n)} = \project{\BigFProd_{i:1}^n\phisem{\N_i}}{\At{\Delta}} $
\end{cor}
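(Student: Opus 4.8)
The plan is to read off the cut-net presentation $\R=\D(\N_1,\dots,\N_h)$ as a particular instance of the \MLL decomposition $\pair{\Nets}{\M}$ of \Cref{not:module}, and then to apply \Cref{lem:PCoh} essentially verbatim. Concretely, I would take $\Nets=\{\N_1,\dots,\N_h\}$ as the set of Bayesian sub-nets. By \Cref{def:wiring_cut}, each conclusion of each $\N_i:\Gamma_i$ is cut against a conclusion of the wiring $\D$; hence removing the sub-nets $\N_1,\dots,\N_h$ from $\R$ (while keeping, as pending premises, the edges $\Gamma_1,\dots,\Gamma_h$) leaves exactly the module $\M$ consisting of $\D$ together with the $\cut$-nodes that previously linked $\D$ to the $\N_i$'s. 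This $\M$ is an \MLL module with premises $\Gamma_1,\dots,\Gamma_h$ and conclusions $\Delta$, and by construction $\R=\pair{\Nets}{\M}$ is precisely the decomposition required by the lemma.

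Next I would verify that the hypotheses of \Cref{lem:PCoh} are met. Since $\D$ is a \wiring it contains no $\bn$-node, so every box of $\R$ lies inside some $\N_i$; thus $\Nets$ captures all boxes of $\R$, and each $\N_i$ is a \bpn by assumption, so $\Nets$ is indeed a set of Bayesian sub-nets. If one also wants to track \goodness of $\M$ (as used in the proof of \Cref{lem:PCoh}), it is inherited from $\D$: the wiring $\D$ is \good by \Cref{def:wiring}, and each added $\cut(\X,\Xn)$ joins two edges supported by the same \atom $X$, so it merely extends a single connected component and cannot merge two components carrying distinct \atoms; hence $\M$ still satisfies ``one connected component = one \atom''.

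Finally I would invoke \Cref{lem:PCoh} with this $\Nets$ and $\M$, which immediately yields
$\phisem{\R}=\project{\BigFProd_{i:1}^h\phisem{\N_i}}{\At{\Delta}}$,
i.e. exactly the claimed equation (with the index ranging over $1,\dots,h$).

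I do not expect a genuine obstacle here: the whole content is the recognition that a rooted cut-net $\D(\N_1,\dots,\N_h)$ is a legitimate instance of $\pair{\Nets}{\M}$. The one point I would be most careful about is checking that the ``wiring-plus-cuts'' module $\M$ really qualifies for \Cref{lem:PCoh} — that is, that the $\N_i$ are exactly its removed Bayesian sub-nets, that its pending premises are $\Gamma_1,\dots,\Gamma_h$, and that the projection set $\At{\Delta}$ of the lemma coincides with the atoms of the output $\Delta$ of $\D$ (which holds since, by \Cref{def:wiring_cut}, $\Delta$ is simultaneously the conclusion of $\R$ and the output of $\D$). Everything else is a direct specialization.
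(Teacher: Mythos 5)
Your proposal is correct and matches the paper's own treatment: the corollary is stated there as an immediate specialization of \Cref{lem:PCoh} (introduced by ``So in particular''), obtained precisely by reading the cut-net $\D(\N_1,\dots,\N_h)$ as the decomposition $\pair{\Nets}{\M}$ with $\Nets=\{\N_1,\dots,\N_h\}$ and $\M$ the wiring $\D$ together with the connecting $\cut$-nodes. Your extra checks (that all boxes lie in the $\N_i$'s and that \goodness of $\M$ is inherited from $\D$) are details the paper leaves implicit, but they do not change the route.
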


\begin{theorem}[Turbo  PCoh interpretation]\label{cor:turbo}\label{thm:turbo}
	 Let  $\R:\Delta$  be   a  \bpn  in \emph{\factorized} form, and  $\iphi$  a \groundinterpretation.   $\phisem \R$ can be inductively computed as follows 
	\begin{itemize}
		\item $\phisem{\bbox} = \iphi(\bbox)$, for $\bbox\in \Ax(\R)$
		\item $\phisem{\D(\N_1, \dots, \N_n):\Delta} = \project{\BigFProd_{i:1}^n\phisem{\N_i}}{\At{\Delta}} $
	\end{itemize}
\end{theorem}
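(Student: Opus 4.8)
The plan is to argue by structural induction on the \factorized form of $\R$, exploiting that \Cref{def:factorized} is itself inductive: its base case is a single box, and its recursive case presents $\R$ as $\D(\N_1,\dots,\N_n)$ with each $\N_i$ a \factorized sub-net. The two clauses of the statement are in bijection with these two cases, so the theorem is essentially a packaging of results already in hand --- the base clause restates the interpretation of a box, and the recursive clause is exactly the corollary to \Cref{lem:PCoh}.

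For the base case, $\R$ is a single box $\bbox$, whose conclusions $\Delta$ are precisely its positive and negative atomic conclusions, so that $\At{\Delta}=\At{\R}$. Here I would simply recall that the PCoh interpretation of a generalized axiom under the \groundinterpretation $\iphi$ is, by definition of the inductive interpretation, the factor $\iphi(\bbox)$; equivalently, \Cref{thm:PCoh} gives $\phisem{\bbox}=\FProj{\iphi(\bbox)}{\At{\Delta}}$, and since $\At{\Delta}$ is all of $\At{\R}$ the projection is the identity, yielding $\phisem{\bbox}=\iphi(\bbox)$.

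For the inductive step, $\R=\D(\N_1,\dots,\N_n):\Delta$ is, by \Cref{def:factorized} together with \Cref{def:wiring_cut}, a \cutnet whose root $\D$ is a \wiring and whose remaining components $\N_1,\dots,\N_n$ are \factorized (hence Bayesian) sub-nets, each cut against a conclusion of $\D$. This is exactly the situation covered by the corollary to \Cref{lem:PCoh}, which delivers
\[
\phisem{\R}=\project{\BigFProd_{i:1}^n\phisem{\N_i}}{\At{\Delta}}.
\]
Each $\N_i$ has strictly fewer components than $\R$ and inherits $\iphi$ on its own boxes, so the induction hypothesis lets me compute each $\phisem{\N_i}$ by the same two-clause recipe. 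This closes the induction and gives the stated procedure.

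The only genuine content sits in the appeal to \Cref{lem:PCoh}, and I expect the main obstacle to be there rather than in the induction itself. What must be verified is that cutting the sub-nets $\N_i$ against the \wiring root $\D$ really realises the factor product followed by a projection onto $\At{\Delta}$. This is where \goodness of $\D$ --- the ``one connected component = one \atom'' property of \Cref{thm:char} --- does the work: each maximal connected component of $\D$ carries a single \atom, so the cuts between $\D$ and the $\N_i$ identify exactly the variables shared between the factors $\phisem{\N_i}$ (and no spurious ones), while the conclusions $\Delta$ of $\D$ record exactly the atoms that survive the projection. Granting \Cref{lem:PCoh} (whose proof in turn rests on \Cref{thm:PCoh} and the invariance of $\phisem{-}$ under normalization), no further computation is needed.
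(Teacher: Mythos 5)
Your proposal is correct and follows essentially the same route as the paper: the theorem is the inductive packaging of the corollary to \Cref{lem:PCoh} (the cut-net case) together with the trivial base case for boxes, which is exactly how the paper obtains it — the statement is given there without a separate proof precisely because it is immediate from that corollary. Your added observation that for a single box the projection in \Cref{thm:PCoh} is the identity (since $\At{\Delta}=\At{\R}$) is a correct and harmless elaboration of the base case.
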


\begin{example}[Factorized Interpretation]\label{ex:fact_2}
	Let us compute the interpretation of   \bpn $\R_D'$  in 
	\Cref{fig:ABCDE4}, with  the three modules completed as in  \Cref{ex:fact}.
	With the same notations, let $\N_1 = \D_1(\bbox^A, \bbox^B,\bbox^C)$,
	 $\N_2= \D_2(\bbox^D,\N_1)$, $\N_3=\D_3\big(\bbox^E, \N_1\big)$.
	 Notice  that  $\D_1$ has  output $ B^+,C^+ $, while 
	  $\N_2$ and $\N_3$ have output  $  C^+,D^+$, and $ D^+$, respectively.
\begin{center}
{\small 		$ 	\begin{array}{c|c|c|c}
	\text{sub-net}		&  & \sem {\N_i}  &\Var{\D_i}  \\
			\hline
		\N_1	& \project{\phi^A\FProd \phi^B\FProd \phi^C }{B,C}& \sem {\N_1}=\tau^1(B,C)& ABC  \\
		\N_2	&  \project{\phi^D\FProd \tau^1(B,C) }{C,D}& \sem {\N_2}=\tau^2(C,D) & BCD\\
		\N_3	& \project{\phi^E\FProd \tau^2(C,D) }{D} &  \sem {\N_3}=\tau^3(D)& CDE
		\end{array} $}
\end{center}
When computing this way  the interpretation of  $\sem {\R_D'}$, 
the  size of the largest factor which is produced in the calculation is $2$, 
 to be contrasted with what happens for  
 ${\R_D}$ (\Cref{ex:roadmap}).
\condinc{}{Assuming the cardinality of $\Val X$ (for $ X\in \{A,B,C,D,E\} $) is at most $k$, we moved from  calculating and storing $\BigO{ k^4}$  values, to $\BigO{k^2}$.}
 \SLV{}{We stress that $\sem {\R_D} =  \sem {\R_D'}$. }

\end{example}

\subsection{Cost of   interpreting a \factorized net}


Given a \bpn $\R$, we define  the parameter $m_\R$ which counts the number 
of components in the cut-net $\R$ (minus one)
\begin{center}$
	 m_\R=\size{\Ax{(\R)}}~+~\size{\Wirings{\R}} - 1$
\end{center} 
\begin{theorem}[Cost of interpreting a \factorized \bpn]\label{prop:cost} Let $\R$ be a \bpn in \factorized form. 
	The time and space cost of computing the  interpretation $\phisem \R$ is 
\begin{center}
	$ 	\BigO{m_{\R} \cdot \Exp {\Width\R}} $
\end{center}
for  
$ \Width\R$  as in \Cref{def:width_R} and $m_{\R}$  as defined above.
\end{theorem}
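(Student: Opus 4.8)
The plan is to read the cost straight off the Turbo recipe of \Cref{cor:turbo}, which traverses the tree $\tree_\R$ and performs \emph{exactly one} product-and-project step at each \wiring $\D\in\Wirings\R$, the boxes contributing only their stored factors $\iphi(\bbox)$. Hence the total cost is the sum over $\D\in\Wirings\R$ of the cost of evaluating $\project{\BigFProd_{i=1}^{h_\D}\phisem{\N_i}}{\At\Delta}$, where $h_\D$ is the number of sub-nets cut against $\D$, \ie the number of children of $\D$ in $\tree_\R$. I would first bound this per-\wiring cost, then sum over the tree.

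Fix a \wiring $\D$, with inputs $\Gamma_1,\dots,\Gamma_{h_\D}$ and output $\Delta$. The crucial observation is that every factor handled at $\D$ ranges over a subset of $\At\D$. Indeed, by \Cref{thm:PCoh} each $\phisem{\N_i}$ is a factor over $\At{\Gamma_i}$, and by the cut-net structure (\Cref{def:wiring_cut}) the \wiring $\D$ has conclusions $\Gamma_1^\bot,\dots,\Gamma_{h_\D}^\bot,\Delta$, so that $\At{\Gamma_i}\subseteq\At\D$ and $\At\Delta\subseteq\At\D$. Consequently the product $\BigFProd_i\phisem{\N_i}$ is a factor over $\bigcup_i\At{\Gamma_i}\subseteq\At\D$, a set of at most $\size{\At\D}$ variables. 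By \Cref{rem:factors_cost}, forming this product of $h_\D$ factors costs $\BigO{h_\D\cdot\Exp{\size{\At\D}}}$, and the ensuing projection onto $\At\Delta$ costs $\BigO{\Exp{\size{\At\D}}}$; since by \Cref{def:width_R} we have $\size{\At\D}\le\Width\R+1$, the step at $\D$ costs $\BigO{h_\D\cdot\Exp{\Width\R}}$, the $+1$ in the exponent being absorbed into the constant.

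It then remains to sum: the total is $\BigO{(\sum_{\D\in\Wirings\R}h_\D)\cdot\Exp{\Width\R}}$. Here I would use that $\tree_\R$ is a tree whose nodes are the components of the cut-net, namely the $\size{\Box(\R)}$ boxes and the $\size{\Wirings\R}$ wirings; rooting it at the root \wiring, every non-root node has a unique parent, and that parent is always a \wiring because boxes are leaves, so $\sum_\D h_\D$ counts exactly the non-root nodes, \ie $\size{\Box(\R)}+\size{\Wirings\R}-1 = m_\R$. Substituting yields the claimed $\BigO{m_\R\cdot\Exp{\Width\R}}$. The one genuinely load-bearing step is the containment $\At{\Gamma_i}\subseteq\At\D$ used above: it is exactly the goodness of the \wiring (one connected component per atom, \Cref{thm:char}) that prevents the wiring from introducing fresh atoms and thereby keeps every intermediate factor within the width of $\D$; everything else is the routine tree edge-count.
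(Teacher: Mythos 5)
Your proof is correct and takes essentially the same approach as the paper: the same per-wiring bound (resting on $\At{\Gamma_i}\subseteq\At{\D}$, the factor-cost estimate, and $\size{\At{\D}}\le\Width{\R}+1$) combined with the same tree edge count identifying the total number of product-and-project operands with $m_{\R}$. The only difference is presentational: the paper organizes the bookkeeping as a structural induction on the factorized form (with $m_{\R}=h+\sum_i m_{\N_i}$), whereas you flatten it into a single global sum $\sum_{\D\in\Wirings{\R}} h_\D = m_{\R}$ over the wirings; the content is identical.
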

\SLV{}{
\begin{proof}
First, observe that  $Ax{(\R)}\cup\Wirings{\R}$ are the components of the cut-net $\R$, and so the nodes of the correction graph $\tree_\R$. So 
 $m_\R$  counts the number of edges in $\tree_{\R}$ (the number of nodes minus one).

Let $ \Width\R=w $.
Assume $\R=\D{(\N_1:\Gamma_1, \dots, \N_h:\Gamma_h)}$.	Each   $\phi_i =\sem {\N_i}$ is a factor over $\At{\Gamma_i}\subseteq \At{\D}$, so  that  $\size{\At{\Gamma_i}} \leq  \size{\At{\D}}\leq (w +1)$.   The cost of computing 
$\BigFProd_{i:1}^h\phi_i$ is  $\BigO{h\cdot \Exp{w+1} }$. Since the cost of the projection is similar, the cost of computing $\sem{\D(\N_1, \dots, \N_h)}$ is  $\BigO{h \cdot \Exp{w}}$ \emph{plus} the cost  of  computing 
 $\phi_1=\sem{\N_1}, \dots \phi_h=\sem{\N_h}$.


By \ih, for each $i$, the cost of computing   $\sem{\N_i}$ is $\BigO{m_{\N_i} \cdot \Exp {w}}$, so 
$\sem{\D({\N_1}, \dots, {\N_h})} = \BigO{(h + \sum_{i:1}^h m_{\N_i}) \cdot \Exp{w} }  $ 

Recall that   $m_{\R}$ is the number of edges in the tree  $\tree_{\R}$, which  has root $\D$, and a subtree $\tree_{\N_i}$ (with $m_{\N_i}$ edges) for each $\N_i$.
So the number of edges in $\tree_{\R}$ is  $m_{\R} = h+  \sum_{i:1}^h m_{\N_i}$. We conclude that  
 the total cost for computing $\sem{\R}$ 	is	\begin{center}
 	$ \BigO{m_{\R} \cdot \Exp {\Width\R}} $.
 \end{center}
\end{proof}
}

 In \Cref{sec:induced} we will give a procedure to rewrite  a \bpn $\R$ in a \emph{factorized} \bpn $\R'$ which has (possibly\footnote{We can produce all possible factorization (in a sense made precise by the notion of order), including those of minimal cost, but minimality is NP-hard.}) smaller width, 
 but  the same  boxes and the same conclusion as $\R$, and so the same interpretation. 
Moreover, we will have that  $m_{\R'}$ is  $ \BigO{n} $, for $n=\size{\At{\R}}$, so that  computing $\phisem{\R}$ can actually be achieved with cost  	 
\begin{align}\label{eq:cost}
\BigO{n \cdot \Exp {\Width\R}}
\end{align}

\section{Computing Marginals, efficiently}
Let us revisit the inference of marginals in terms of interpretation (\Cref{sec:PView}). 
For the sake of a compact  presentation,  we focus on a   marginal $\Pr_\R (Y)$ over a \emph{single variable} ---everything generalizes to  marginals $\Pr_\R (Y_1, \dots, Y_k)$ over \emph{several variables}, at the price of some  \emph{easy}  bureaucracy.
\begin{remark}[Why computing all single marginals matters] Given a \BN,  the task  of computing every single marginal,  is a relevant one in itself, needed for example by learning algorithms.  Software tools for BN's  typically\footnote{For example,  GeNIe Modeler by BayesFusion computes all marginals by  using the clique tree algorithm  which we recall  in \Cref{sec:MP}}
	 pre-compute each single prior marginal as  soon as the model is defined,   before any query. 
\end{remark}


Given a positive \bpn $\R$ and fixed a \groundinterpretation $ \iphi $, we can see (\Cref{sec:PView}) every  $X\in \At{\R}$  as a \rvar  whose associated set of values is $\iphi (X) = \Val X$.  
Then  $\R$ specifies  a joint probability $\Pr_{\R}$ over $\At{\R}$. 
Moreover, if  $\R$ has  conclusion $\Y$,  $ \phisem{\R:\Y}$ 
is exactly  the \emph{marginal} probability $\Pr_{\R}(Y)$. We wonder:
{``In the  proof-net setting, 
	how much work we need to compute each marginal $\Pr_{\R}(X)$, for $X\in \At{\R}$ ?''}

We show below  that once  we have  a factorized \bpn $ {\R:\Y}$ which allows us to compute $ \phisem{\R:\Y}$ at a certain cost, then 
 we can  compute  every other marginal on a (easy to obtain) variant of the same factorized \bpn, 
 at the same cost.

 \begin{remark}Given  a positive \bpn $\R$,
 $\OHide$ and $\OShow$ allow us to modify the conclusion of  $\R$ without altering  $\Pr_{\R}$, and so   compute the marginal probabilty for any $X\in \At{\R}$.  
 First, notice that  by repeated \emph{hiding}, we can always transform 
 a positive \bpn $\R$ into a \bpn $\R_0$ of empty conclusions (clearly, $\sem \R \not= \sem{\R_0}$!)  but  with $\Pr_{\R} = \Pr_{\R_0}$.
 Conversely, 
if  $\R$ has empty conclusions and   $Y\in   \At{\R}$ is an internal \atom of $\R$,  \emph{$\OShow$ } of
$Y$  yields a proof-net $\R'$ which has conclusions $\Y$.
By \Cref{prop:bpn_vs_bn},	$\phisem{\deW {\R} {Y}}$ is   the marginal probability of $\Pr_\R$ over ${Y}$.

 \end{remark}

\begin{remark}[Empty conclusions] 
	A \bpn of empty conclusion has an especially simple structure.  The previous remark enables  us to work with empty conclusions (a technique we  take from  \cite{locus}), 
	  easing  the work-load without any loss of information. 
\end{remark}

\subsubsection{Computing each single marginal.}
We use in an essential way   \Cref{lem:root_change}, relying on the fact that a \factorized \bpn  $\R$ is a \emph{rooted}  cut-net.  


Assume we have a factorized \bpn $\R_0$ of \emph{empty} conclusion,  width $w$, and internal atoms $X_1, \dots ,X_n$. We immediately have available---for each internal atom $X_i$---a factorized \bpn   $\R:\X_i$,  with  the \emph{same width} as that of
$\R_0$, namely $\R=\deW \R{X_i}$.  To realize this, 
recall that a factorized \bpn is simply a cut-net of components $\Ax{(\R)} \cup \Wirings{\R}$---we can choose as root $\D_r$ any of its \wirings (\Cref{lem:root_change}). For any  internal atom $X$ such that $\cut(\X,\Xn)$  appears in $\D_r$, 
 de-weakening  ($\OShow$, \Cref{def:de-weakening}) allows us to "bring-out" a positive edge $e:\X$,  making $\X$ visible in the conclusions.
 \begin{example}See \Cref{ex:root}.
 \end{example}
\begin{prop}\label{thm:any_marginal1} Let $\R_{}$ be a factorized \bpn of \emph{empty conclusions}. 
For each   $Y\in \At{\R}$,	 $\R_Y=\deW {\R_{}} Y$ 
		is a factorized  \bpn of  conclusion $\Y$. We have ${\At{\R_{}}}={\At{\R_Y}}$, $\Ax(\R_{})=\Ax(\R_Y)$, $\Width{\R_{}}=\Width{\R_Y}$,
		$m_{\R_{}}=m_{\R_Y}$.
\end{prop}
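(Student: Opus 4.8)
The plan is to localize the whole argument to the single cut that $\OShow$ acts on, and then read off the four equalities from the observation that this operation only inserts purely multiplicative nodes, all supported by the atom $Y$, into one \wiring, leaving the cut-net skeleton $\tree_\R$ otherwise untouched. Several ingredients are already available: \Cref{def:de-weakening} tells us that $\deW{\R}{Y}$ is a \bpn, \Cref{lem:root_change} lets us put any \wiring at the root, and \Cref{lem:B_connected} identifies \factorized-ness with goodness of the \wirings; so the real work is to exhibit the right rooted decomposition and to check that goodness survives.

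First I would pin down the cut. Since $\R$ has empty conclusions, no atom occurs in them, so every $Y\in\At{\R}$ is internal; by interface condition~(2) of \Cref{def:bpn}, $Y$ is the positive conclusion of a \emph{unique} box $\bbox^Y\in\Ax(\R)$, and as the conclusions are empty this $\Y$ must be consumed, yielding a node $\cut(\Y,\Yn)$. In a \factorized \bpn every box is a leaf of $\tree_\R$, cut only against its parent \wiring; hence the negative premise $\Yn$ is a conclusion of some $\D\in\Wirings{\R}$. Using \Cref{lem:root_change} I would rewrite $\R=\D(\N_1,\dots,\N_h)$ with this $\D$ as root, so that $\bbox^Y$ is cut directly against $\D$.

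Now I would run $\OShow$ exactly as in \Cref{def:de-weakening}: after a $\neut$-expansion making $\Yn$ the conclusion of a $\cn$-node \emph{inside} $\D$, one adds a fresh axiom $\ax(\Y,\Yn)$, routes its negative edge into that $\cn$-node, and exposes its positive edge $\Y$. By \Cref{def:de-weakening} the result $\R_Y=\deW{\R}{Y}$ is a \bpn, and since the original conclusions were empty its only conclusion is $\Y$. The point to stress is that every node created---the $\cn$- and $\w$-nodes of the $\neut$-expansion and the new axiom---is multiplicative with all edges supported by $Y$, and is internal to $\D$; they enlarge $\D$ into a module $\D'$ whose new output is $\Y$. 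Thus $\R_Y=\D'(\N_1,\dots,\N_h)$ has the same component tree as $\R$, with $\D$ replaced by $\D'$. The crux is that $\D'$ is still a \wiring: its new $Y$-edges are mutually connected and, through the $\cn$-node, connected to the $Y$-component already present in $\D$, so they neither introduce a new atom nor merge components carrying different atoms; hence ``one connected component $=$ one atom'' persists and $\At{\D'}=\At{\D}$. By \Cref{lem:B_connected} this makes $\R_Y$ a \factorized \bpn of conclusion $\Y$ rooted at $\D'$.

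The four invariants then fall out by bookkeeping on $\tree_\R$. No atom is created, so $\At{\R_Y}=\At{\R}$. The operation adds only $\ax$-, $\cn$- and $\w$-nodes and no $\bn$-node, so $\Ax(\R_Y)=\Ax(\R)$. The set $\Wirings{\R_Y}$ arises from $\Wirings{\R}$ by replacing $\D$ with $\D'$, whence $\size{\Wirings{\R_Y}}=\size{\Wirings{\R}}$; combined with $\Ax(\R_Y)=\Ax(\R)$ this gives $m_{\R_Y}=m_\R$. Finally, since $\At{\D'}=\At{\D}$ and all other \wirings are untouched, the largest \wiring size is unchanged, so $\Width{\R_Y}=\Width{\R}$. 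The only genuinely delicate step is the goodness check for $\D'$; everything else is counting nodes of the unchanged cut-net skeleton.
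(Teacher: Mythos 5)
Your proposal is correct and follows essentially the same route as the paper: view the factorized \bpn as a cut-net of components $\Ax(\R)\cup\Wirings{\R}$, use \Cref{lem:root_change} to root it at the \wiring $\D$ against which $\bbox^Y$ is cut, apply $\OShow$ there, and observe that the operation changes neither the set of boxes, nor the number of \wirings, nor the atoms of any \wiring (the paper's one-line justification being precisely that $\OShow$ does not change the size of any \wiring). The only nitpick is the final citation: \Cref{lem:B_connected} yields that $\R_Y$ is a \bpn from goodness of $\nax{\R_Y}$, while factorized-ness follows directly from \Cref{def:factorized} once you have checked (as you do) that $\D'$ is still a \wiring and the $\N_i$ are untouched.
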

So  the cost of  interpreting $\R_Y$ is  always the same.
\begin{theorem}\label{thm:any_marginal2}
	Let  $\R$ be a \bpn of empty conclusion, and $\iphi$ a \groundinterpretation.
	For \emph{every} $Y\in \At{\R}$,  the cost of computing $\phisem{\R_Y:\Y}$  is   the same as the cost of computing  $\phisem{\R}$. 
\end{theorem}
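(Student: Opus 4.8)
The plan is to obtain \Cref{thm:any_marginal2} as a direct corollary of the two results that immediately precede it, namely the structural invariance of \Cref{thm:any_marginal1} and the cost estimate of \Cref{prop:cost}. The quantity whose cost we must compare is the interpretation of a factorized \bpn, and \Cref{prop:cost} already tells us that this cost is governed entirely by two parameters: the width $\Width{\R}$ and the component count $m_{\R}$. Hence the whole argument reduces to checking that these two parameters are preserved when we pass from $\R$ to $\R_Y = \deW{\R}{Y}$.

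First I would invoke \Cref{prop:cost} on $\R$ itself (a factorized \bpn of empty conclusion, as set up in this section), recording that computing $\phisem{\R}$ costs $\BigO{m_{\R} \cdot \Exp{\Width{\R}}}$ in time and space. Next, for a fixed $Y \in \At{\R}$, I would use \Cref{thm:any_marginal1} to observe that $\R_Y$ is again a factorized \bpn, now of conclusion $\Y$, so that \Cref{prop:cost} applies verbatim to it and yields cost $\BigO{m_{\R_Y} \cdot \Exp{\Width{\R_Y}}}$.

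The final and only substantive step is to equate the two bounds, and here the work has already been done: \Cref{thm:any_marginal1} asserts precisely that $\Width{\R_Y} = \Width{\R}$ and $m_{\R_Y} = m_{\R}$. Substituting these equalities shows that the cost of computing $\phisem{\R_Y:\Y}$ is literally $\BigO{m_{\R} \cdot \Exp{\Width{\R}}}$, matching that of $\phisem{\R}$. I do not expect any real obstacle at this level; the genuine content lies upstream in \Cref{thm:any_marginal1}, whose proof must establish that the de-weakening operation $\OShow$ leaves the cut-net's partition untouched (it acts inside a single \wiring, changing neither $\Ax(\R)$ nor $\Wirings{\R}$, hence not $m_{\R}$) and does not alter the size of any \wiring (hence not $\Width{\R}$). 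Once those invariances are in hand, the present theorem is immediate.
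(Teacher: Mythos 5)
Your proposal is correct and follows essentially the same route as the paper: both arguments reduce the statement to the cost bound of \Cref{prop:cost} together with the fact that $\OShow$ changes neither the size of any \wiring nor the cut-net structure, so that $\Width{\R}$ and $m_{\R}$ are preserved. The only difference is bookkeeping: you delegate the structural invariances to \Cref{thm:any_marginal1}, whereas the paper's proof re-derives them inline (explicitly invoking the root-change lemma, \Cref{lem:root_change}, to view $\R_Y$ as a factorized net rooted at the \wiring where the axiom was introduced) --- precisely the content you correctly flag as the upstream obligation in proving \Cref{thm:any_marginal1}.
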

\SLV{}{
\begin{proof}Recall that  $\R_Y=\deW{\R}{Y}$ is obtained either by removing a $\w$-node, or by adding an axiom $\ax(\Y,\Yn)$, where the negative edge $e:\Yn$ is premise of a $@$-node, and  the  positive edge $f:\Y$ is the conclusion of $\R_Y$. Let   $\D$ be  the \wiring in which the axiom has been added, and $\D'$ the corresponding wiring in  $\R_Y$ . Notice that $\Width \D = \Width{ \D'}$. 
	
	By \Cref{lem:root_change} , we can see  $\R_Y$ as  a factorized form which has $\D$ as root. The claim then follows from 
	\Cref{prop:cost}, and  the fact that $\OShow$  does not change the size of any \wiring. 
\end{proof}}
We  use  (a variant of)  the same \bpn for all the marginals.
%
\begin{corollary}\label{cor:marginal_cost}
	With the same assumptions as in \Cref{thm:any_marginal2}, assume  
	  $\size{\Wirings \R}$ is  $\BigO{n}$, for  $n=\size {\At{\R}}$. 
	For each $Y\in \At{\R}$, the   marginal $\Pr_\R(Y)$ can be computed with  cost  \begin{center}
		$	\BigO{n \cdot \Exp {\Width\R}}$.
	\end{center}
\end{corollary}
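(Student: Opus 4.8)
The plan is to reduce the statement directly to \Cref{thm:any_marginal2} and \Cref{prop:cost}, the only genuinely new ingredient being the estimate $m_\R = \BigO{n}$ under the stated hypothesis. Fix an atom $Y \in \At{\R}$. Since $\R$ has empty conclusion (hence is positive), $Y$ is internal, so by \Cref{def:de-weakening} the proof-net $\R_Y = \deW{\R}{Y}$ is a well-defined \factorized \bpn of conclusion $\Y$, and by the discussion preceding the statement (through \Cref{prop:bpn_vs_bn}) its interpretation $\phisem{\R_Y:\Y}$ is exactly the marginal $\Pr_\R(Y)$: indeed $\OShow$ leaves $\bnet{\R,\iphi}$, and hence $\Pr_\R$, unchanged, so $\Pr_{\R_Y} = \Pr_\R$ and $\phisem{\R_Y:\Y}$ is the marginalization of this distribution onto $Y$. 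Thus computing $\Pr_\R(Y)$ amounts to computing $\phisem{\R_Y:\Y}$.

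First I would invoke \Cref{thm:any_marginal2}, which tells me that the cost of computing $\phisem{\R_Y:\Y}$ equals the cost of computing $\phisem{\R}$, for every choice of $Y$; and by \Cref{prop:cost} the latter cost is $\BigO{m_\R \cdot \Exp{\Width\R}}$. (Equivalently, one may apply \Cref{prop:cost} to $\R_Y$ directly, using \Cref{thm:any_marginal1} to record that $\R_Y$ has the same width and the same value of $m$ as $\R$.) So it only remains to bound $m_\R$. Recall $m_\R = \size{\Ax{(\R)}} + \size{\Wirings{\R}} - 1$. Because $\R$ is a positive \bpn of empty conclusion, condition (2.) of \Cref{def:bpn} forces every atom of $\R$ to occur as the positive conclusion of exactly one box; as each box carries exactly one positive conclusion, the boxes are in bijection with $\At{\R}$ and $\size{\Ax{(\R)}} = n$. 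By hypothesis $\size{\Wirings{\R}} = \BigO{n}$, whence $m_\R = n + \BigO{n} - 1 = \BigO{n}$. Substituting into the bound above gives that $\Pr_\R(Y)$ is computed in time and space $\BigO{n \cdot \Exp{\Width\R}}$; since this bound is uniform in $Y$, it holds for every $Y \in \At{\R}$, which is the claim.

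The step requiring the most care — though it remains routine — is the semantic identification of $\phisem{\R_Y:\Y}$ with $\Pr_\R(Y)$ rather than with some related quantity: one must check that de-weakening an internal atom does not disturb the joint distribution $\Pr_\R$ (this is precisely the invariance of $\bnet{\R,\iphi}$ under $\OShow$), so that interpreting the single conclusion $\Y$ yields exactly the marginal of the unchanged $\Pr_\R$ over $Y$. Everything else is a direct substitution of the cited cost bounds together with the elementary box count.
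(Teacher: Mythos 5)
Your proof is correct and follows exactly the route the paper intends for this corollary (which it leaves implicit): chain \Cref{thm:any_marginal2} with the cost bound $\BigO{m_\R \cdot \Exp{\Width\R}}$ for factorized \bpns, then observe that $m_\R = \size{\Ax(\R)} + \size{\Wirings\R} - 1 = \BigO{n}$, since the boxes of a \bpn with empty conclusion are in bijection with (or at least injectively mapped into) $\At{\R}$ and the wiring count is $\BigO{n}$ by hypothesis. Your care in identifying $\phisem{\R_Y:\Y}$ with $\Pr_\R(Y)$ via the invariance of $\bnet{\R,\iphi}$ under $\OShow$ and \Cref{prop:bpn_vs_bn} matches the paper's own remarks preceding the statement.
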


%

\section{A cuts-led factorization}\label{sec:induced}
We have discussed the properties of factorized \bpn's. 
In this section, we sketch  an effective algorithm to factorize a positive \bpn $\R$ (the actual construction is  in  \Cref{app:factorize}).
Our goal is to break the  large wiring corresponding to $\nax \R$ (as in \Cref{fig:ABCDE3})
into smaller components (as in  \Cref{fig:ABCDE4}). Such smaller components  will be \emph{cut-free} \emph{\wirings}.
Concretely, 
we progressively rewrite a \emph{normal} \bpn $\R$ into a factorized \bpn $\R'$,  
by choosing and processing  \emph{one node  $\cut(\X,\Xn)$ at a time.}
These choices  eventually define a total order $\omega$ on the $\cut$-nodes and so (since there is exactly one cut per internal variable) on the internal atoms of $\R$.
  The rewriting only consists of $\arule$-rules expansions, hence $\sem {\R'}=\sem \R$ and  $\bnet{\R'}=\bnet{\R}$. 
Notice that $\R'$ is no longer normal, but \emph{its structure is simpler}\footnote{We borrow this technique from \cite{locus}}. Also notice that  $\R'\Red[]^*\R$.
We first  sketch the  idea.

 \paragraph{Naive idea: one $\cut$ at a time}

 The natural way to interpret  a \bpn $\R$ is to follow  the sequentialization of $\R$. If we adopt  a sequentialization algorithm which works top-down---starting  from the boxes\footnote{Parsing (\eg   \cite{GuerriniM01}) is a sequentialization algorithm which works this way}---we can  interpret the sequent calculus derivation while we are building it. 
This process, progressively interpreting the (virtually built) derivation tree,  already provides
 an instance of factorized interpretation. 
 
 Consider again $\R_D$ in \Cref{fig:ABCDE3}. The sequentialization\footnote{Sequentialization here  greedily performs \emph{contractions} whenever possible.} in \Cref{fig:ABCDE_seq} gives an inductive construction---we can follow it. 
 We  have the interpretation of the leaves (the boxes), and we    proceed inductively, according to  the "standard" semantics of PCoh. 
 In particular, a sub-tree ending with a  $\cut$-rule is interpreted by composing  the  interpretations of its two children.

\condinc{}{
 \[\infer[\cut (A)] {\quad\vdash D^+}
 {\infer[\bbox^A]{\vdash A^+}{}&
 	\infer[\cut(C) ]{\vdash A^-,D^+ }
 	{\infer[\bbox^C]{\vdash A^-,C^+}{} & \infer[\cut (B)]{\vdash A^-,C^-,D^+}{\infer[\bbox^B]{\vdash A^-, B^+}{} & \infer[\bbox^D]{\vdash B^-,C^-,D^+}{}}
 }
}
\]
}
 Notice how 
  the interpretation/sequentialization  resemble a  \emph{kind of cut-elimination} (very much like in \cite{locus}). In our example, we proceed according to  the order: \begin{center}
  	$\cut(E) , \cut(B), \cut(C), \cut(A)$.
  \end{center}
 The size of the  largest factor we produce in this  example  (being  a bit smart with  contractions and mix)  is $3$. For instance, when interpreting the proof of root $\cut(B)$, we have 
 $\tau^B(A,C,D) = \project{\phi^B\FProd \tau^E(B,C,D)}{A,C,D}$.
 
 The limit of this approach is that  the factorization directly obtained by   sequentialization  is not necessarily the best one. 
 A better factorization for $\R_D$ would be as in \Cref{fig:ABCDE4}, yielding intermediate factors of maximal size $2$
 (\Cref{ex:fact_2}). Indeed, 
 we can reorganize (factorize) the original proof-net differently, \emph{effectively sequentializing 
 in a computationally more efficient way. }
 The \pn in \Cref{fig:ABCDE4}  corresponds to the  proof sketched below, once the $\otimes/\parr$-cuts have been reduced.
 
\begin{center}
	{\footnotesize
	 $\infer[\cut(B,C)]{\vdash D^+}{
 \infer[\cut (A)]{\vdash B^+\otimes C^+}{  {\vdash A^+}&   \infer{\vdash A^-,B^+\otimes C^+}{\vdash A^-,B^+&   A^-,C^+}}
 &   \infer{\vdash C^-\parr B^-}{\infer {\vdash C^-,B^-,D^+}{\vdots}}
 }$
 }
\end{center}
We need not (and will not) explicitly go via $\otimes/\parr$-cuts. Instead we will produce  \emph{bundles of cuts}, where each bundle  could actually be $\otimes/\parr$-expanded into a single cut.

\subsection{Factorization induced by  Cut-Elimination order}\label{sec:induced_factorization}
 To simplify the proofs   (and without loss of generality) we assume the \bpn $\R$ to be   \emph{normal} and   of \emph{empty conclusions}. 
\begin{lemma}[Cuts] If a \bpn $\R$ has \emph{empty conclusion}, and is \emph{normal},  then for each atom $X\in \Var{\R}$ (1.) $X$ is \emph{internal},  and (2.)  there \emph{exists  
	unique}  a node $\cut(\X,\Xn)$.
\end{lemma}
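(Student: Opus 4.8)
The plan is to read off almost everything from the interface conditions of \Cref{def:bpn} together with atomicity, and to invoke normality only for the uniqueness half of~(2.).

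Part~(1.) is immediate. Since $\R$ has empty conclusion, no atom occurs in the conclusions, so $\At{\R}=\At{\R}-\emptyset$ and every $X\in\At{\R}$ is internal by definition. This also fixes the bookkeeping I will use throughout: by the interface condition~(2.) of \Cref{def:bpn}, each atom $X\in\At{\R}$ occurs \emph{either} positively as the conclusion of exactly one box \emph{or} negatively in the conclusions; the empty conclusion excludes the second case, so $X\mapsto\bbox^X$ is a bijection between $\At{\R}$ and $\Ax(\R)$, and each box $\bbox^X$ carries a \emph{single} positive conclusion edge, which I will call $e_X:\X$.

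For existence of a cut I would argue with no appeal to normality. The edge $e_X:\X$ is not a conclusion of $\R$ (the conclusions are empty), hence it has a target, i.e. it is a premise of some node. Inspecting the grammar of \Cref{fig:pn} in the atomic case, the admissible node sorts are $\ax,\cn,\weak,\cut,\bbox$, and among these only $\cut$ has a \emph{positive} premise: $\ax,\weak,\bbox$ have no premises at all, and the contraction $\cn$ takes two negative premises. Therefore $e_X$ is the positive premise of a cut node, which, being a cut on dual atomic formulas, has the form $\cut(\X,\Xn)$.

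For uniqueness I would use normality, and this is the only delicate point. Take any cut $\cut(\X,\Xn)$ supported by $X$. Its positive premise is a positive atomic edge, hence the conclusion of an $\ax$-node or of a box (the only nodes producing a positive conclusion). If it were the conclusion of an $\ax$-node, the configuration ``axiom conclusion meeting a cut'' would be an $\axrule$-redex (\Cref{fig:rewriting}), contradicting normality; so it is a box conclusion. By the bijection above the only box whose positive conclusion is labelled $\X$ is $\bbox^X$, whose unique positive conclusion is $e_X$. Thus every cut supported by $X$ has $e_X$ as positive premise, and since an edge is premise of at most one node there is at most one such cut. With the existence step this yields exactly one $\cut(\X,\Xn)$. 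The crux is precisely this step: without normality a positive atom could also be introduced by an axiom and cut elsewhere, producing several cuts on $X$, so the inline ``no $\axrule$-redex'' observation is exactly what pins every cut on $X$ to the single box $\bbox^X$, after which the ``one edge, at most one target'' fact closes the count.
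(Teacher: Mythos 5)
Your proof is correct: part (1.) is immediate from the empty conclusion; the atom-to-box bijection is exactly the paper's own reading of condition (2.) of \Cref{def:bpn} (with empty conclusions the ``negative in $\Delta$'' branch is void); existence follows because in an \emph{atomic} net the grammar of \Cref{fig:pn} leaves $\cut$ as the only node sort with a positive premise; and uniqueness follows because normality forbids the $\axrule$-redex of \Cref{fig:rewriting}, so every cut on $X$ must take as positive premise the unique positive conclusion of $\bbox^X$, which has at most one target. Your route, however, is not the one the paper (implicitly) follows. The paper states the lemma without proof and supplies its ingredients in the appendix: there one decomposes $\R$ into its boxes and the \MLL module $\nax{\R}$, observes that the interface conditions of \Cref{def:bpn} make $\nax{\R}$ \good (\Cref{lem:B_connected}), and then applies the appendix lemmas on \emph{normal wiring modules of empty conclusions}, which assert directly that each atom supports exactly one pending premise $\X$ and hence exactly one node $\cut(\X,\Xn)$; the key fact you re-derive inline via the redex argument --- in a normal net the positive premise of every cut is a box conclusion --- is precisely \Cref{lem:normal_net}. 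What your argument buys is self-containedness: a purely local inspection of node sorts and polarities, with no appeal to \goodness or to connected components. What the paper's route buys is reuse: the wiring-module lemmas also describe the \emph{shape} of each connected component (the negative premise of the cut is either pending, or the conclusion of a $\w$-node, or of a $\cn$-node over pending premises), and that extra structure is what the factorization procedure of \Cref{app:factorize} and the clique-tree correspondence actually consume, so the Cuts lemma falls out there as a by-product. One small caveat: the exhaustiveness of your case split (every atom is either a box output or negative in $\Delta$) is the paper's \emph{gloss} of condition (2.) rather than its literal text --- ruling out an atom living only on axiom/cut chains tacitly uses correctness --- so your bijection rests on the same paper-sanctioned, and harmless, reading.
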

Given a \bpn $\R$ we can produce  a factorized \bpn $\R_\omega$ which satisfies  the assumptions necessary to apply  \Cref{eq:cost} and \Cref{cor:marginal_cost}. 
The details of the construction are in \Cref{app:factorize}
\begin{theorem}[Factorized form of  $\R$ induced  by $\omega$]\label{thm:factorization} 
	Let  $\R$ be a normal \bpn of empty conclusion and \emph{$n$ internal atoms}. A total order   $\omega$   on   
$\At{\R}$ 
	induces a factorized \bpn $\R_\omega$ which is obtained from $\R$ by $\{\assoc,\ax \}$-expansions, so (for any \groundinterpretation $\iphi$)
	$\phisem {\R_\omega} = \phisem {\R}$,  and  $\bnet{\R_\omega} = \bnet {\R}$.
		$\R_\omega$ is a cut-net of \emph{at most $\mathbf{2 n}$ components}.
\end{theorem}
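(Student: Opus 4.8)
The two equalities are the easy part. Since $\assoc$ and $\ax$ are atomic rules, $\R_\omega$ is obtained from $\R$ by $\arule$-expansions only, hence $\R_\omega\Red[]^*\R$; the invariance of $\bnet{\cdot}$ under $\arule$-expansions then gives $\bnet{\R_\omega}=\bnet{\R}$, and the invariance of the PCoh interpretation under $\Red[]^*$ (stated in \Cref{sec:PCoh}) gives $\phisem{\R_\omega}=\phisem{\R}$ for every \groundinterpretation $\iphi$. So the genuine content is purely structural: that $\R_\omega$ is a factorized \bpn in the sense of \Cref{def:factorized} and has at most $2n$ components. By the Cuts Lemma, $\R$ being normal with empty conclusion has exactly $n$ boxes and exactly one node $\cut(\X,\Xn)$ per (internal) atom $X$.

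I would start from the trivial factorized form of \Cref{lem:trivial}: $\ax$-expanding the conclusions of all boxes produces $\R'=\D(\Ax(\R))$, a cut-net whose correction graph is a star centred at the single \wiring $\D=\overline{\nax\R}$ with the $n$ boxes as leaves. All the cut-structure is now concentrated inside $\D$, and the plan is to \emph{carve} $\D$ into smaller cut-free \wirings, guided by $\omega$. The obstruction to splitting $\D$ is that the $\cn$-trees — which, for each atom, gather its negative occurrences towards the cut $\cut(\X,\Xn)$ — of distinct atoms share the boxes and are therefore interleaved. Scanning the atoms in the order $\omega$, at the current atom $X$ I use $\assoc$-expansions to re-associate the relevant $\cn$-trees so that the material summed out together with $X$ hangs off the remainder of $\D$ along a single \emph{bundle} of edges; a corresponding bundle of $\ax$-expansions then detaches this material as a cut-free \wiring, inserting the bundle of cuts that links it to the rest.

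A bundle of parallel cuts is a single edge of the correction graph, and each detachment replaces one component by two components joined by one such edge while redistributing the former neighbours on the two sides; starting from the star and iterating, the correction graph stays a tree, so $\R_\omega$ is a genuine cut-net, factorized in the sense of \Cref{def:factorized}. For the count, there are exactly $n$ boxes, while each \wiring produced is the stage at which at least one atom is summed out, and distinct \wirings sum out disjoint subsets of the $n$ atoms; hence there are at most $n$ \wirings, for a total of at most $2n$ components. The main obstacle is the surgery just described: I must show that the $\assoc$-expansions can always present the material of the current $X$ as hanging off a single bundle — this is exactly where the total order $\omega$ decides which $\cn$-material is grouped — and that both fragments remain well-formed, the detached one a \good \wiring (\emph{one connected component, one atom}, verified through \Cref{thm:char}) and the remainder a \bpn still meeting the interface conditions. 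This delicate bookkeeping is the content of \Cref{app:factorize}.
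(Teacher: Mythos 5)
Your construction is, in outline, exactly the paper's: \Cref{app:factorize} processes the atoms in the order $\omega$ and, for the current atom $Z$, (i) selects the negative material attached to the nets whose conclusions mention $Z$, (ii) performs an $\assoc$-expansion on any contraction node that has some but not all of its premises in the selected region, and (iii) $\ax$-expands the boundary edges, so that the selected region detaches as a cut-free \wiring cut against the nets it governs --- this is your ``bundle'' surgery (the paper's operation $\El{Z}{\R}$). The component count is also the paper's: the boxes are unchanged ($n$ of them, one per atom), at most one \wiring is created per processed atom, hence at most $2n$ components; and the two semantic equalities follow, as you say, from invariance of $\phisem{\cdot}$ and $\bnet{\cdot}$ under $\arule$-expansion. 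The only cosmetic difference is that you first form the trivial star-shaped factorized form of \Cref{lem:trivial} and carve the single big \wiring, whereas the paper works directly with the decomposition $\pair{\Ax(\R)}{\nax\R}$ and lets the uncompleted module shrink.

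The gap is that the step you explicitly leave out --- that the surgery always goes through and leaves both fragments well-formed --- is not routine bookkeeping; it is the actual content of the paper's proof. The paper runs an induction along $\omega$ (\Cref{prop:induced}) maintaining three invariants: after each step the carved-out pieces form a set of \emph{factorized} \bpns (I1), the remaining \MLL material is a \emph{normal}, \good wiring module (I2), and no processed atom occurs in it (I3). Invariant I2 is what makes the next step well-defined at all: normality plus \goodness guarantee that the remainder has, per atom, a single connected component, containing the unique $\cut(\X,\Xn)$ and at most one $\cn$-node, so that a single $\assoc$-split suffices and the selection is canonical. And I2 is precisely what must be re-established after every detachment (\Cref{prop:transformation}), together with the fact that the output of the new \wiring is \repfree, so that the new sub-net is again a \bpn (\Cref{lem:transformation}). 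You correctly identify this as the main obstacle and name the right tool (\goodness, via \Cref{thm:char}), but asserting that the bookkeeping can be done is not doing it: without the invariants, the claim that the $\assoc$-expansions always isolate the material of the current atom behind a single bundle has no justification, since it can fail for a module that is no longer normal or no longer \good after earlier surgery.
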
	
The construction producing the  factorized \bpn 	$\R_\omega$ is given in \Cref{sec:factorization_algo}.  	$\R_\omega$  is a cut-net  of components $\Ax(\R_\omega) \cup \Wirings{\R_\omega}$, where $\Ax{(\R_\omega)}=\Ax(\R)$
	and each $\D\in  \Wirings{\R_\omega} $ is   \emph{cut-free}.



\section{BN's and Proof-Nets:  Inference}\label{sec:MP}

We conclude the paper by making explicit  the  tight correspondence between   factorized  interpretation of a \bpn, and  {exact} Bayesian   inference.

\subsubsection{Message Passing and Cliques Trees, in a nutshell}
\condinc{}{We first recall two  main such algorithms ---Variable Elimination and Message Passing.
 The  key insight  is to exploit the distributivity of sum (marginalization) over multiplication (product) to decompose the problem of inferring a marginal into smaller problems.
}

 The most widely used \emph{exact} inference algorithm is \emph{message passing over a clique tree}. 
Inference relies here on the compilation of a \BN into a data structure called a \emph{clique tree}. 
We recall some basic facts, and  refer \eg to \cite{DarwicheBook} (Ch.7) or \cite{KollerBook} (Ch.10) for details.
\SLV{}{\footnote{For the reader convenience, in \Cref{app:MP} we give a  brief summary, and  the  basic  (single traversal) Message Passing algorithm.}. }
 At  high-level, the algorithm partitions the graph into clusters  of variables (aka cliques). The interactions among cliques  have a tree structure. Inference of marginals is then performed by message passing over the  tree (essentially,  a traversal of the tree).

\condinc{}{At a high-level the clique tree algorithm partitions the graph into clusters of variables. The dependencies among variables can be quite complex, however, interactions among clusters will have a tree structure. 
}

\begin{definition}[Clique Trees]\label{def:clique_tree}
	Let  $\netB=(\netG, \aphi) $  be a \BN  over variables $\bX$. 
	A clique tree   for  $\netB$ (more precisely, for its DAG)
	is a pair $(\tree, \cC)$ where $\tree$ is a tree and $\cC$ is a function that maps each node $i$ of the tree $\tree$ into a subset $\cC_i \subseteq \bX$, called a \emph{clique}. Moreover,  for each $X\in \bX$:
	\begin{itemize}
		\item $\{X\}\cup \Pa{X} \subseteq \cC_i $, for some  $\cC_i$. \emph{(Family preservation)}. 
		\item  If $X$ appears in two cliques $\cC_i$ and $\cC_j$, it must appear in every clique  $\cC_k$ on the path connecting nodes $i$ and $j$. (\emph{Jointree property}).
	\end{itemize}
	The \textbf{width} of a clique tree is defined as the cardinality  of its largest clique  minus one.
\end{definition}
Assume  $\netB$ is a BN over $n$ variables $\bX$, and $(\tree, \cC)$ a clique tree for $\netB$ which has  $\BigO{n}$ nodes and width $w$. Then   for any $X\in \bX$, the message passing algorithm computes $\Pr(X)$, with a single traversal, and cost  $ \BigO{n~ \Exp{w}}$. 
\begin{remark}
It should be noted (but  we do not go so far here)  that by  exploiting the properties of clique trees,  message passing can   compute the marginal for  \emph{all  the $n$ variables} in the original \BN, with  only  two traversals of the  tree,
 at a \emph{total} cost  of  $ \BigO{n~ \Exp{w}} $.
\end{remark}
 Several  algorithms are available  to {compile} a BN into a clique tree. A way is  to generate  a clique tree 
  $\Cliques \netB \omega$  from a total  order $\omega$ over $\bX$. Such  $\omega$ is called an elimination order, which is  an important data structure in its  own right, with a specific notion of width\footnote{All width notions    are lower-bounded by  an intrinsic parameter of the BN.}. 
Poly-time algorithms  convert between cliques trees and elimination orders, while preserving the respective width, hence, the cost of computation. We stress that to compute  an order or a clique tree of \emph{minimal width} is  an NP-hard problem (with good heuristics).


\subsection{A proof-theoretical counter-part}\label{sec:pt}
We   make explicit  the correspondence between   factorized proof-nets and clique trees, and so the correspondence between    computing the (factorized) interpretation of a cut-net   interpretation (\Cref{thm:turbo})  and  
inference by a single traversal of message passing on the corresponding clique tree.

\subsubsection{Factorized proof-nets as clique trees}
Recall that to a cut-net (so in particular to a factorized \bpn) is   associated  a (correction)  tree (\Cref{def:cut-structure}).
\newcommand{\patht}{\mathfrak t}
\begin{proposition}[Clique tree] \label{fact:tree} Let $\R$ be a   factorized \bpn, and $\tree$  the restriction of its correction graph  $\tree_\R$ to $\Wirings{\R}=\{\D_1, \dots, \D_n\}$. 
	The tree $\tree$   equipped with the function $\cC$ which  maps each $\D_i$ into  $\At{\D_i}$ is a \emph{ clique tree}.
\end{proposition}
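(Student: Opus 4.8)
The plan is to verify, for the pair $(\tree,\cC)$ with $\cC(\D_i)=\At{\D_i}$, the four defining requirements of a clique tree (\Cref{def:clique_tree}). First I would dispose of the three routine ones. That $\cC(\D_i)=\At{\D_i}\subseteq\bX=\At\R$ is immediate. For the tree condition, recall that $\R$, being a factorized \bpn, is a cut-net, so its correction graph $\tree_\R$ on the components $\Ax(\R)\cup\Wirings\R$ is a tree (\Cref{def:cut-structure}). In the factorized form every box is cut, as a single bundle, against exactly one wiring, namely the root $\D$ of the application $\D(\N_1,\dots,\N_h)$ in which it occurs as a direct argument (\Cref{def:wiring_cut}); hence every box is a \emph{leaf} of $\tree_\R$. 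Deleting all these leaves yields precisely the induced subgraph on $\Wirings\R$, namely $\tree$, which is therefore again a tree (the wirings form the connected skeleton of the factorization).

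For family preservation, fix $X\in\bX$ and consider its box $\bbox^X$, whose conclusions are $\X,\Yn_1,\dots,\Yn_k$ with $\Pa X=\{Y_1,\dots,Y_k\}$ (\Cref{def:bpn}). Let $\D_i$ be the unique wiring against which $\bbox^X$ is cut. Each cut pairs a conclusion of the box with a conclusion of $\D_i$ of dual polarity: the positive $\X$ is cut against a negative conclusion $\Xn$ of $\D_i$, and each negative $\Yn_j$ against a positive conclusion $\Y_j$ of $\D_i$. Hence $X\in\At{\D_i}$ and $Y_j\in\At{\D_i}$ for all $j$, so $\{X\}\cup\Pa X\subseteq\At{\D_i}=\cC(\D_i)$.

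The genuine content is the jointree (running intersection) property, which I expect to be the main obstacle. Set $S_X=\{\,C\in\Ax(\R)\cup\Wirings\R \mid X\in\At C\,\}$. I would prove that $S_X$ induces a \emph{connected} subgraph of $\tree_\R$. Granting this, the property follows: given two wirings $\D_i,\D_j$ with $X\in\At{\D_i}\cap\At{\D_j}$, both lie in $S_X$; since the boxes in $S_X$ are leaves of $\tree_\R$, the unique $\tree_\R$-path between $\D_i$ and $\D_j$ passes only through wirings and hence coincides with the $\tree$-path; and as $S_X$ is connected it contains this entire path, so every intermediate wiring lies in $S_X$, i.e.\ contains $X$.

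To establish connectedness of $S_X$ I would analyse the flow of the atom $X$. Let $H_X$ be the subgraph of $\R$ consisting of all edges supported by $X$ together with the $X$-labelled nodes ($\ax$, $\cn$, $\weak$, $\cut$ on $X$) and the boxes having $X$ in their conclusions. The key point is that $H_X$ is \emph{connected}: inside any single wiring the edges supported by $X$ form one connected component by \goodness (\Cref{thm:char}, \Cref{lem:connected}), and these per-wiring pieces are glued across the cut-net exactly by the cuts on $X$, which emanate from the unique positive source $\X$ produced by $\bbox^X$ and reach, through contractions and cuts, the negative occurrences of $X$ at consumer boxes and weakenings, polarized acyclicity (\Cref{lem:pol_correct}) forbidding a disconnected island. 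Since $H_X$ meets a component $C$ iff $X\in\At C$, and passes from one component to another only through a cut on $X$ (an edge of $\tree_\R$), the connectedness of $H_X$ transfers to $S_X$. The delicate step I would spell out in full is precisely this connectedness of $H_X$, which rests on the uniqueness of the positive source $\bbox^X$ guaranteed by interface condition~(2) of \Cref{def:bpn} together with the \goodness of every wiring.
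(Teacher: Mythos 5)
Your proposal has the same architecture as the paper's proof: the routine conditions are dispatched directly (your observations that the boxes are leaves of $\tree_\R$ and that each box is cut, with all its conclusions, against a single wiring are correct and settle the tree condition and family preservation), and the genuine content is the jointree property, obtained by transferring connectivity of the $X$-supported part of $\R$ to the correction tree and walking along the unique tree path. The paper does exactly this: its appendix lemma takes two edges $e_i\in\D_i$, $e_j\in\D_j$ supported by $X$, produces a path between them all of whose edges are supported by $X$, reflects it into $\tree_\R$, and concludes by induction along that path.

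The one place where your text is not yet a proof is the step you yourself flag: the connectedness of $H_X$. You try to assemble it from the \goodness of each \emph{individual} wiring, glued across the cut-net by the cuts on $X$, with an appeal to the unique positive source and to ``polarized acyclicity forbidding a disconnected island'' --- that last phrase is doing real work that is never spelled out, since per-wiring \goodness by itself says nothing about how the $X$-pieces sitting in different wirings meet. The missing fact is already packaged in the paper: by \Cref{lem:B_connected}, the interface conditions of a \bpn are \emph{equivalent} to the \goodness of the \emph{whole} module $\nax{\R}$, not merely of each wiring; and \goodness of $\nax{\R}$ says precisely that any two edges supported by $X$ are connected, by a path whose edges are all supported by $X$ (connected components of an atomic \MLL module are supported by a single \atom, \Cref{lem:connected}). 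Replacing your gluing argument by a single appeal to \Cref{lem:B_connected} applied to $\R$ closes the gap and makes your proof coincide with the paper's. Your intended re-derivation of this global \goodness from local \goodness plus interface condition (2) is completable --- one can show each wiring's $X$-piece has a unique initial negative edge and chase these back to the unique source $\bbox^X$ --- but it amounts to re-proving the content of \Cref{thm:char} and \Cref{lem:B_connected}, which the paper already provides.
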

\begin{proof}
The key property to prove is the jointree property, which follows from the \goodness of the \MLL module  $\nax{\R}$
(see Appendix).
\end{proof}
\noindent

\SLV{}{
	All properties in \Cref{def:clique_tree} are easy to verify.
	The jointree property is  given by the following Lemma.
	
	\begin{lemma}Let $\tree(\R)$ be as in \Cref{fact:tree}.
		If $X\in \At{\D_i}$ and $X\in \At{\D_j}$, then $X\in \At{\D_k}$, for each  $\D_k$ which is in the path between $\D_i$ and $\D_j$ in $\tree(\R)$.
	\end{lemma}
	
	\begin{proof}
		Notice that $\tree$ is a tree  which has for nodes $\Wirings{\R}=$ and an edge  $(\D_i, \D_j)$ exactly when there is a cut between $\D_i$ and $\D_j$.
		
		The claim follows from the fact that $\nax{\R}$ is  \good (\Cref{lem:B_connected}  	).
		Let $e_i$ and $e_j$ be two edges supported by $X$, with  $e_i$ in  $\D_i$  and $e_j$
		in $\D_j$.  Since $\nax{\R}$ is \good,   in $\R$ there is path $\mathfrak r$ between $e_i$ and $e_j$ whose edges are all supported by $X$. Such a path 
		is necessarily reflected into a path $\patht $ in $\T_{\R}$.

		Since  $\T_{\R}$ is a tree, there is only one path between $ \D_i $ and  $\D_j$. We prove (2.) by induction on the path $\patht$. 
		If  $i=j$, the claim is trivially true. Otherwise,  let $\patht=\D_i,\D_h, \dots, \D_j$. 
		Necessarily, in the cut-net  $\R$ there is a  $\cut(\X,\Xn)$ connecting  the net $\D_i$ and  the net $\D_h$, and so $X\in \At{\D_h}$.
		We conclude by \ih.
		
		
	\end{proof}
}
\condinc{}{
	\begin{proposition}[A factorized \bpn  as a clique tree]  Let $\R$ be a \factorized \bpn. Let $\tree$ be the restriction of the tree $\tree(\R)$ to $\Wirings{\R}=\{\D_1, \dots, \D_n\}$,
		as in \Cref{fact:tree}. 
		The tree $\tree$   equipped with the function $\cC$ which  maps each $\D_i$ into  $\At{\D_i}$ is a \emph{ clique tree} (\Cref{def:clique_tree}).
		\SLV{}{	\begin{enumerate}
				
				\item The tree $\tree_{\R_\omega}$, equipped with the function $\bC$ which  maps  $\D_i$ into  $\Var{\D_i}$ is a\textbf{ clique tree}.
				
				\item If the  \wiring $\D_j$ has an input $\Delta_i$ which is  conclusion of a \wiring $\D_j$, then  $\bC_i \cap\bC_j  = \At{\Delta_i}$. 
				
			\end{enumerate}
		}
	\end{proposition}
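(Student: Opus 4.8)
The plan is to verify the three requirements of \Cref{def:clique_tree} in turn---that $\tree$ is a tree, family preservation, and the jointree property. Throughout I identify, as in \Cref{sec:bpn_BN}, each box $\bbox^X$ with the variable $X$, and I use the shape of a factorized \bpn: every component that is not a \wiring is a single box $\bbox^X$, whose conclusions are $\X$ together with negative conclusions $\Yn_1,\dots,\Yn_k$ with $\Pa X=\{Y_1,\dots,Y_k\}$; and, by \Cref{def:wiring_cut}, \emph{all} these conclusions are cut against conclusions of \emph{one and the same} parent \wiring.

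First I would show $\tree$ is a tree. By \Cref{def:cut-structure} the correction graph $\tree_\R$ is a tree with node set $\Ax(\R)\cup\Wirings{\R}$. Since all conclusions of a box are cut to a single \wiring, every box is a \emph{leaf} of $\tree_\R$, adjacent to exactly that \wiring; moreover, unfolding \Cref{def:wiring_cut} shows that each non-root \wiring is cut to its parent \wiring. Hence deleting the box-leaves one at a time (each being a leaf at the moment of deletion) leaves the \wirings\ as a connected acyclic subgraph, so $\tree$ is a tree (the degenerate case $\Wirings{\R}=\emptyset$, i.e.\ a single box, being vacuous). Family preservation is then immediate: fix $X\in\At\R$ and let $\D$ be the parent \wiring of $\bbox^X$. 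Each conclusion of $\bbox^X$ is cut, via some $\cut(\Z,\Zn)$, to a conclusion of $\D$ supported by the same atom $Z$; applying this to $\X,\Yn_1,\dots,\Yn_k$ gives $\{X\}\cup\Pa X=\{X,Y_1,\dots,Y_k\}\subseteq\At\D$, which is exactly the clique $\cC$ assigns to $\D$.

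The main obstacle is the jointree property, and here I would use crucially that $\nax\R$ is \good\ (\Cref{lem:B_connected}). Suppose $X\in\At{\D_i}$ and $X\in\At{\D_j}$, so that $\D_i$ and $\D_j$ each carry an edge supported by $X$. Let $W\subseteq\Wirings{\R}$ be the set of \wirings\ carrying an $X$-supported edge. I claim $W$ induces a \emph{connected} subgraph of $\tree$. Indeed, given two $X$-supported edges, \goodness\ connects them by a path in $\nax\R$, and by \Cref{lem:connected} every edge of this path is supported by $X$; in particular every $\cut$ it crosses is some $\cut(\X,\Xn)$, and such a cut joins two \emph{distinct} \wirings, each of which then carries an $X$-edge (the portion of the path inside it) and is therefore in $W$ and tree-adjacent to the other. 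Thus the path induces a walk in $\tree$ whose vertices all lie in $W$ and whose consecutive vertices are adjacent, so $W$ is connected. Since $\tree$ is a tree, a connected subgraph containing $\D_i$ and $\D_j$ contains the unique $\D_i$--$\D_j$ path; hence every \wiring\ on that path lies in $W$, i.e.\ carries an $X$-edge, which is the jointree property.

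The delicate point to get right is precisely the passage from an (undirected) $X$-path in $\nax\R$ to a walk in $\tree$: one must check that such a path can leave a \wiring\ only by crossing a genuine \wiring--\wiring\ cut. This holds because in $\nax\R$ the box--\wiring\ cuts have lost their box premise, and boxes---having no premises---cannot be traversed, so the sequence of \wirings\ the path visits is well defined and consists of $X$-carrying, pairwise tree-adjacent components. Once this is established the three clauses of \Cref{def:clique_tree} are all in place, and $(\tree,\cC)$ is a clique tree.
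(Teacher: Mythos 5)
Your proof is correct and takes essentially the same route as the paper: the only non-trivial clause, the jointree property, is derived from the \goodness of $\nax{\R}$ (\Cref{lem:B_connected}) by taking a path of $X$-supported edges and lifting it to a walk in $\tree$ whose crossing cuts are all of the form $\cut(\X,\Xn)$, which is exactly the paper's appendix argument. The differences are only presentational: you spell out tree-ness of $\tree$ and family preservation (which the paper dismisses as easy to verify, including the point that monochromatic paths cannot traverse boxes), and you phrase the lifting as a connected-subgraph argument where the paper uses induction along the unique tree path.
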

	
}

\vspace*{-4pt}
\subsubsection{PCoh interpretation and Message Passing}
Given the above correspondence, 
one can easily see  that the  interpretation  by \Cref{thm:turbo}, and the message passing algorithm  (recalled in \Cref{app:MP})) on the corresponding clique tree behave similarly, computationally.  Let us be more precise.

Let $\R $ be a normal \bpn  of empty conclusion.
Let   $\omega $  be  a complete order on $\At{\R}=\{ X_1, \dots,X_n\}$, and  $\R_\omega$  the factorized form of $\R$ induced by $\omega$. 
Fixed a \groundinterpretation  $\iphi$:
\begin{itemize}
	\item to  $\R$ and  $\R_\omega$ correspond  the same BN  $\netB= \Bnet{\R,\iphi} = \Bnet{\R_\omega,\iphi}$
	and the same probability $\Pr_{\R}=\Pr_{\R_\omega}$.
	\item  for any arbitrary $Y\in \At{\R}$,  
	the  cost of computing $\Pr_\R(Y) $ via factorization in $\R_\omega$ (\ie  the cost of computing $  \phisem {\N_Y} $,
	for $\N_Y=\deW{\R_\omega}{Y}$) is
\[\BigO {n \cdot \Exp {\Width{\R_\omega}}}\]
\end{itemize}
Let us compare with (single traversal) message passing over the clique tree $\Cliques \netB \omega$  induced by $\omega$
on $\netB= \Bnet{\R,\iphi}$. 
	Its width is indeed an upper bound for $\Width{\R_\omega}$, so  that the  cost is of the same order (details in \Cref{app:Binference}).
\begin{prop}\label{prop:cost} With the same assumptions as above, let $w$ be  the width of the clique tree 
	$\Cliques \netB \omega$  induced by $\omega$, then:
	\begin{center}
		$\Width{\R_\omega}$ is no greater than   $w$.
	\end{center}
 
	%
\end{prop}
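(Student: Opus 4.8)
The plan is to read $\R_\omega$ as a clique tree via \Cref{fact:tree} and then compare it, clique by clique, with $\Cliques{\netB}{\omega}$. By \Cref{fact:tree}, the restriction of the correction graph of $\R_\omega$ to $\Wirings{\R_\omega}=\{\D_1,\dots,\D_n\}$, equipped with the map $\D_i\mapsto\At{\D_i}$, is a clique tree for $\netB=\Bnet{\R,\iphi}$; its width in the sense of \Cref{def:clique_tree} is exactly $\max_i|\At{\D_i}|-1$, which by \Cref{def:width_R} is $\Width{\R_\omega}$. Hence the proposition amounts to saying that the cliques read off from $\R_\omega$ are never larger than the largest clique produced by variable elimination along the \emph{same} order $\omega$.

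First I would associate to each wiring $\D_i\in\Wirings{\R_\omega}$ the internal atom $X_i$ whose node $\cut(\X,\Xn)$ it processes, and match it with the step at which $X_i$ is eliminated along $\omega$. By \Cref{thm:factorization}, $\R_\omega$ is built by processing one cut per internal atom, in the order $\omega$, producing one cut-free wiring per step (using that, for a normal \bpn of empty conclusion, every atom is internal and carries a unique cut); on the Bayesian side, $\Cliques{\netB}{\omega}$ eliminates $X_1,\dots,X_n$ in the same order, producing one clique $\cC_i$ per step. I would then prove, by induction on $i$, that $\At{\D_i}\subseteq\cC_i$. The inductive invariant is that after the first $i$ eliminations the atoms still \emph{active} in the partially factorized net --- those occurring in a not-yet-processed cut or in a conclusion --- coincide with the variables that variable elimination keeps live, and that the atoms gathered into $\D_i$ are precisely $X_i$ together with the active atoms connected to $X_i$ inside $\nax{\R}$.

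Here the \goodness of $\nax{\R}$ (\Cref{lem:B_connected}, \Cref{lem:connected}) is the pivot: two edges share a support atom if and only if they are joined by a path entirely supported by that atom, so the connectivity that determines $\At{\D_i}$ in the wiring mirrors, step for step, the fill-in edges that variable elimination introduces when it multiplies the factors mentioning $X_i$ and sums $X_i$ out. Consequently $\At{\D_i}$ consists of $X_i$ together with exactly its neighbours in the moralized, $\omega$-triangulated graph at the moment $X_i$ is eliminated, which is the clique $\cC_i$ (up to the harmless shrinking that $\MLL$ mix may allow). Passing the containment $\At{\D_i}\subseteq\cC_i$ through the maximum gives $\max_i|\At{\D_i}|\le\max_i|\cC_i|$, that is $\Width{\R_\omega}\le w$, as required.

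The hard part will be the bookkeeping in the inductive step: one must check, against the explicit construction of \Cref{sec:factorization_algo}, that bundling the cuts for $X_i$ neither enlarges $\At{\D_i}$ beyond $\cC_i$ nor drops any fill-in atom --- equivalently, that the active-atom set maintained on the proof-net side is exactly the neighbour set maintained by variable elimination. This is precisely where the jointree property already obtained in \Cref{fact:tree}, together with \goodness, converts the graph-theoretic invariant of the \MLL wiring into the elimination-clique combinatorics.
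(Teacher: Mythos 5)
Your proposal takes, at the top level, the same route as the paper, whose stated proof is essentially the one-line observation that ``$\R_\omega$ has a similar structure to that of $\Cliques \netB \omega$, and so a similar width'': you match each wiring $\D_i\in\Wirings{\R_\omega}$ with the step of \Cref{alg:VE} at which its atom is eliminated, establish the per-node containment $\At{\D_i}\subseteq\cC_i$, and pass to maxima (reading $\R_\omega$ as a clique tree via \Cref{fact:tree} to identify $\Width{\R_\omega}$ with the tree width is also how the paper sets things up). Where you genuinely diverge is the mechanism for the containment. The paper's own detailed argument (present in the source as a suppressed lemma) is \emph{not} a step-by-step simulation of variable elimination: it first notes that the CPTs whose boxes are inputs of $\D_i$ are exactly the CPTs consumed by \Cref{alg:VE} at step $i$, and then, for an atom $X$ in the output of $\D_i$ feeding $\D_j$, observes that $X$ occurs in boxes attached to some $\D_h$ with $h\le i$ and to some $\D_k$ with $k>i$, hence $X\in\cC_h$ and $X\in\cC_k$, and concludes $X\in\cC_i\cap\cC_j$ by the \emph{jointree property of} $\Cliques \netB \omega$ --- a path argument on the VE clique tree, with no invariant maintained along $\omega$.

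There is also a concrete flaw in your invariant as stated: the atoms still active in the partially factorized net do \emph{not} coincide with the variables VE keeps live; in general they form a proper subset. This is exactly the subtlety the paper isolates in the remark following the proposition in the appendix: a single wiring can swallow the cuts of \emph{several} atoms at once, after which $\Nets_{X_j}$ is empty when $X_j$'s turn in $\omega$ arrives and that step is simply skipped on the proof-net side, whereas VE still produces a clique $\cC_j$ and keeps $X_j$ live until then. So the slack is structural, and it is not due to mix, to which you attribute it; for the same reason your claim that $\At{\D_i}$ is \emph{exactly} $X_i$ plus its neighbours in the moralized $\omega$-triangulated graph must be weakened to an inclusion. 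Since the target inequality $\Width{\R_\omega}\le w$ only needs inclusions, your argument can be repaired by replacing every ``coincide/exactly'' with ``$\subseteq$''; but as written the induction would break at the first skipped or bundled step. Finally, note that the paper defines $\cC_i$ operationally, as the scope of the product $\psi^i$ formed by \Cref{alg:VE}, not via triangulation; the identification of that scope with the fill-in neighbourhood is standard in the BN literature but is an extra bridge your proof would have to supply, whereas the paper's path argument never needs it.
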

\SLV{}{\begin{proof}$\R_\omega$ has a similar structure to that of $\tree(\netB, \omega)$, and so a similar width. A formal proof of this result is  straightforward. 
\end{proof}
}

		%
		%
		

\condinc{}{
		\begin{corollary}[Optimality] Let $\R$,  $\omega$, $w$,  $\iphi$, 
			$\netB$ as stipulated above. Let 
			$\R_\omega$ be the factorized form of $\R$ induced by $\omega$. 
			
			 Given  $Y\in \At{\R}$,  let $ \R_Y=\deW{\R_\omega}{Y}$ 
			The  time and space cost of computing the interpretation $\phisem {\R_Y}$ is $\BigO{n \cdot \Exp {w}}$, 
			which is 	the same  cost as  computing  $\Pr_{\netB}(Y)$ by running \Cref{alg:messages} over the  clique tree 
			$\tree(\netB, \omega)$ (\Cref{fact:mpVE}).
		\end{corollary}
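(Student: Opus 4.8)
The plan is to read the cost of $\phisem{\R_Y}$ off the results already proved for factorized \bpns, and then to match it against the standard cost of message passing on $\tree(\netB,\omega)$. The first point is that the two sides compute the \emph{same number}: by \Cref{thm:any_marginal1}, $\R_Y=\deW{\R_\omega}{Y}$ is again a factorized \bpn, now with the single conclusion $\Y$, and by \Cref{prop:bpn_vs_bn} its interpretation $\phisem{\R_Y}$ is exactly the marginal $\Pr_\R(Y)=\Pr_\netB(Y)$, since $\bnet{\cdot}$ is $\OShow$-invariant and $\Bnet{\R_\omega,\iphi}=\netB$. Hence only the \emph{costs} remain to be compared.

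For the proof-net side I would instantiate \Cref{cor:marginal_cost} at $\R:=\R_\omega$. This is legitimate: $\R_\omega$ is a factorized \bpn of empty conclusion (the construction of \Cref{thm:factorization} preserves the conclusions of $\R$), and its hypothesis $\size{\Wirings{\R_\omega}}=\BigO n$ holds because, by \Cref{thm:factorization}, $\R_\omega$ is a cut-net of at most $2n$ components, so that $m_{\R_\omega}\le 2n-1=\BigO n$. Concretely, \Cref{thm:any_marginal1} gives $\Width{\R_Y}=\Width{\R_\omega}$ and $m_{\R_Y}=m_{\R_\omega}$ (the operation $\OShow$ alters neither the set of components nor any clique), and the factorized cost formula then bounds the time and space for $\phisem{\R_Y}$ by $\BigO{m_{\R_Y}\cdot\Exp{\Width{\R_Y}}}=\BigO{n\cdot\Exp{\Width{\R_\omega}}}$. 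Feeding in the width estimate $\Width{\R_\omega}\le w$ of the preceding proposition turns this into $\BigO{n\cdot\Exp w}$.

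For the inference side I would recall, via \Cref{fact:tree}, that the clique tree attached to $\R_\omega$ has the wirings of $\R_\omega$ for nodes and the sets $\At{\D_i}$ for cliques; identifying it with the elimination-order clique tree $\Cliques{\netB}{\omega}=\tree(\netB,\omega)$, whose width is $w$ by definition and which has $\BigO n$ nodes, the single-traversal message-passing fact recalled in this section computes $\Pr_\netB(Y)$ in time and space $\BigO{n\cdot\Exp w}$. The two orders coincide, which is precisely the asserted equality of costs.

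The only genuinely non-routine ingredient is the width estimate $\Width{\R_\omega}\le w$, already discharged by the preceding proposition; everything else is the synthesis of the invariance of $m$ and $\Width$ under $\OShow$, the linear bound on the number of cut-net components, and the factorized cost formula. The care I would take is to check that the two $\BigO n$ factors have matching provenance---the number $m_{\R_\omega}$ of cut-net components on the proof-net side, and the number of clique-tree nodes on the inference side---and that on both sides the exponential factor is controlled by the single parameter $w$, so that ``the same cost'' is a genuine equality of asymptotic orders rather than a one-sided inequality.
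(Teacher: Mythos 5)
Correct, and essentially the paper's own route: you assemble precisely the ingredients the paper's surrounding text relies on for this corollary---\Cref{thm:factorization} for $m_{\R_\omega}\leq 2n-1=\BigO{n}$, the $\OShow$-invariance of width and $m$ (\Cref{thm:any_marginal1}, \Cref{thm:any_marginal2}), the factorized cost bound of \Cref{cor:marginal_cost}, the comparison $\Width{\R_\omega}\leq w$ of \Cref{prop:cost}, and the single-traversal message-passing cost recalled in \Cref{fact:mpVE}. One small caution: since the width comparison is one-sided (and, as the paper remarks, $\R_\omega$ may even have fewer wirings than the clique tree has nodes, because $\Nets_{X_i}$ can be empty), ``the same cost'' can only mean that both computations share the upper bound $\BigO{n\cdot\Exp{w}}$; your closing claim of a genuine equality of asymptotic orders asserts slightly more than the available $\leq$ supports, though the corollary as stated requires no more.
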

}



\condinc{}{

\paragraph{Discussion}
The cost   for the interpretation does not take into account the preliminary  factorization of $\R$. Notice however that this transformation (which is only concerned with the proof-net, not with  the quantitative information) needs only to be performed once---it is comparable to the compilation of a \BN into a clique tree. 

Cost-wise, the  critical aspect in   the factorization process is  the quality of the factorization (its width), which depends on the choice of the order. Again the problem is analogous to computing a good (ideally, optimal) elimination order---we  rely on the same algorithms. Agai nnotice that---like for message passing on clique trees---to produce a factorized form we only need a single order. Inference of  all the marginals can be   performed on the \emph{same} factorized \bpn. 

Furthermore, borrowing from message passing, we can adapt turbo interpretation in order to compute  all marginals $\sem{\R_Y}$, for all $X\in \bX$,
with a total cost that is only twice that of computing a single marginal.}

\begin{remark}
Borrowing from message passing, we could now    adapt \Cref{thm:turbo} so to compute  all marginals $\phisem{\R_X}$, for  $X\in \bX$,
with a total cost  only twice that of computing a single marginal.
\end{remark}

\section{Conclusions}
In this paper, we have developed a  \emph{proof-theoretical account} of Bayesian inference.
%
Linear Logic proof-nets are a graph representation of proofs and of their dynamics (cut-elimination), and ---via the Curry–Howard correspondence between  proofs and  programs---a graph representation of programs and of their execution.
Multiplicative proof-nets equipped with their PCoh interpretation  appear as a uniform setting in which is  possible both to represent probability distributions, and to perform inference---by using as data   structure the proof-net itself.
The following table summarizes  the   relationships  which we uncovered. 
\SLV{}{The following table summarizes  the relations  which we uncovered in this paper, somehow in the spirit
 of the 
Curry–Howard correspondence  between  proofs and  programs,  which involves both  the representation (proof-as-program, formula-as types) and  also---crucially---the computation (cut-elimination in proof theory and execution for programs).}

{\footnotesize \begin{center}
	\begin{tabular}{|c|c|}
	\hline
\MLLAx Proof-Nets $\R$	& Bayesian Structures $\netG$\\
\hline 
PCoh \groundinterpretation   $\iphi$ &  assignment of CPT's $\aphi$ \\
&  (Conditional Probabilities)\\
\hline
$(\R,\iphi)$  & Bayesian Network$  (\netG,\aphi) $ \\
\hline 
PCoh Interpretation	& Inference  \\
\hline
Conclusions of $\R$	&  Marginals \\
	\hline
Cut-Net	&  Clique Tree  \\
	\hline
Factorized   Interpretation  & Message Passing 	\\
of a   cut-net  & over a clique tree\\
\hline
\goodness	& jointree property \\
\hline
\end{tabular}
\end{center}
}

\bibliographystyle{amsplain}
\bibliography{biblioB}

\newpage
\appendix

\section*{APPENDIX}

\section{Multiplicative Linear Logic with Boxes}

We point out some  properties which  we use in the proofs.
\begin{remark}\label{rk:subnet}
	Notice that  a sub-graph of a switching acyclic module is  switching acyclic. 
\end{remark}

\begin{remark}\label{rk:normal_atomic}
	If 	  $\ProofN$ is a   normal \pn with atomic conclusions, then $\ProofN$ is  atomic.
\end{remark}

\begin{property}[normal proof-nets ]\label{lem:normal_net}
	Let  $\R:\Delta$ be a  \emph{normal} \pn. Then 
	(1.)  the premises of each $\cut$-node are atomic, 
	(2.) the  positive premise  of each cut-node is  conclusion of a $\Ax$-node. 
\end{property}

\condinc{}{
	\begin{lemma*}[\ref{lem:pol_correct} Polarized correctness] For $\M$  an atomic  \emph{raw} module, we denote by  $\pol{\M}$  
		the  graph which has the same nodes and edges  as $\M$,  but where the edges are directed 
		\SLV{}{according to their polarity:} downward if positive,  upwards if negative. 	
		The following are equivalent: (1.)
		$\M$ is acyclic correct, (2.)
		$\pol \M$   is a DAG.
	\end{lemma*}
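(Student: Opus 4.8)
The plan is to reduce this equivalence to a purely combinatorial fact about out-degrees in the polarized orientation. First I would compute, node by node, the out-degree of each node of an \emph{atomic} pre-module in $\pol{\M}$, fixing the convention that positive edges point downwards and negative edges upwards. Since $\M$ is atomic, no $\parr$-, $\otimes$- nor unit-nodes occur, so the only nodes present are $\ax$, $\cut$, $\bn$, $\w$ and $\cn$. A direct inspection gives: an $\ax(\X,\Xn)$ has its positive conclusion $\X$ leaving it and its negative conclusion $\Xn$ entering it (out-degree $1$); a $\cut(\X,\Xn)$ has its positive premise entering and its negative premise leaving (out-degree $1$); a box $\bn$ with conclusions $\X,\Yn_1,\dots,\Yn_k$ has only $\X$ leaving it, all the $\Yn_i$ entering (out-degree $1$); a $\w$-node absorbs its single negative conclusion (out-degree $0$); and a $\cn$-node, whose two premises and single conclusion are all negative, has the two premises leaving it and the conclusion entering it (out-degree $2$). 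Hence the \textbf{contraction nodes are exactly the nodes of out-degree $>1$ in $\pol{\M}$, and their two out-edges are precisely their two switching-premises}.

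The combinatorial core I would then isolate is the following elementary lemma: if $C$ is a simple undirected cycle each of whose edges carries a direction such that every vertex of $C$ is the tail of at most one edge of $C$, then $C$ is a directed cycle. The proof is a one-line counting argument: a cycle on $k$ vertices has $k$ edges, each edge is outgoing from exactly one of its two endpoints, so the $k$ outgoing incidences are distributed over $k$ vertices with at most one each, forcing exactly one per vertex; following outgoing edges then traverses $C$ coherently.

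With these two ingredients both implications follow by contraposition. For $(2)\Rightarrow(1)$ I would assume $\M$ has a (simple) switching cycle $C$ and orient its edges by $\pol{\M}$: at a $\cn$-node traversed by $C$ the switching condition forbids using both premises, so, since $C$ uses exactly two incident edges there, it enters through the conclusion and leaves through exactly one premise, contributing a single out-edge; at every other node the out-degree in $\pol{\M}$ is $\le 1$, so again at most one incident cycle-edge is outgoing. The lemma then makes $C$ a directed cycle of $\pol{\M}$, contradicting $(2)$. For $(1)\Rightarrow(2)$ I would take a (simple) directed cycle $C$ in $\pol{\M}$; at each $\cn$-node on $C$ the unique out-edge used by $C$ is one of its two premises, so $C$ uses at most one premise per contraction, \ie $C$ is a switching cycle, contradicting $(1)$.

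The one genuinely delicate step is the first: the node-by-node verification that $\cn$ is the only source of branching in $\pol{\M}$ and that its two out-edges coincide with its two switching-premises. Everything afterwards is bookkeeping once the orientation conventions are fixed; in particular the restriction to atomic modules is exactly what removes $\parr$ (the other potential branching node) and lets the switching-premises of every node be read off uniformly as its out-edges. I would therefore present the out-degree computation as the key observation and keep the rest as the two short contrapositive arguments above.
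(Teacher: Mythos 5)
Your proof is correct and takes essentially the same route as the paper's: the decisive step in both is the node-by-node out-degree inspection of $\pol{\M}$ showing that, in an atomic module, the contraction node $\cn$ is the only possible source of branching (the paper's proof also lists $\parr$, which atomicity excludes here) and that its out-edges are exactly its switching premises. The paper merely packages $2.\Rightarrow 1.$ as ``a DAG in which every node has at most one outgoing edge is a tree, hence every switching graph is a tree'' and dismisses $1.\Rightarrow 2.$ as immediate, whereas you obtain both directions by contraposition through your (correct) cycle-orientation counting lemma --- a more explicit bookkeeping of the same argument, not a different one.
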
		
	
	{\begin{proof}We only prove $2. \Rightarrow 1.$ (the other direction is immediate). Recall that a DAG where each node has at most an outcoming edge is a tree. By inspecting the grammar of nodes, we observe that, w.r.t. the \emph{polarized orientation},
			only $\parr$ and $@$ nodes may have more than one outcoming    edges. Hence each switching graph is a tree. 
	\end{proof}}
}

\subsection{Sequent Calculus for \MLLAx}\label{app:sequent}
\Cref{fig:sequentialization} gives the rules to generate the sequent calculus derivations for \MLLAx. It also gives, for each derivation 
 $ \pi $ its  image $\N$ as a proof-net. Reading the figure right to left, we have a   sequentialization of the proof-net $\N$.
\begin{figure}
	\centering
	\includegraphics[page=5,width=1\linewidth]{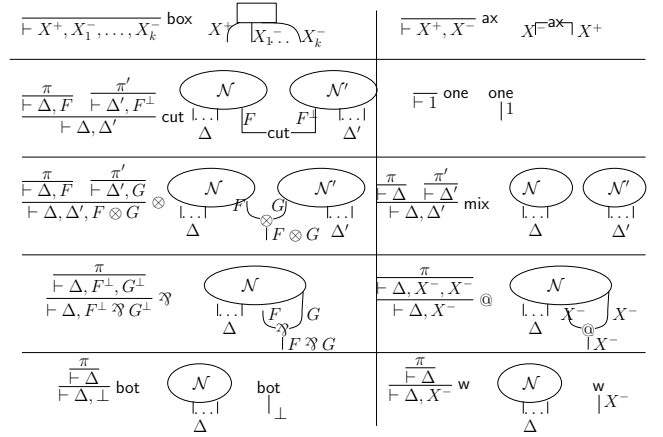}
	\caption{Sequent Calculus, Proof-Nets, and Sequentialization}
	\label{fig:sequentialization}
\end{figure}

\condinc{}{
$ \infer[\mathsf{box}]{\vdash\X,\Xn_1,\dots,\Xn_k}{} $

$ \infer[\ax]{\vdash \X,\Xn}{} $

$ \infer[\mathsf{one}]{\vdash \One}{} $

$\infer[\mathsf{cut}]{\vdash \Delta, \Delta'}{\infer{\vdash  \Delta, F}{\pi}& \infer{\vdash \Delta',F^\bot}{\pi'}} $

$\infer[\otimes]{\vdash \Delta, \Delta', F\otimes G}{\infer{\vdash  \Delta,F}{\pi}& \infer{\vdash \Delta',G}{\pi'}}  $

$\infer[\mathsf{bot}]{\vdash \Delta,\bot}{\infer{\vdash\Delta}{\pi}}$

$\infer[\w]{\vdash \Delta,\Xn}{\infer{\vdash\Delta}{\pi}}$

$\infer[\parr]{\vdash \Delta,F^\bot\parr G^\bot}{\infer{\vdash\Delta, F^\bot,G^\bot}{\pi}}$

$\infer[\cn]{\vdash \Delta,\Xn}{\infer{\vdash\Delta, \Xn , \Xn }{\pi}}$

$\infer[\mathsf{mix}]{\vdash \Delta, \Delta'}{\infer{\vdash  \Delta,}{\pi}& \infer{\vdash \Delta'\bot}{\pi'}} $

}

\section{\MLL  connected components (\Cref{sec:MLLcomponents}) }\label{app:MLL}

\begin{lemma*}[\ref{lem:connected}] Let $\M$ be an atomic \MLL module  which is a connected graph. Then:
	(1.) 	All edges are \supported by the same \atom. 
	(2.) $\pol \M$---\ie $\M$ with the polarized orientation---is a directed tree, with a \emph{unique} initial edge.
\end{lemma*}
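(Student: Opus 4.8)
The plan is to work entirely with the polarized orientation $\pol\M$, which is a DAG by \Cref{lem:pol_correct} (since $\M$, being a module, is switching acyclic), and to combine a node-by-node inspection of the grammar of \Cref{fig:pn}, restricted to the sorts $\ax,@,\weak,\cut$ allowed in an atomic \MLL module, with a simple degree count. First I would record the local polarized shape of each of the four node sorts; this is the data on which both parts rest.

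For part (1), I would use the local observation that all edges incident to a single node are supported by the same atom: $\ax(\X,\Xn)$ and $\cut(\X,\Xn)$ carry only the atom $X$, a $@$-node has its conclusion and its two premises all equal to a single $\Xn$, and a $\weak$-node has one negative conclusion. Thus every node forces all of its edges to share one atom. Since $\M$ is a connected graph, any two nodes are joined by a path of nodes that pairwise share an edge; propagating the atom across each shared edge, an easy induction along such a path yields that all edges of $\M$ are supported by the same atom.

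For part (2), the key computation is the \emph{in-degree} of each node in $\pol\M$. Reading off the convention that positive edges point downward and negative edges upward, one checks that every node has in-degree \emph{exactly one}: the $\ax$-node receives its negative conclusion, the $\cut$-node receives its positive premise, the $@$-node receives its negative conclusion (its two premises being outgoing), and the $\weak$-node receives its single conclusion. From this invariant the tree structure follows by counting. Let $n$ be the number of nodes, $E$ the number of non-pending edges, and $I$ the number of initial edges; by \Cref{fact:initial} the initial edges are exactly the positive pending premises and negative pending conclusions, i.e.\ the pending edges directed \emph{into} their node. Summing in-degrees gives $n = E + I$. Connectivity forces $E \ge n-1$, hence $I \le 1$; and since $\pol\M$ is a nonempty finite DAG it has a minimal (initial) edge, so $I \ge 1$. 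Therefore $I = 1$ and $E = n-1$, so $\pol\M$ is connected with $n-1$ edges, i.e.\ a tree; being a DAG it is a directed tree, with the unique initial edge claimed.

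The main obstacle — indeed the only delicate point — is the bookkeeping of the polarized orientation at each node: one must be careful that premises enter a node from above and conclusions exit below, reverse exactly the negative edges, and thereby both confirm that every node has in-degree one and correctly identify which pending edges count as incoming (hence initial). Once this local analysis is nailed down, the atom-propagation of part (1) and the degree count of part (2) are routine.
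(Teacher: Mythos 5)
Your proof is correct and follows essentially the same route as the paper's: both rest on polarized correctness (so $\pol{\M}$ is a DAG) together with the observation that every admissible node sort ($\ax$, $@$, $\weak$, $\cut$) has in-degree exactly one under the polarized orientation, and part (1) is in both cases the immediate propagation of the atom across nodes along paths. The only difference is one of detail: the paper simply asserts that the in-degree-one property forces $\pol{\M}$ to be a tree, whereas you justify this step explicitly via the in-degree/edge count $n=E+I$ and the existence of a minimal edge, which also yields the uniqueness of the initial edge that the paper leaves implicit.
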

{\begin{proof}All properties follows from the fact that  an \MLL module only contains  nodes of sort $\ax,@,\weak,\cut$.   (1.) is immediate. For  (2.),  recall  that $\M$ with the polarized orientation is a DAG. Since all nodes $\ax,@,\weak$ have exactly one incoming edge (w.r.t. the polarized orientation), $\pol \M$ is necessarily a  tree. 
	\end{proof}
}

\begin{prop*}[\ref{lem:M_connected}  	]
	Let $\M: \Gamma \vdash \Delta$ be an atomic \MLL module (of conclusion $\Delta$,  premises $\Gamma$). Then  $\At{\M} = \At{\Gamma} \cup \At{\Delta}$, and 
	the following are \emph{equivalent} for each  $X\in \At{\M}$:
	\textbf{(1.) }	 $\M$ has exactly one initial edge supported by $X$. 
	\textbf{	(2.)} $\M$ has exactly  one maximal connected component supported by $X$.
	
\end{prop*}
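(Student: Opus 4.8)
The plan is to reduce everything to the structure of the maximal connected components of $\M$, which are themselves atomic \MLL modules and hence governed by \Cref{lem:connected}. First I would partition $\M$ into its maximal connected components $\netC_1,\dots,\netC_k$; since every edge lies in exactly one component, these partition the edges of $\M$. By \Cref{lem:connected}, each $\netC_j$ has all its edges \supported by a single \atom, call it $X_j$, and $\pol{\netC_j}$ is a directed tree, so $\netC_j$ has \emph{exactly one} initial edge $e_j$ — which, being an edge of $\netC_j$, is itself \supported by $X_j$.

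The key observation is that minimality with respect to the polarized order on $\M$ is detected component-wise: since there is no polarized (indeed no undirected) path between edges of distinct components, an edge is initial in $\M$ if and only if it is initial in the component containing it. Consequently the initial edges of $\M$ are exactly $\{e_1,\dots,e_k\}$, and the assignment $\netC_j\mapsto e_j$ is a bijection between the maximal connected components of $\M$ and the initial edges of $\M$. Because $e_j$ and its component $\netC_j$ share the support $X_j$, this bijection restricts, for every \atom $X$, to a bijection between the components \supported by $X$ and the initial edges \supported by $X$. Counting then yields the equivalence of (1.) and (2.): $\M$ has exactly one initial edge \supported by $X$ precisely when it has exactly one maximal connected component \supported by $X$.

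For the identity $\At{\M}=\At{\Gamma}\cup\At{\Delta}$, the inclusion $\supseteq$ is immediate, as $\Gamma$ and $\Delta$ are pending edges of $\M$. For $\subseteq$, I would take any $X\in\At{\M}$; it supports some component $\netC_j$, whose unique initial edge $e_j$ is \supported by $X$. By \Cref{fact:initial}, every initial edge of an atomic \MLL module is a pending edge — a negative conclusion or a positive premise — so $e_j$ occurs in $\Gamma$ or $\Delta$, whence $X\in\At{\Gamma}\cup\At{\Delta}$.

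I expect the only delicate point to be the bijection step, specifically the justification that initiality in $\M$ coincides with initiality inside one's own component and that the correspondence preserves support; once that is in place, both the equivalence of (1.) and (2.) and the atom identity are routine bookkeeping on top of \Cref{lem:connected} and \Cref{fact:initial}.
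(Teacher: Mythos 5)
Your proof is correct and follows essentially the same route as the paper's: both rest on \Cref{lem:connected} together with the observation (stated only parenthetically in the paper's source) that an edge is initial in $\M$ iff it is initial in its own maximal connected component, and your support-preserving bijection between components and initial edges merely packages the paper's two separate implications ($1.\Rightarrow 2.$ by contradiction, $2.\Rightarrow 1.$ directly) into a single counting argument. You additionally spell out the identity $\At{\M}=\At{\Gamma}\cup\At{\Delta}$ via \Cref{fact:initial}, a part the paper's proof leaves implicit.
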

{	\begin{proof} \SLV{}{ Observe that  an edge $e$ is initial in $\M$ if and only if it is initial in the maximal connected component to which it belongs.  }
		
		$ 1. \Rightarrow 2.$ Assume that for some \atom $X$, $\M$ has  two distinct maximal connected components $\M_X'$, with initial edge $e':\Xn$ and $\M_X''$, with initial edge $e'':\Xn$.   Necessarily, $e',e''$ are initial edges of $\M$, which contradicts  $(1).$

		$2. \Rightarrow 1.$ 
		Let $\M_X$ be the unique  maximal connected component in $\M$ supported by   $X$.  
		The fact that  $\pol{\M_X}$ has  a unique initial edge supported by $X$, implies that $\M$ has.
	\end{proof}
}

\section{Factorized form of a proof-net (\Cref{sec:factorized_form})}

{
	\begin{lemma}[normal connected components]	An atomic \MLL proof-net  $\D[X]$ which is connected 
		is  either a single  axiom, or a single    $w$-node with its conclusion, or a single  $@$-node together with its  conclusion and its premises, where  each premise completed by an axiom.

	\end{lemma}
	\begin{proof} By inspecting the rules.
		Notice that in a \emph{normal} atomic \MLL proof-net, the positive conclusion of an $\ax$ node is necessarily conclusion of the net, because it cannot be premise of a cut (that would create a redex).
	\end{proof}
}

{
	\begin{remark}
		$\D$  is a \wiring exactly when it is a justaposition of    connected   proof-nets, each supported by a distinct \atom. We then can write $\D=\biguplus_{X\in \At{\D}} \D[X]$. 
		
		Observe that  each connected net $\D[X]$ has exactly one negative conclusion (this is obvious when $\D[X]$ is normal).
	\end{remark}
}

\begin{lemma*}[\ref{lem:trivial}, Trivial factorized form]
	Every \bpn   $\R$ can be transformed by $\ax$-expansion into $\R'=\D(\Ax(\R))$, where $\D$ is the completion of ${{\nax \R}}$. 
\end{lemma*}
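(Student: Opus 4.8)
The plan is to exhibit the claimed factorized net explicitly and then to check that it differs from $\R$ only by inserted axiom--cuts. First I would invoke the canonical decomposition $\R=\pair{\Ax(\R)}{\nax\R}$ of \Cref{not:module}(2.), writing $\M:=\nax\R$ for the atomic \MLL module obtained by deleting every $\bn$-node; by construction the pending premises of $\M$ are exactly the conclusions of the boxes in $\Ax(\R)$. Since $\R$ is a \bpn, \Cref{lem:B_connected} yields that $\M$ is \good. I then set $\D:=\overline\M$, the completion of \Cref{def:completion}, which attaches an axiom $\ax(F,F^\bot)$ to each pending premise $e:F$ of $\M$, turning it into a genuine proof-net whose conclusions are $\Gamma^\bot,\Delta$, where $\Gamma$ collects the box conclusions and $\Delta$ the conclusions of $\R$.

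The first substantive step is to verify that $\D$ is a \wiring, that is, an atomic \MLL proof-net that is \good (\Cref{def:wiring}). Atomicity and the absence of $\bn$-nodes are immediate, since $\M$ is atomic \MLL and we only add axioms, while $\overline\M$ is a proof-net by \Cref{def:completion}. For \goodness I would show that completion cannot break the property \emph{one connected component per atom}: every freshly added edge $F^\bot$ is joined, through its axiom, to the premise $e:F$, which already lives in $\M$ and carries the same support, so $F^\bot$ merely enlarges the unique component of that atom rather than spawning a new one or bridging components of distinct atoms. A short case analysis, according to whether two equally-supported edges of $\D$ both lie in $\M$, or one is freshly added, or both are, then reduces \goodness of $\D$ to \goodness of $\M$.

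Next I would reassemble the factorized net $\R'=\D(\Ax(\R))$ by cutting, for each box $\bbox\in\Ax(\R)$, every conclusion of $\bbox$ against the dual conclusion of $\D$ produced by the matching axiom. All such cuts run between a box and $\D$, so the correction graph (\Cref{def:cut-structure}) is a star with $\D$ at its centre and the boxes as leaves, hence a tree; thus $\R'$ is a genuine cut-net of the shape required by \Cref{def:wiring_cut}, with $\D$ as root and the boxes as its input components (each a single box, hence factorized). Since the boxes and the conclusions $\Delta$ are untouched, $\R'$ inherits the interface conditions of \Cref{def:bpn} and is again a \bpn.

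Finally I would identify this transformation with $\ax$-expansion. Each cut $\cut(F,F^\bot)$ introduced above sits against an axiom whose other conclusion feeds into $\M$ exactly where the box conclusion originally plugged in; reducing all of these axiom--cuts by the $\ax$-rule reconnects each box directly to $\M$ and recovers $\R$, so $\R'\Reds[\ax]\R$ and, dually, $\R'$ is obtained from $\R$ purely by $\ax$-expansions. The main obstacle is the second step, the stability of \goodness under completion: one must be careful that adding several axioms on premises supported by the same atom does not produce extra components, and that the interface conditions of \Cref{def:bpn} prevent two added edges of distinct atoms from becoming adjacent; both points, however, reduce to the single-component-per-atom structure of $\M$ already secured by \Cref{lem:B_connected}.
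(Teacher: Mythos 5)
Your proof is correct and follows essentially the same route as the paper's: decompose $\R$ as $\pair{\Ax(\R)}{\nax \R}$, take $\D$ to be the completion of $\nax\R$, obtain \goodness of $\D$ from \Cref{lem:B_connected}, and observe that re-plugging the boxes against the added axioms is exactly an $\ax$-expansion of $\R$. The paper's own proof is a terser version of yours, leaving implicit the points you spell out (stability of \goodness under completion, the star-shaped correction graph, and the fact that reducing the axiom--cuts recovers $\R$).
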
	

	\begin{proof}
		Every   proof-net $\R$  can be decomposed as $\pair {\Ax(\R)} {\nax \R}$, and transofrmed by  $\ax$-expansion, 
		into $\R'=\D(\Ax(\R))$, where $\D$ is the completion of   $\nax \R$.

		$\D$ is  \good  because 	 $ \nax \R  $ is (by \Cref{lem:B_connected} ). 
	\end{proof}
	
	\begin{remark}\label{rem:trivial}
		Notices that, if $\R$ has empty conclusion and is normal, to obtain a cut-net  $\R'=\D(\Ax(\R))$ it suffices to $\ax$-expand the negative  conclusions of the boxes, because every positive conclusion is already the premise of a $\cut$.
	\end{remark}

\begin{theorem}[Cost of interpreting a \factorized \bpn]\label{prop:cost} Let $\R$ be a \bpn in \factorized form. 
	The time and space cost of computing the  interpretation $\phisem \R$ is 
	\begin{center}
		$ 	\BigO{m_{\R} \cdot \Exp {\Width\R}} $
	\end{center}
	for  
	$ \Width\R$  as in \Cref{def:width_R} and $m_{\R}$  as defined above.
\end{theorem}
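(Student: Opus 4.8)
The plan is to argue by structural induction on the \factorized form of $\R$, following verbatim the inductive interpretation recipe of \Cref{thm:turbo} and charging each step against the factor-operation cost estimates recalled in \Cref{rem:factors_cost} (which bound both time and space, so the two bounds in the statement are obtained simultaneously). The bookkeeping is organized around one observation: the components $\Ax(\R)\cup\Wirings{\R}$ are exactly the nodes of the correction tree $\tree_\R$, so that $m_\R$ is precisely the number of \emph{edges} of $\tree_\R$. This is the quantity the induction will have to reconstruct additively.

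For the base case, a single box $\bbox$, the recipe returns $\phisem{\bbox}=\iphi(\bbox)$, which is already supplied by the \groundinterpretation, so no factor operation is performed; this is consistent with $m_\R=0$. For the inductive step, write $\R=\D(\N_1:\Gamma_1,\dots,\N_h:\Gamma_h):\Delta$ and set $w=\Width{\R}$. By \Cref{thm:turbo}, $\phisem{\R}=\project{\BigFProd_{i:1}^h\phisem{\N_i}}{\At{\Delta}}$. The key structural fact I would invoke is that, by \Cref{def:wiring_cut}, every conclusion of each $\N_i$ is cut against a conclusion of $\D$, whence $\At{\Gamma_i}\subseteq\At{\D}$; therefore the product $\BigFProd_i\phisem{\N_i}$ is a factor over at most $\At{\D}$, \ie over at most $w+1$ atoms, since $\D\in\Wirings{\R}$ gives $\size{\At{\D}}\leq w+1$. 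By \Cref{rem:factors_cost}, multiplying the $h$ factors $\phisem{\N_i}$ into this factor costs $\BigO{h\cdot\Exp{w+1}}=\BigO{h\cdot\Exp{w}}$, and the subsequent projection onto $\At{\Delta}$ costs no more, so assembling $\phisem{\R}$ from the children costs $\BigO{h\cdot\Exp{w}}$.

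It then remains to add the cost of computing the children themselves. Since $\Wirings{\N_i}\subseteq\Wirings{\R}$ we have $\Width{\N_i}\leq w$, so the induction hypothesis gives cost $\BigO{m_{\N_i}\cdot\Exp{w}}$ for each $\phisem{\N_i}$ (the common exponent $\Exp{w}$ dominates the smaller $\Exp{\Width{\N_i}}$). Summing, the total cost of $\phisem{\R}$ is $\BigO{(h+\sum_{i:1}^h m_{\N_i})\cdot\Exp{w}}$. Finally, $\tree_\R$ is the root $\D$ joined by one edge to each of the $h$ subtrees $\tree_{\N_i}$, so its edge count satisfies $m_\R=h+\sum_{i:1}^h m_{\N_i}$, which closes the induction and yields the announced $\BigO{m_\R\cdot\Exp{\Width\R}}$.

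The main obstacle — the point that genuinely needs the earlier theory rather than routine arithmetic — is the containment $\At{\Gamma_i}\subseteq\At{\D}$, which is what guarantees that every intermediate factor stays within a single component of at most $w+1$ atoms and never blows up to the full $\At{\R}$. This is exactly where the \cutnet shape of $\D(\N_1,\dots,\N_h)$ and the \goodness of $\nax{\R}$ enter: \goodness forces the edges cut between $\D$ and each $\N_i$ to be supported by atoms already present in $\D$. The only remaining delicate point is purely combinatorial, namely verifying that the additive recurrence $m_\R=h+\sum_i m_{\N_i}$ is faithfully mirrored by the recursive definition of $\Wirings{\R}$ and $\Ax(\R)$; for this one observes that the partition of $\R$ into components restricts, without overlap, to partitions of the sub-nets $\N_i$.
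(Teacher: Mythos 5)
Your proposal is correct and follows essentially the same argument as the paper's own proof: structural induction on the factorized form, the bound $\size{\At{\Gamma_i}}\leq\size{\At{\D}}\leq w+1$ giving cost $\BigO{h\cdot\Exp{w}}$ for the product and projection at the root, and the edge-count recurrence $m_\R=h+\sum_{i:1}^h m_{\N_i}$ on the correction tree $\tree_\R$ to close the induction. Your additional remarks (the explicit base case, the observation that $\Width{\N_i}\leq w$ lets the common exponent dominate) only make explicit what the paper leaves implicit; note that the containment $\At{\Gamma_i}\subseteq\At{\D}$ already follows from the cut-net shape in \Cref{def:wiring_cut} alone, without needing to invoke \goodness.
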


	\begin{proof}
		First, observe that  $Ax{(\R)}\cup\Wirings{\R}$ are the components of the cut-net $\R$, and so the nodes of the correction graph $\tree_\R$. So 
		$m_\R$  counts the number of edges in $\tree_{\R}$ (the number of nodes minus one).
		
		Let $ \Width\R=w $.
		Assume $\R=\D{(\N_1:\Gamma_1, \dots, \N_h:\Gamma_h)}$.	Each   $\phi_i =\sem {\N_i}$ is a factor over $\At{\Gamma_i}\subseteq \At{\D}$, so  that  $\size{\At{\Gamma_i}} \leq  \size{\At{\D}}\leq (w +1)$.   The cost of computing 
		$\BigFProd_{i:1}^h\phi_i$ is  $\BigO{h\cdot \Exp{w+1} }$. Since the cost of the projection is similar, the cost of computing $\sem{\D(\N_1, \dots, \N_h)}$ is  $\BigO{h \cdot \Exp{w}}$ \emph{plus} the cost  of  computing 
		$\phi_1=\sem{\N_1}, \dots \phi_h=\sem{\N_h}$.
		%
	%
		By \ih, for each $i$, the cost of computing   $\sem{\N_i}$ is\\ $\BigO{m_{\N_i} \cdot \Exp {w}}$, so\\
		$\sem{\D({\N_1}, \dots, {\N_h})} =$ $ \BigO{(h + \sum_{i:1}^h m_{\N_i}) \cdot \Exp{w} }  $ 
		
		Recall that   $m_{\R}$ is the number of edges in the tree  $\tree_{\R}$, which  has root $\D$, and a subtree $\tree_{\N_i}$ (with $m_{\N_i}$ edges) for each $\N_i$.
		So the number of edges in $\tree_{\R}$ is  $m_{\R} = h+  \sum_{i:1}^h m_{\N_i}$. We conclude that  
		the total cost for computing $\sem{\R}$ 	is	\begin{center}
			$ \BigO{m_{\R} \cdot \Exp {\Width\R}} $.
		\end{center}
	\end{proof}

\begin{thm*}[\ref{thm:any_marginal2}]
	Let  $\R$ be a \bpn of empty conclusion, and $\iphi$ a \groundinterpretation.
	For \emph{every} $Y\in \At{\R}$,  the cost of computing $\phisem{\R_Y:\Y}$  is   the same as the cost of computing  $\phisem{\R}$. 
\end{thm*}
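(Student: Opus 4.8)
The plan is to observe that both $\phisem{\R}$ and $\phisem{\R_Y}$ are computed by the factorized interpretation of \Cref{thm:turbo}, whose cost, by \Cref{prop:cost}, is governed through only two root-independent quantities: the number of components $m_{\R}$ and the width $\Width{\R}$. Thus it suffices to show that de-weakening leaves these two quantities unchanged, after which the equality of costs is a one-line substitution. Throughout I take $\R$ in factorized form, as is needed for \Cref{thm:turbo} and \Cref{prop:cost} to apply at all.

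First I would recall, from \Cref{def:de-weakening}, how $\R_Y = \deW{\R}{Y}$ is built. Since $\R$ has empty conclusion every atom is internal, so for the chosen $Y$ there is a cut $\cut(\Y,\Yn)$, and $\OShow$ either removes a $\w$-node or inserts an axiom $\ax(\Y,\Yn)$ whose positive edge becomes the single new conclusion $\Y$. In both cases the manipulation is local to the wiring $\D \in \Wirings{\R}$ carrying that cut: no box is created or destroyed, and the atom-set of the modified wiring $\D'$ satisfies $\At{\D'}=\At{\D}$. Hence every wiring keeps its size, and $\Ax(\R_Y)=\Ax(\R)$, $\size{\Wirings{\R_Y}}=\size{\Wirings{\R}}$.

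Second I would invoke \Cref{thm:any_marginal1}, which packages exactly these observations: $\R_Y$ is again a factorized \bpn, now of conclusion $\Y$, with $\Ax(\R)=\Ax(\R_Y)$, $\Width{\R}=\Width{\R_Y}$ and $m_{\R}=m_{\R_Y}$. To exhibit $\R_Y$ as a genuine factorized form despite its new conclusion $\Y$, one re-roots the cut-net at $\D'$ using \Cref{lem:root_change}; this is legitimate precisely because the correction graph $\tree_{\R}$ is a tree, so any of its wirings may serve as root. Applying \Cref{prop:cost} to $\R$ and to $\R_Y$ then gives costs $\BigO{m_{\R}\cdot\Exp{\Width{\R}}}$ and $\BigO{m_{\R_Y}\cdot\Exp{\Width{\R_Y}}}$ respectively, and the two parameter equalities make these bounds identical, which is the claim.

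The main obstacle is contained entirely in the claim that $\OShow$ preserves the factorized structure together with its width and component count; once this is granted (as it is by \Cref{thm:any_marginal1}), the cost comparison is immediate. The delicate point inside that claim is that $\R_Y$, unlike $\R$, is no longer of empty conclusion, so its presentation as a factorized net is not the ambient one but the re-rooted one supplied by \Cref{lem:root_change}. Verifying that this re-rooting neither merges nor splits any wiring, so that the max-wiring-size width and the count $\size{\Ax(\R)}+\size{\Wirings{\R}}-1$ are untouched, is where the real work lies.
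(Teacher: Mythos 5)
Your proposal is correct and follows essentially the same route as the paper's own proof: unfold the definition of $\OShow$ to see that it is a local modification of a single wiring $\D$ (preserving boxes, the number of wirings, and every wiring's size, hence $m_{\R}$ and $\Width{\R}$), re-root the cut-net at that wiring via \Cref{lem:root_change} so that $\R_Y$ is again in factorized form, and conclude by \Cref{prop:cost}. The only cosmetic difference is that you route the bookkeeping explicitly through \Cref{thm:any_marginal1}, which the paper's proof leaves implicit.
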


	\begin{proof}Recall that  $\R_Y=\deW{\R}{Y}$ is obtained either by removing a $\w$-node, or by adding an axiom $\ax(\Y,\Yn)$, where the negative edge $e:\Yn$ is premise of a $@$-node, and  the  positive edge $f:\Y$ is the conclusion of $\R_Y$. Let   $\D$ be  the \wiring in which the axiom has been added, and $\D'$ the corresponding wiring in  $\R_Y$ . Notice that $\Width \D = \Width{ \D'}$. 
		
		By \Cref{lem:root_change} , we can see  $\R_Y$ as  a factorized form which has $\D$ as root. The claim then follows from 
		\Cref{prop:cost}, and  the fact that $\OShow$  does not change the size of any \wiring. 
\end{proof}

%


\section{Factorized form induced  by a Cut--Elimination order (\Cref{sec:induced_factorization})}\label{app:factorize}

We  prove  \Cref{thm:factorization}, by  providing an algorithm (\Cref{prop:induced}) to produce, from a \bpn $ \R $ in normal form and of empty conclusions,
the  factorized \bpn $\R_\omega$  (\Cref{def:induced}).

We first take a closer look to the structure of a module in normal form, and of empty conclusion.

\subsection{\MLL preliminaries}
Recall that we call \emph{wiring  module} (\Cref{def:wiring})   an \MLL module 
which  is \emph{\good}.

\begin{lemma}[empty conclusions] Let $\M$ be a \MLL atomic  module of empty conclusions.  Let $\Gamma$ be the sequent of pending premises of $\M$.
	Then
	\begin{enumerate}
		\item $\At{\M} =\At{\Gamma}$
		\item The following are equivalent
		\begin{itemize}
			\item $\M$ is \good (\ie, distinct maximal connected components are supported by distinct \atoms)
			\item distinct \emph{positive} edges in $\Gamma$ are labelled by distinct atomic formulas.
		\end{itemize}
	\end{enumerate}
	
\end{lemma}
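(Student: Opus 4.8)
The plan is to derive both parts directly from the general results already established for atomic \MLL modules, namely \Cref{lem:M_connected} and the Characterization Theorem \Cref{thm:char}, specialized to the case where the conclusion $\Delta$ is empty. No new combinatorial argument is needed: the work is entirely in reading off the correct instances, once the contribution of $\Delta$ is seen to vanish.

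For part (1), I would simply invoke \Cref{lem:M_connected}, which gives $\At{\M} = \At{\Gamma} \cup \At{\Delta}$ for any atomic \MLL module $\M:\Gamma\vdash\Delta$. Since $\M$ has empty conclusions, $\Delta$ is the empty sequent, hence $\At{\Delta}=\emptyset$, and the identity collapses to $\At{\M}=\At{\Gamma}$.

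For part (2), the key observation is that the \goodness of $\M$ is exactly condition~2 of \Cref{thm:char}, so it suffices to specialize the equivalence between that condition and condition~3. First I would use \Cref{fact:initial}: the initial edges of an atomic \MLL module are its negative conclusions together with its positive pending premisses. As $\M$ has no conclusions at all, there are no negative conclusions, so the initial edges of $\M$ are precisely the positive edges occurring in $\Gamma$. Condition~3 of \Cref{thm:char} (distinct initial edges are supported by distinct \atoms) then reads ``distinct positive edges of $\Gamma$ are supported by distinct \atoms'', which, via the identification of a positive edge supported by $X$ with an edge labelled $\X$, is the same as ``distinct positive edges of $\Gamma$ are labelled by distinct atomic formulas''---the second bullet of the statement. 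The equivalence of condition~3 with condition~2 (\goodness) then yields the claim.

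The one point worth stating explicitly, to avoid any ambiguity, is this last identification ``positive edge supported by $X$'' $=$ ``edge labelled $\X$'', which is what turns ``distinct \atoms'' into ``distinct atomic formulas''. Alternatively, one could route the argument through condition~4 of \Cref{thm:char}: with $\Delta$ empty the sequent $\Gamma,\Delta^\bot$ is just $\Gamma$, so ``the positive atomic formulas of $\Gamma,\Delta^\bot$ are pairwise distinct'' reduces immediately to the second bullet. I do not expect any genuine obstacle, since both halves are corollaries of statements proved earlier; the only care needed is to verify that the emptiness of $\Delta$ removes exactly the negative-conclusion (respectively $\Delta^\bot$) contributions in \Cref{lem:M_connected}, \Cref{fact:initial}, and \Cref{thm:char}.
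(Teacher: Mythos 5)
Your proof is correct and coincides with the paper's intended argument: the paper states this lemma without an explicit proof, treating it as an immediate specialization of \Cref{lem:M_connected}, \Cref{fact:initial}, and \Cref{thm:char} to the case where $\Delta$ is empty, which is precisely what you carry out (your alternative route through condition~4, where $\Gamma,\Delta^\bot$ collapses to $\Gamma$, is the most direct reading). No gap.
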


\condinc{}{
	\begin{definition}
		We call  \contree (labelled by $\Xn$) a negative module which consists of either a single  negative edge ($e:\Xn$)  or a single  $@$-node together with its incident edges, or a single  $w$-node with its conclusion. 
		
		Observe that an \monowired net  in normal form is  the completion of 
		a \contree.
	\end{definition}
}

\newcommand{\Bpair}[2]{\langle{\langle#1,#2\rangle} \rangle}

\begin{lemma}[Normal wiring modules] Let $\M$ be an \MLL   module of empty conclusions which is \emph{normal} and \emph{\good}. 
	Let  $\Gamma$ be the sequent of its pending premises. 
	\begin{itemize}
		\item   For each $X\in \At{\M}$, there is exactly one  pending premise  labelled by $\X$, and therefore in $\M$ there is  exactly one node $\cut(\X,\Xn)$.
		
		\item Let $\M[X]$ be the  maximal connected   component  supported by $X$.  It contains the node
		$\cut(\X,\Xn)$.  The positive edge $f:\X$ belongs to $\Gamma$. For the negative edge $e:\Xn$,   either
		(i.) $e \in \Gamma$, or (ii.) $e$ is conclusion of a $\w$-node, or , or (iii)
		$e$ is conclusion of a $@$-node , whose premises all belongs to $\Gamma$.
		
		
	\end{itemize}
\end{lemma}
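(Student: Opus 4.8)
The plan is to reconstruct each maximal connected component of $\M$ from its polarized shape, using normality to rule out the node configurations that would create redexes. The decisive preliminary observation I would establish is that \emph{a normal atomic \MLL module of empty conclusions contains no $\ax$-node}. Indeed, an axiom $\ax(\X,\Xn)$ carries a positive conclusion $\X$; since $\M$ has no conclusions this edge is not pending, and inspecting the grammar of nodes a positive edge can only be a premise of a $\cut$ (contraction premises are negative, while $\ax$ and $\w$ have no premises). Thus $\X$ would be the positive premise of a cut whose positive premise is an axiom conclusion, i.e. an $\axrule$-redex, contradicting normality. Since $\ax$ is the only node with a positive conclusion, it follows that \emph{every positive edge of $\M$ is a pending premise}.

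For the first item I would argue as follows. By \Cref{lem:connected} each maximal connected component is supported by a single atom and $\pol{\M[X]}$ is a directed tree with a unique initial (root) edge; by \Cref{thm:char} distinct components carry distinct atoms, so the components are in bijection with $\At{\M}$, and by the preceding Lemma on empty-conclusion modules $\At{\M}=\At{\Gamma}$. The root of $\M[X]$ is minimal for the polarized order, so by \Cref{fact:initial}, and as $\M$ has no negative conclusion, it is a positive pending premise $f:\X$; hence there is exactly one pending premise labelled $\X$. As every positive edge is a pending premise and (empty conclusions) is consumed by a node, $f$ enters a $\cut$; conversely the positive premise of any cut is a positive edge, hence one of these roots. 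Since distinct cuts cannot share the single root $f$, there is exactly one node $\cut(\X,\Xn)$.

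For the second item, this cut is incident to $f:\X$ and therefore lies in $\M[X]$, with $f\in\Gamma$. Its negative premise $e:\Xn$ either is pending (case i) or is the conclusion of a node producing a negative edge, namely an $\ax$-, $\w$- or $\cn$-node; the $\ax$ case is excluded since there are no axioms, leaving a $\w$-node (case ii) or a $\cn$-node (case iii). In case (iii) each of the two premises of the $\cn$-node is a negative edge whose source, if any, could only be an $\ax$-node (none exist), a $\w$-node (a $\neut$-redex), or another $\cn$-node (an $\assoc$-redex); all are forbidden by normality, so both premises are pending, i.e. belong to $\Gamma$.

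The only genuinely delicate step, and the one I would write out in full, is the case analysis of the second item: it is here that normality is used completely---via the three rules $\axrule$, $\neut$ and $\assoc$---to force a contraction feeding a cut to be \emph{greedy}, with all of its leaves among the pending premises.
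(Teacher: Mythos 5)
Your proof is correct. The paper does not actually give a proof of this lemma---it is stated as a preliminary in the appendix---so there is no official argument to compare against line by line; but your reconstruction is exactly the style of reasoning the paper uses for the adjacent ``normal connected components'' lemma, whose one-line proof turns on the same key remark (in a normal net, the positive conclusion of an $\ax$-node cannot be the premise of a $\cut$). Your sharpening of that remark under the empty-conclusion hypothesis---there are no $\ax$-nodes at all, hence every positive edge is a pending premise and every initial edge is positive---is precisely what is needed, and the remaining steps (one initial edge per component via \Cref{lem:connected}, \Cref{fact:initial} and \Cref{thm:char}; the $\neut$/$\assoc$ exclusions forcing all premises of a contraction feeding a cut to lie in $\Gamma$) are sound and use only the paper's own tools.
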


\begin{figure}
	\centering
	\includegraphics[page=7,width=0.8\linewidth]{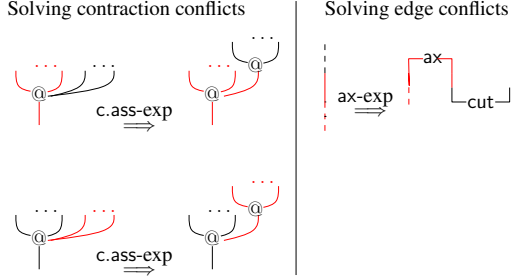}
	\caption{Please notice: this Figure is meaningful only in color!}
	\label{fig:solving}
\end{figure}

\subsection{Factorization Procedure}\label{sec:factorization_algo}
We prove \Cref{thm:factorization}. 
We factorize a $\bpn$ $\R$, rewriting  it (my means of $\ax$-expansions and $\assoc$-expansions) into a \bpn $\R_\omega$ (\Cref{def:induced}) which is in factorized form  (according to \Cref{def:factorized}), and has the properties stated in \Cref{thm:factorization}. 

\Cref{def:transformation} formalizes  the basic transformation which corresponds   to processing a single variable (a single $\cut$, as  described below).  We then 
\emph{iterate} the transformation, as stated in \Cref{prop:induced}, until we are left with a \bpn in factorized form. 
Notice that 
every application of the  basic transformation  preserves an invariant (\Cref{prop:induced}) which ultimately guarantees the correctness of the procedure.\\

The algorithm  is better understood by first considering   the decomposition of $\R$ as $\pair {\Box(\R)}\M$, where $\M=\nax \R$. Notice that $\Nets=\Box(\R)$ is  \emph{ a set of factorized nets}. 
Let us  now choose an arbitrary internal \atom $Z$---necessarily there is a $\cut(\Z,\Zn)$, connecting   the  unique box $\bbox^{Z}$ of conclusion $\vdash \Sigma, \Z$ with a number of boxes $\bbox_1,...\bbox_h$ of respective conclusions 
$\vdash \Sigma_i, \Zn$.  
The goal  now is to select   the minimal  sub-graph $\M^Z$ of $\M$ which contains  $\cut(\Z,\Zn)$ and  enough  structure   to compose together $\bbox^{Z}$ and $\bbox_1,...\bbox_h$. This oblige us to possibly include some other cut nodes (and so some other atoms).

 We then  expanded the sub-graph $\M^Z$ in order to obtain  a \wiring $\D^Z$, which wires  together $ \bbox^{Z},\bbox_1,...\bbox_h $. So we obtain  $\N'=\D^Z(\bbox^{Z},\bbox_1,...\bbox_h)$.
Crucially, observe that $\N'$ is a factorized net, and so 
the set $\{\N'\} \cup ( \Nets- \{\bbox^{Z},\bbox_1,...\bbox_h\})$ is a set of factorized nets. 
What is left  of $\M$ once we have subtracted $\M^Z$ is a \emph{smaller} \MLL module $M'$ in  which $Z$ does not appear anymore. We can now  iterate the process.

\condinc{}{
	Since $Z$   does not appear in the conclusion of $\R$, 
	necessarily there is a $\cut(\Z,\Zn)$, connecting   a unique net $\N^{Z}$ of conclusion $\vdash \Sigma, \Z$ with a number of nets $\N_1,...\N_h$ of respective conclusion
	$\vdash \Sigma_i, \Zn$ (figure). 
	The idea is to isolate  the minimal  sub-graph of $\M$ which contains $\cut(\Z,\bZ)$, and the structure which is necessary  to compose together $\N^{Z}$ and $\N_1,...\N_h$. The subgraph is then expanded in order to become  a proof-net $\D$, which allows us to perform the composition  $\D(\N^{Z},\N_1,...\N_h)$.
	
	What is left outside $\M^Z$ is a \MLL module $M'$ from which $Z$ has been removed, and we can iterate the process.
}

\subsubsection{The basic transformation: definition and properties of $\pmb{\El Z \R}$}\label{def:transformation}

Assume	 $\R $ is  a  \bpn of 
empty conclusion which decomposes (see \Cref{not:module}) as $\R= \pair \Nets \M$ where:
\begin{enumerate}
	\item 	
	$\Nets$ is a set of   \bpn's;
	\item	$\M$    is a wiring  module (\Cref{def:wiring})   in normal form (\ie,  $\M$ is an \MLL module 
	which  is \emph{normal} and \emph{\good}).
\end{enumerate}
Given a \bpn $\R$ which satisfies the above properties,  we define the  operation $\El - \R$, which 
for any arbitrary choice of an \atom $Z\in \At{\M}$, produces   a \bpn  $\R'=\El Z \R$ (obtained from $\R$ by expansion steps)
which satisfies the  properties  stated in \Cref{prop:transformation}.

Let  $\Nets_Z= \{\N:\Gamma  \in \Nets \st Z\in \At{\Gamma}\}$ and $\Gamma_Z$ the sequence of all the edges which are conclusion of some $\N\in \Nets_Z$.
Recall that by  assumption, $\M$ is a disjoint union of connected components in normal form, where distinct  components are supported by distinct  \atoms.

Consider  $\Gamma_Z$. For each  $X\in \At{\Gamma_Z}$, we first select  a subgraph $(\NN_X,\EE_X)$ of $\M$, and then expand it to obtain a proof-net.
\begin{enumerate}[(i.)]
	\item \textbf{\emph{Selection.}} 
	Let 
	$\EE_X$ be   the set of  the  negative \emph{edges} in $\M$ such that
	\begin{itemize}
		\item $e:\Xn\in \Gamma_Z$ 
		\item   $e:\Xn$  is connected via  a  $\cut$-node  to   $f:\X\in \Gamma_Z$ 
		\item $e:\Xn$ is conclusion of a $@$-node whose premises all belong to $\Gamma_Z$.
	\end{itemize}
	
	Let  $\NN_X$ be   the set of all \emph{nodes}  in $\M$ whose premises and conclusions are entirely contained in $\EE_X$.
	
	\item \textbf{\emph{Solve contraction conflicts: \textit{$\assoc$-expansions}} }(see \Cref{fig:solving}. Edges in $\EE_X$ are
	red-colored). 
	By the  assumptions, $\M$  contains at most one  $@$-node whose edges are labeled by $\Xn$.  If that $@$-node has some premises---but not all---belonging to $\Gamma_Z$, the  proof-net   is expanded   to obtain two separate 
	$@$-nodes, where  one is  attributed to $\NN_X$ (in red in  \Cref{fig:solving}). The new edge is attributed to $\EE_X$.
	%
	
	\item  \textbf{\emph{Completion}: \textit{$\ax$-expansions}} (see \Cref{fig:solving}).
	Each  edge in $\EE_X$ whose source is not in $\NN_X$  is $\ax$ expanded. The new $\ax$-node and its conclusions are  attributed to $\NN_X$ and $\EE_X$, respectively.
\end{enumerate}

\begin{itemize}
	\item Let $\R'$ be the proof-net obtained from   $\R$ by the expansions   steps at points (ii) and (iii). 
\end{itemize}
Notice that neither the  sub-nets in $\Nets$ nor the (empty) conclusion of $\R$ have  been modified, so $\R'$ is a \bpn.
\begin{itemize}	
	\item For each $X\in \At{\Gamma_Z}$, let $\D[X]$ be the sub-net of $\R'$ which has edges $\EE_X$ and nodes $\NN_X$. 
	\item Let  $\D \defeq \biguplus \D[X]$, for $X\in \At{\Gamma_Z}$. Observe that $\At{\D}=\At{\Gamma_Z}$ includes $Z$ (and possibly  also other \atoms).
\end{itemize}
\vskip 4pt

By construction: 
(i) $\D$ is  a \good \MLL proof-net, , and  (ii) 
each  conclusion of each $\N\in \Nets_Z$ is connected via a cut to  a  conclusion of   $\D$.  
\begin{itemize}
	\item Let  $\N'$ be  the subnet $\D(\Nets_Z)$. Its conclusions $\Delta$ are  the conclusions of $\D$ which are not connected to any net in $ \Nets_Z$ (\ie, $\Delta$ is the \emph{output} of $\D$).
	\item $\Nets' = \{\D(\InTrees Z)\}   \cup (\Nets - \InTrees Z)$
	\item Let $\M'$ be the \MLL module obtained from $\R^Z$ by removing $\Nets'$
\end{itemize}

\begin{remark}\label{rem:decomposition}
 Observe  that $\El Z{\R}= \R'$ decomposes as $\pair {\Nets'} {\M'} $.
\end{remark}

\begin{lemma}\label{lem:transformation} With the same assumptions and notations as above, 
	the following properties hold.
	\begin{enumerate}
		\item The  output  $\Delta$ of $\D$ is \repfree, and so $\D(\Nets_Z)$ is a \bpn.
		\item[2.] The \MLL module $\M'$  is normal and \good.
		\item[3.]    $Z\not\in \At{\M'}$.
	\end{enumerate}
\end{lemma}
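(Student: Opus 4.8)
The three claims all rest on two ingredients already in place: the \goodness of $\D$ (granted by the construction, which yields $\D = \biguplus_{X \in \At{\Gamma_Z}} \D[X]$ with each connected component supported by a \emph{distinct} atom), and the structural description of normal connected wirings from \Cref{lem:connected}, by which each $\pol{\D[X]}$ is a directed tree with a \emph{unique} initial edge---its unique negative conclusion, by \Cref{fact:initial}. The plan is to read off, component by component, which conclusions of $\D$ are consumed by a cut against $\Nets_Z$ and which remain free in the output $\Delta$. I would begin with claim 3, which is cleanest and isolates the reason $Z$ disappears: by definition $\Nets_Z$ contains the box $\bbox^Z$ carrying the unique positive occurrence $\Z$ together with \emph{every} box $\bbox_1, \dots, \bbox_h$ carrying an occurrence $\Zn$, hence the producer and \emph{all} consumers of $Z$. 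Consequently every $Z$-supported edge of $\M$, and in particular the node $\cut(\Z, \Zn)$, lies in $\D[Z]$, and each conclusion of $\D[Z]$ is cut against a conclusion of a net in $\Nets_Z$. Thus $\D[Z]$ has no free conclusion, so $Z \notin \At{\Delta}$; since all of $\D$ is absorbed into $\N' = \D(\Nets_Z)\in\Nets'$, no $Z$-edge survives among the pending premises $\Gamma'$ of $\M'$, and as $\At{\M'} = \At{\Gamma'}$ (empty-conclusions lemma) we get $Z \notin \At{\M'}$.

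For claim 1, since distinct components of $\D$ carry distinct atoms, two edges of $\Delta$ sharing an atom $X$ would both lie in $\D[X]$; it therefore suffices to show each $\D[X]$ contributes at most one edge to $\Delta$. Here I would split along the three cases of the normal-wiring lemma for the negative edge $e : \Xn$ (pending, conclusion of a $\w$-node, or conclusion of a $@$-node with all premises in $\Gamma_Z$), tracking the selection step (i) together with the completion step (iii): the construction wires together exactly the $X$-supported conclusions occurring in $\Gamma_Z$ and attaches all of them to $\Nets_Z$ by cuts, leaving a single edge joining $\D[X]$ to the remainder---the unique negative conclusion $\Xn$ when $\bbox^X \notin \Nets_Z$, or the single completing positive edge $\X$ when $\bbox^X \in \Nets_Z$ but $X$ also feeds a box outside $\Nets_Z$. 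Either way $\D[X]$ has exactly one free conclusion, so $\Delta$ is \repfree. As the box conclusions are \repfree (the boxes are inherited from the \bpn $\R$) and $\nax{\D(\Nets_Z)} = \D$ is \good, \Cref{lem:B_connected} then yields that $\D(\Nets_Z)$ is a \bpn.

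Claim 2 splits into goodness and normality. For goodness I would invoke the empty-conclusions lemma: $\M'$ is \good iff distinct positive edges of $\Gamma'$ carry distinct atoms. The premises $\Gamma'$ are the conclusions of $\Nets' = \{\D(\Nets_Z)\} \cup (\Nets - \Nets_Z)$, namely the output $\Delta$ together with the box conclusions of $\Nets - \Nets_Z$. A positive edge $\X$ of $\Gamma'$ stems from the \emph{unique} box $\bbox^X$ of $\R$ carrying $\X$: if $\bbox^X \in \Nets - \Nets_Z$ it occurs once as that box's positive conclusion, and if $\bbox^X \in \Nets_Z$ it can occur only through the single completing edge in $\Delta$; these alternatives are mutually exclusive, and within $\Delta$ repetition-freeness comes from claim 1, so no atom labels two positive premises. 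For normality, $\M'$ is a sub-module of the net obtained from the \emph{normal} $\M$ by the $\assoc$- and $\ax$-expansions of steps (ii)--(iii); these expansions create only redexes straddling the boundary between $\D$ and $\M'$ (a split contraction, or a completing axiom facing a cut), all of which are severed once $\D$ is removed, so $\M'$ contains no redex and remains switching acyclic by \Cref{rk:subnet}.

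The main obstacle is the combinatorial bookkeeping behind claim 1---proving that each $\D[X]$ retains \emph{exactly one} free conclusion. This is where the minimality of the selection $(\NN_X, \EE_X)$ and the precise behaviour of the $\assoc$-expansion (which must cleanly partition a shared contraction between $\D[X]$ and $\M'$) have to be verified rather than assumed; the three-case analysis on the negative edge $e:\Xn$ is the device I would use to discharge it, and the same analysis simultaneously underwrites the normality argument in claim 2.
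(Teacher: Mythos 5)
Your decomposition of the argument---reading $\D$ component by component as $\biguplus_{X\in\At{\Gamma_Z}}\D[X]$, then case-splitting on the shape of the negative edge $e:\Xn$ (pending premise, conclusion of a $\w$-node, conclusion of a $@$-node that may or may not be split at step (ii)) to show that each $\D[X]$ leaves at most one edge in $\Delta$, whose polarity is positive exactly when the producer of $\X$ lies in $\Nets_Z$---is the same route as the paper's own appendix proof, which also proceeds one atom at a time and records that exactly one boundary edge, labelled either $\X$ or $\Xn$, is the only edge of $\Delta$ supported by $X$. Your handling of claims 2 and 3 (goodness of $\M'$ via the pending premises, normality because every created redex straddles the removed boundary, and absorption of the whole $Z$-component into $\D(\Nets_Z)$) also matches the paper's.

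There is, however, one step that would fail as written: in claim 1 you conclude that $\D(\Nets_Z)$ is a \bpn by invoking \Cref{lem:B_connected} together with the identity $\nax{\D(\Nets_Z)}=\D$. That identity holds only when every member of $\Nets_Z$ is a single box, i.e.\ in the very first application of the transformation. In the setting where the lemma must actually be applied (the iteration of \Cref{prop:induced}, whose invariant I1 only guarantees that $\Nets$ is a set of \emph{factorized} \bpns), the nets in $\Nets_Z$ are themselves of the form $\D'(\dots)$, so $\nax{\D(\Nets_Z)}$ contains their wirings as well, and it is not \good merely because $\D$ is: one needs in addition that each $\nax{\N_i}$ is \good (true, since each $\N_i$ is a \bpn) and that an internal atom of one $\N_i$ cannot resurface in another component---a consequence of the \goodness of $\nax{\R}$, since in an atomic \MLL module any connecting path is supported by a single atom and would have to cross a conclusion of $\N_i$. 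The cheaper repair is to bypass \Cref{lem:B_connected} and verify the interface conditions of \Cref{def:bpn} directly: repetition-freeness of $\Delta$ is your claim 1; repetition-freeness of box conclusions is inherited from $\R$; and the remaining condition ($\Xn\notin\Delta$ whenever $\bbox^X$ lies inside $\D(\Nets_Z)$) is precisely the polarity bookkeeping your case analysis already establishes, since the surviving $X$-edge of $\Delta$ is negative only when the producer lies outside $\Nets_Z$.
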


\begin{proposition} \label{prop:transformation} With the same assumptions and notations as above,   let  $\R=\pair \Nets \M$,   	$\R'=\El{Z}  {\R}$, and $\pair {\Nets'} {\M'} $ the decomposition of $\R'$  in \Cref{rem:decomposition}.
The following properties hold:
	\begin{enumerate}
		\item 	
		$\Nets'$ is a set of   \bpn's. If $\Nets$ is a set of factorized nets, so is $\Nets'$.
		\item	$\M'$    is a wiring  module   in normal form.
		\item  $\At{\M'}\subsetneq \At{\M}$ and  $Z\not\in \At{\M'}$.
	\end{enumerate}
\end{proposition}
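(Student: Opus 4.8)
The plan is to read the three claims off \Cref{lem:transformation}, adding only the factorization refinement in~(1) and the strictness in~(3). Throughout I would use the elementary fact that the defining steps of $\El Z \R$ are $\{\ax,\assoc\}$-expansions, and that neither rule introduces a new atom; hence the \MLL part of $\R'=\El Z\R$ carries exactly the atoms of $\M$.

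First I would address~(1). By construction $\Nets'=\{\D(\Nets_Z)\}\cup(\Nets-\Nets_Z)$. The members of $\Nets-\Nets_Z$ are untouched by the transformation, so they remain \bpns (and factorized, if $\Nets$ was) by hypothesis. For the fresh component $\D(\Nets_Z)$, \Cref{lem:transformation}(1) gives that its output $\Delta$ is \repfree, which is exactly the interface condition making $\D(\Nets_Z)$ a \bpn. When $\Nets$ is a set of factorized nets, each $\N\in\Nets_Z$ is itself factorized, and since $\D$ is a \wiring by construction, \Cref{def:factorized} applies and yields that $\D(\Nets_Z)$ is factorized. Thus $\Nets'$ is a set of \bpns, factorized whenever $\Nets$ is.

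Claim~(2) is then nothing but \Cref{lem:transformation}(2), which states that $\M'$ is normal and \good, \ie a \wiringM in normal form, so I would simply invoke it. For~(3), \Cref{lem:transformation}(3) supplies $Z\notin\At{\M'}$; combined with the no-new-atoms observation, the fact that $\M'$ is a submodule of the \MLL part of $\R'$ gives $\At{\M'}\subseteq\At{\M}$, and since $Z\in\At{\M}$ the inclusion is strict.

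The only genuine work therefore sits inside \Cref{lem:transformation}, which I take as given; relative to that lemma the Proposition is pure packaging. If I had instead to establish the lemma, I expect its claim~(2)---and specifically the preservation of \goodness---to be the crux: one must verify that after removing from $\M$ the components carved into $\D$ (those supporting the atoms of $\At{\Gamma_Z}$), the remaining connected components are still in bijection with their supporting atoms, and that the $\assoc$-expansions have correctly separated any contraction node whose premises are only partially absorbed by $\D$, so that no cut-redex survives and normality is kept.
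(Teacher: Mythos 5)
Your proof is correct and follows essentially the same route as the paper: the paper's own proof likewise derives point (1) from \Cref{lem:transformation}(1) together with the observation that $\Nets_Z\subseteq\Nets$ preserves factorization, and dispatches points (2) and (3) by citing the corresponding points of that lemma. Your extra remarks (the untouched nets in $\Nets-\Nets_Z$, and the no-new-atoms argument giving strictness of $\At{\M'}\subsetneq\At{\M}$) merely spell out details the paper leaves implicit, so there is no substantive difference.
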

\begin{proof}	
	$\D(\Nets_Z)$ is a  \bpn by (1.) in \Cref{lem:transformation}. If $\Nets$ is a  set of  factorized  nets, then  $\D(\Nets_Z)$ is \emph{factorized} because  $\Nets_Z\subseteq  \Nets$, which  is a set of  factorized  nets. Hence point (1.) follows. Points (2.) and (3.) follow from the corresponding points in \Cref{lem:transformation}.
\end{proof}

\begin{remark}
	Notice that for each ground interpretation,	$\sem {\R'} = \sem {\R}$,  and  $\bnet{\R'} = \bnet {\R}$, because the interpretation and $\bnet -$ are invariant by $\{\assoc, \ax\}$ expansion.
\end{remark}

\vskip 4pt

\subsubsection{Factorized form induced by an order $\omega$}
 Given $\R$ a normal \bpn of 
empty conclusion, we produce a factorized form by repeating the application of the basic transformation described above,  eventually processing (eliminating) all atoms in $\R$. The correction of the construction is insured by a set of invariants which are satisfied by  the trivial decomposition of $\R$, and then preserved by each transformation step. 
\begin{proposition}[Induced factorized form]\label{prop:induced} For  $\R$ a normal \bpn of 
	empty conclusion,  let    $\omega = X_1,... ,X_n$  be an arbitrary sequence  enumerating   the    \atoms  in $\At{\R}$.
	To each initial segment   $X_1, \dots, X_j$  is inductively associated a \bpn
	$ \R_\omega^j$ together with a decomposition, as follows.
	\begin{itemize}
		\item[j=0.] $\R_{\omega}^0 \defeq\R$, with decomposition  $\pair {\Ax(\R)} {\M}$  as in \Cref{lem:support}. 

		\item[j+1.]Assume by \ih that  $\R_{\omega}^{j}$ has decomposition $\pair {\Nets} \M$. Using the operation in  \Cref{def:transformation},
		we define
		%
		%
		
		\[
		\R_{\omega}^{j+1} \defeq 
		\left\{
		\begin{array}{l}
			\El Z{\R_{\omega}^{j}}   \quad \mbox{ if  }  X_{j+1} \in \M \\
			\R_{\omega}^{j}	\quad \mbox{ otherwise.}
		\end{array}
		\right.
		\]


		%
		%
		
	\end{itemize}

	For each $j\geq 0$, the decomposition of   $\R_{\omega}^{j}$   as $\pair \Nets \M$ satisfies  the  invariants:
	\begin{enumerate}
		\item[I1] $\Nets$ is a set of \factorized \bpn's. 
		\item[I2]  $\M$  is a wiring module in normal form.  
		\item[I3] No  atom in $X_1, \dots, X_j$ appears in $\At{\M}$. 
	\end{enumerate}

\end{proposition}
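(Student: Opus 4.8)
The plan is to establish all three invariants simultaneously by induction on $j$, deferring essentially all the mathematical content to \Cref{prop:transformation}. The proposition itself then reduces to checking that the hypotheses needed to re-apply the basic transformation are preserved from each step to the next, so that the construction can keep going.

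\emph{Base case} ($j=0$). Here $\R_\omega^0 = \R$ with decomposition $\pair{\Ax(\R)}{\M}$ where $\M = \nax\R$. For I1, each element of $\Ax(\R)$ is a single box, hence a \factorized \bpn by the first clause of \Cref{def:factorized}; thus $\Ax(\R)$ is a set of factorized \bpn's. For I2, $\M = \nax\R$ is normal because $\R$ is normal, and it is \good by \Cref{lem:B_connected}, applied to the \bpn $\R$ (whose conclusions and box-interfaces are \repfree by the very definition of a \bpn); so $\M$ is a wiring module in normal form. Invariant I3 is vacuous, the prefix $X_1,\dots,X_0$ being empty.

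\emph{Inductive step.} Assume $\R_\omega^{j} = \pair{\Nets}{\M}$ satisfies I1--I3, and set $Z = X_{j+1}$. If $Z \notin \At{\M}$, then $\R_\omega^{j+1} = \R_\omega^{j}$ with the same decomposition, so I1 and I2 carry over unchanged, and I3 holds because the prefix grows only by $Z$, which is absent from $\At{\M}$ by the case hypothesis. If instead $Z \in \At{\M}$, then $\R_\omega^{j+1} = \El{Z}{\R_\omega^{j}}$, and the point is that the standing hypotheses under which $\El{Z}{-}$ is defined in \Cref{def:transformation} --- $\Nets$ a set of \bpn's and $\M$ a wiring module in normal form --- are exactly invariants I1 (factorized \bpn's are in particular \bpn's) and I2. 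I would then invoke \Cref{prop:transformation}: writing $\pair{\Nets'}{\M'}$ for the decomposition of $\R_\omega^{j+1}$ supplied by \Cref{rem:decomposition}, its point (1) gives that $\Nets'$ is a set of factorized \bpn's (using that $\Nets$ is, by I1), yielding I1; point (2) gives that $\M'$ is a wiring module in normal form, yielding I2; and point (3), $\At{\M'} \subsetneq \At{\M}$ with $Z \notin \At{\M'}$, yields I3, since $X_1,\dots,X_j \notin \At{\M}$ by the inductive hypothesis and $\At{\M'} \subseteq \At{\M}$, while $Z = X_{j+1} \notin \At{\M'}$.

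Along the way I would record, for later use, that every $\R_\omega^{j}$ remains a \bpn of empty conclusion: this holds at $j=0$ by hypothesis, and $\El{Z}{-}$ proceeds only by $\{\assoc,\ax\}$-expansions, which leave both the conclusions and the boxes untouched, so emptiness of the conclusion as well as the identities $\sem{\R_\omega^{j+1}} = \sem{\R_\omega^{j}}$ and $\bnet{\R_\omega^{j+1}} = \bnet{\R_\omega^{j}}$ propagate. The genuine obstacle is not in this induction at all --- it resides in \Cref{lem:transformation}/\Cref{prop:transformation}, where one must check that the selection--expansion procedure really carves out a \good proof-net $\D$ with \repfree output while leaving behind a \good normal module $\M'$ from which $Z$ has been deleted. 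Granting those facts, the present statement is pure invariant bookkeeping; the only feature worth flagging is the self-reinforcing loop by which I2 (a normal \good module) is precisely the precondition for the next application of $\El{Z}{-}$, together with the strict decrease $\At{\M'} \subsetneq \At{\M}$, which guarantees that each atom is eliminated at most once, so that after $X_1,\dots,X_n$ have all been processed no atom survives in the residual module.
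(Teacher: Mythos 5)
Your proposal is correct and follows essentially the same route as the paper's own proof: induction on $j$, with the base case discharged by the normality of $\R$ and \Cref{lem:B_connected}, and the inductive step reduced point-by-point to \Cref{prop:transformation} (the paper's proof states exactly this, only more tersely, leaving implicit the case split on whether $X_{j+1}$ appears in $\M$ and the observation that I1--I2 are precisely the preconditions for re-applying $\El{Z}{-}$).
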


\begin{proof}
	Clearly, the decomposition $\pair {\Ax(\R)} {\nax{\R}}$  satisfies all invariants. 	
	Assume by \ih that the decomposition $\R^{j}=\pair \Nets \M$ satisfies the  invariants, and let $\R^{j+1}=\pair{ \Nets'} {\M'}$.
	Invariant I1, I2, and I3 follow from the corresponding points in \Cref{prop:transformation}.
\end{proof}

\begin{remark}\label{rem:induced}
	Observe   that  for  $\R_\omega^n =\pair \Nets \M$ we have that $\M$ is empty  (by invariant I3).
\end{remark}

\begin{definition}[\factorized form $\R_\omega$ of  $\R$ induced  by $\omega$]\label{def:induced} 
		Let  $\R$ be a normal \bpn of empty conclusion, and  $\omega$  a total order  on $\At{\R}$. We denote by $\R_\omega$ the \bpn   $\R_\omega^n $ as defined in \Cref{prop:induced} (see also  \Cref{rem:induced}). 
	We call $\R_\omega $ 
	the factorized form   of $\R$ induced by $\omega$. 
\end{definition}

	By  construction, $\R_\omega$ has the properties stated in \Cref{thm:factorization}.
	
	

%


\condinc{}{

\subsubsection{Proof of \Cref{lem:transformation}  }
\begin{figure}
	\centering
	\includegraphics[width=0.9\linewidth]{figures/fig_transformation}
	\caption{}
	\label{fig:transformation}
\end{figure}
\todocf{to update}
Let us revisit the construction of $\D$, one atom at  a time (as illustrated in \Cref{fig:transformation}). 
\SLV{}{
	\begin{lemma}\label{lem:I}With the same notations and assumptions as in \Cref{def:transformation}, the following properties hold
		for each $X\in \At{\M}$.

		\begin{enumerate}
			\item $X$ supports at most one conclusion   in $\Delta$ (the output of  $\D$); 
			if $\X\in \Delta$, then  $\X$ is  a conclusion of exactly one net in $\Nets_Z$.

			\item No edge in $\M'$ has label $\Xcirc$.
			\item  $\M'$ has a   single connected component for each $X\in \At{\M'}$, and it is normal.
		\end{enumerate}
	\end{lemma}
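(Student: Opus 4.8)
The plan is to establish the three properties by a local analysis, one atom at a time, of the connected component $\M[X]$ that carries each $X\in\At{\Gamma_Z}$, tracking precisely which edges and nodes of $\M[X]$ the construction attributes to $\D[X]$ and which survive into $\M'$. The backbone is the structure lemma for normal wiring modules recalled just above: since $\M$ is \good and normal with empty conclusion, every $X\in\At{\M}$ is carried by a single component $\M[X]$ containing a single cut $\cut(\X,\Xn)$ whose positive premise $f:\X$ is pending, and whose negative premise $e:\Xn$ is either pending, a $\w$-conclusion, or the conclusion of a contraction node all of whose premises are pending. By \Cref{lem:connected} the polarized graph $\pol{\M[X]}$ is a directed tree with a unique initial edge, which by \Cref{fact:initial} is its unique negative conclusion (or a positive pending premise). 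I would use this tree shape uniformly in all three arguments.

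For property~(1), after the $\ax$-completion each $\D[X]$ is a connected atomic \MLL proof-net supported by $X$, so by \Cref{lem:connected} its polarized form is again a tree with a single initial edge; thus $\D[X]$ has exactly one conclusion that is not matched inside $\D[X]$. The selection criteria defining $\EE_X$ are designed so that every conclusion of $\D[X]$ \emph{other than} this initial one is cut with an edge of $\Gamma_Z$, hence consumed by a net of $\Nets_Z$; consequently at most one $X$-labelled conclusion can remain in the output $\Delta$. For the second half, if the surviving conclusion is the positive edge $\X$, the structure lemma forces its cut-partner $f:\X$ to lie in $\Gamma_Z$ as the output of $\bbox^X$, and condition~(2) of \Cref{def:bpn} makes $\bbox^X$ the unique box with that positive conclusion.

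Property~(2) is the bookkeeping step. By construction the edges that leave $\M$ are exactly those of $\EE_X$ together with the fresh edges produced by the $\assoc$- and $\ax$-expansions, and all of them are attributed to $\D[X]$; the residual module $\M'$ is obtained by deleting $\Nets'$, hence $\D$, so none of these edges — the ones marked $\Xcirc$ — can appear in $\M'$. I would verify this clause by clause against the three cases defining $\EE_X$ and against the expansion steps~(ii) and~(iii). Specialised to $X=Z$, where the entire component $\M[Z]$, its cut and its routing toward $\Gamma_Z$ are absorbed into $\D[Z]$, this yields $Z\notin\At{\M'}$.

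Property~(3) reduces, through \Cref{thm:char}, to showing that distinct initial edges of $\M'$ carry distinct atoms — equivalently, that deleting $\D[X]$ from $\M[X]$ leaves a single connected residual component supported by $X$, still in normal form. Normality is the easy half: the positive conclusions of the axioms introduced by completion are premises of cuts lying inside $\D$, never inside $\M'$, and the contraction split never places two contraction nodes in cut position, so no redex is created in $\M'$. The genuinely delicate point, and the main obstacle, is the $\assoc$-expansion case, in which a contraction node of $\M[X]$ has some but not all of its premises in $\Gamma_Z$: one must check that splitting it into the part sent to $\D[X]$ and the part kept in $\M'$ leaves the retained piece \emph{connected} — so that $\M[X]\setminus\D[X]$ is one component and not two — and correctly re-anchored through the fresh edge created by the split. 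Since $\pol{\M[X]}$ is a tree and the carved-out $\D[X]$ is precisely the hanging subtree reaching toward $\Gamma_Z$, its removal cannot disconnect the remainder; converting this picture into a rigorous argument, by following the fresh $\assoc$-edge and its $\ax$-completion, is where the real work lies.
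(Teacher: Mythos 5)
Your strategy is the one the paper itself follows: fix an atom $X$, use the structure of normal \good modules (one $\cut(\X,\Xn)$ per component, with the $\w$/$\ax$/$\cn$ trichotomy for its negative premise), and track how selection, the $\assoc$-split and the $\ax$-completion distribute that component between $\D[X]$ and $\M'$; the paper merely spells this out as an explicit case analysis (full absorption, split $\cn$-node, single axiom, $\w$-node). However, your justification of property~(1) has a genuine flaw. You bound the number of surviving conclusions by the unique initial edge of $\pol{\D[X]}$, asserting that every conclusion of $\D[X]$ \emph{other than} this initial one is cut against $\Gamma_Z$. Since $\D[X]$ is a proof-net (no pending premises), its unique initial edge is its unique \emph{negative} conclusion (\Cref{fact:initial}, \Cref{lem:connected}), so your assertion implies that only a negative conclusion can ever survive into $\Delta$. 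That is false precisely in the case the paper's proof singles out: when $\bbox^X\in\Nets_Z$ (so $f:\X\in\Gamma_Z$) and the contraction node is split, the unique negative conclusion of $\D[X]$ is the one cut against $f\in\Gamma_Z$, hence consumed, and the edge surviving into $\Delta$ is \emph{positive} --- a conclusion of one of the $\ax$-nodes added at completion, cut against the contraction piece left in $\M'$. Your own second half (``if the surviving conclusion is the positive edge $\X$\dots'') presupposes exactly this situation, so the two halves of your argument contradict each other. The correct source of the bound is not the initial edge but the other half of the structure lemma you invoke: each component contains \emph{at most one} $\cn$-node, so the construction performs at most one split, hence at most one conclusion of $\D[X]$ faces $\M'$ rather than $\Gamma_Z$ (negative when $f\notin\Gamma_Z$, positive when $f\in\Gamma_Z$).

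Two lesser points. For property~(2) your bookkeeping shows only that the edges attributed to $\D[X]$ disappear from $\M'$; it does not show that no $X$-labelled edge remains in $\M'$, and indeed in the split case the unselected premises and the residual $\cn$-node do remain there --- the property only has content for fully absorbed atoms, in particular $X=Z$, which is the one part of your argument that is sound and also all that \Cref{lem:transformation} ultimately needs. For property~(3) you correctly isolate the split as the delicate case but explicitly leave the connectivity argument as a picture; the missing observation is that the residual component is held together by the $\cn$-node kept in $\M'$, whose conclusion still feeds the old cut and whose premises are the unselected edges plus the fresh edge toward $\D$ --- the same observation shows that no new redex is created, giving normality.
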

}
\begin{proof}
	Consider  $\InTrees{Z} \subseteq \Nets$.
	For each \atom $X\in \At{\Gamma_Z}$,  $\M$ contains exactly one node $\cut(\X,\Xn)$, and 
	exactely one connected component $\M(X)$, which is as in \Cref{lem:}.
	
	The positive edge $f:\X$ of the cut is conclusion  of some $\N^X\in \Nets$; 
	the negative edge $e:\Xn$  is conclusion of either    a $\w$-node, or a  $@$-node or  of a net in $\Nets$. 
	
	$\M(X)$ has  a unique positive premise (the edge $f$) and a (possibly empty) set of  negative premisses $e_i:\X_n$  ($i\in I$) where each 
	$e_i$ (if any) is conclusion of a distinct $\N_i\in \Nets$.

	%
	\begin{enumerate}[a.]
		\item Assume that each \emph{negative} pending premises of $\M(X)$ (if any) is  conclusion of some $\N\in \Nets_Z$.
		Step (ii) in \Cref{def:transformation} does not apply. 
		\begin{itemize}
			\item Case 
			$\N^X\in \Nets_Z$. 	This case includes $X=Z$ by construction (and possibly other \atoms), and the case of $e:\Xn$ root of a $\weak$-node.

			At step (iii) each  negative pending  premise  $e_i$  is $\ax/\cut$ expanded; the  conclusion $e_i':\X$ of the new $\ax$-node     is connected via the new cut  to the same $\N_i$.  	Clearly,   $X\not \in \At{\Delta}$, hence condition (1.) trivially holds. 
			Since we are assuming that no 
			$\N\in (\Nets-\Nets_Z)$ has a conclusion $\Xcirc$, then 
			$ X\not\in \At {\M'}$. Hence 
			conditions  (2.) holds.  Condition 3. is also  trivially satisfied.
			
			\item Case 
			$\N^X\in \Nets_Z$. 
		\end{itemize}

		\item Otherwise, we examine the negative edge of  $\cut(\X,\Xn)$,  which  is conclusion of either    a $\w$-node, or a  $@$-node or  of a net in $\Nets$. 
		
		\begin{itemize}
			\item Assume $e:\Xn$ is the root of a $@$-node (\Cref{fig:transformation}). If at step (ii) no other edge in $E_\D$ incides the $@$-node, then $e$ is simply $cut/ax$expanded at step (iii), and all invariants are immediate.
			
			Otherwise, at step (ii) the $@$-node is split in two (\Cref{fig:transformation}). Let us examine the $@$-node in $N_\D$. The edges which were already selected at step (i.), are all connected to some $\N\in \Nets_Z$. The edge created at step (ii) is possibly $\cut-\ax$ expanded at step (iii). Of the new edges, 
			exactly one edge $e_0$  has source but no target in $N_\D$ and therefore is a  conclusion of $\D$. Such an edge $e_0$ is labeled by either $\X$ or $\Xn$, and is a premise of $\M'$. So , $e_0$ belongs to $\Delta$ (the output of $\D$)---it is the only edge in $\Delta$ labelled by $\Xcirc$. 
			Observe moreover that if  $X$ occurs positive in $\Delta$, then necessarily $X$ occurs positive in one $\N\in \InTrees{Z}$ and therefore does not in any other $N'\in \Nets$ (by invariant (1.)). Therefore condition (1.)  hold. Conditions (2.) trivially holds (because $X\in \omega$ only concerns case A. Condition 
			(3.) is  immediate by construction.

			\item If $e:\Xn$ is a single edge,  which is conclusion of    a net $\N^e\in \Nets$, and it is connected via a cut to the positive conclusion of some  $\N^{X}\in \Nets$. If $e:\Xn$ is selected at step (i), we have that   either  $\N^e$, or $\N^{X}$ belong to $\InTrees{Z}$. In either  case, $\D$ contains an $\ax$-node of conclusions $\X,\Xn$. It is immediate  that all conditions  are satisfied.
			\item If $e:\Xn$  is conclusion of a $\w$-node, necessarily we are in case (a.). 
		\end{itemize}

	\end{enumerate}
	
\end{proof}

}
%
%

	$\R_\omega$  is a cut-net  of components $\Ax(\R_\omega) \cup \Wirings{\R_\omega}$, where $\Ax{(\R_\omega)}=\Ax(\R)$
and each $\D\in  \Wirings{\R_\omega} $ is   \emph{cut-free}. 
\section{BN's and Proof-Nets: Inference}\label{app:Binference}
\subsection{Exact Inference over Bayesian Networks}\label{app:MP}
We briefly introduce message passing over cliques trees, and 
 we sketch  a  way to generate a clique tree $\Cliques \netB\omega$ for a \BN $ \netB $ from an elimination order $\omega$.


To do so,  it is useful to first recall  a basic  inference algorithm,   Variable Elimination  (\Cref{alg:VE}), which is simple and intuitive, and analyze its control flow.

\subsubsection{Variable Elimination}

We  recall the Variable Elimination Algorithm (\Cref{alg:VE}), which given a \BN $\netB$ over variables $\bX$, a variable $Y\in \bX$, 
and an ordering of $\bX-Y$, computes the marginal  probability $\Pr(X)$.

\newcommand{\Factors}[1]{\mathsf{Factors}^{}}

\begin{algorithm}
	\caption{Sum-Product Variables Elimination}
	\label{alg:VE}
	\begin{algorithmic}[1]

		\Require 
		\State $\netB $  \Comment{\BN over variables $\bX$}
		\State $Y \in \bX$   \Comment{A variable }
		\State $\omega$  \Comment{An elimination order  on  the {n-1} variables $\bX - {Y}$ }
		\State{$ \Phi$ }  \Comment{The set of  CPTs in  $\netB $}
		
		\Ensure  the marginal $\Pr(Y)$
		
		\Procedure{Variables Elimination}{}	
		
		\State	 $\Factors{0} \gets \Phi$  \Comment{CPTs of $\netB $ }
		
		\For{j = 1 to (n-1)}
		\State $Z\gets\omega(j)$  \Comment{The j-th variable to eliminate}
		\State $\mathsf{Factors}_Z\gets \{\phi \in \Factors{j-1} \st Z\in \scope \phi\}$
		
		\State $\psi^j \gets  \BigFProd\limits_{\phi \in \mathsf{Factors}_Z} \phi $  \Comment{Product}
		\State$\tau^j \gets \sum\limits_z \psi^j$  \Comment{Sum}
		\State  $\Factors{j} \gets  \{\tau^j\} \cup(\Factors{j-1}-\mathsf{Factors}_Z)$  
		
		\EndFor
		\Return  $ \BigFProd\limits_{\phi\in {\Factors{n-1}}}\phi  $
		
		\EndProcedure
	\end{algorithmic}
\end{algorithm}

Any elimination order will produce the correct marginal, however the cost of the computation depends on the quality of the order  $\omega$. Notice that 
the cost of running \Cref{alg:VE} is dominated by the product  produced at line 10. 
\begin{property}[Cost of VE and width]\label{fact:VEcost} Given a \BN $\netB$ and an order $\omega$, if   the largest factor $\tau_j$  produced by the \Cref{alg:VE} has $w$ variables, then 
	the time and space cost of running  the   algorithm is 
	$ \BigO{n~ \Exp{w}} $, where $ n $ is the number of variables in the Bayesian network $\netB$. 
\end{property}
The number $w$ is known as the 
width of  the order $\omega$, and  is a  measure of its  quality.

\subsubsection{The Clique Tree generated by $\omega$: analyzing the control flow of VE}\label{app:cliques}
We can graphically represent   the flow of computation determined by the  execution of  \Cref{alg:VE} with elimination order $\omega = X_1  < \dots < X_n$, as a graph. 
The  nodes are  $\{1, \dots, n\}$ (one per eliminated variable); there is an oriented edge $(i,j)$ whenever $\tau^{i}$ is used at step $j$ to compute $\tau^{j}$.
To each vertex $i$ we associate 
 the set of variables which appears in $\psi_i$, noted $\cC_{i}$, and  the set  $\Phi^{i}$ of the CPTs which are used at step $i$.
%
The above graph is a \emph{tree} because each $\tau_i$ is only used once. 

The tree together with the function which assigns $\cC_i$ to each node is a  \textbf{\emph{clique tree}} (see \Cref{def:clique_tree}), namely  a clique tree for $\netB$ induced by $\omega$,  which we denote  $ \Cliques \netB\omega $. Recall that 
its   width   (\Cref{def:clique_tree})  is defined   as the size (\ie, the number of variable) of the largest $\cC_i$ minus one. Notice that it is the same as the width of the order we started with. 
Moreover, the number of nodes is clearly at most $n$ (one node per eliminated  variable).
\condinc{}{
Formally, the tree $ \Cliques \netB \omega  $ which we have obtained  above  is a \emph{clique tree}, 
 a data structure which  allows for  efficient inference algorithms. }

Notice also that we have defined an assignment of each factors of  $\netB$ to a node in the tree. We will need this for Message Passing.
\begin{notation}Let  $\netB$ be a \BN,   $\Phi$ the set of its CPT's, and $(\tree, \cC)$ a clique tree for $\netB$.
	An \textbf{assignment} $\alpha$ of $\Phi$ to $(\tree, \cC)$  is a function $\alpha$ which associates each  $\phi^X\in \Phi$ to a node $i$ such that $\Var{\phi^X} \subseteq \cC_i$. We write $\Phi^i $ for  the set of CPTs associated to $i$.
\end{notation}

\subsubsection{Message Passing}

Given   the clique tree $\Cliques \netB \omega =(\tree,\cC)$, we can perform inference in   terms of Message Passing (\Cref{alg:messages}),
which in a way   extends \Cref{alg:VE}. 
The distinctive advantage  of the  Message Passing is that it    allows  to compute  (over  \emph{the same clique tree}) 
the marginal of \emph{every single  variable} \footnote{This, independently from  the fact that  we had obtained $\Cliques \netB \omega$ from the data flow of executing \Cref{alg:VE} with one specific   order on $\bX$, to compute a specific  marginal $Y$. }  $Z\in \bX$, simply by choosing as root $r$ a node in the tree $\tree$ such that $Z$ belongs to $\cC_r$. The cost of computing $\Pr(Z)$ will be  $ \BigO{n~ \Exp{w}}$.
Furthermore, 
by exploiting the properties of clique trees,  message passing can  compute the marginal for  \emph{all  the $n$ variables} in the original \BN, with \emph{ only  two traversals} of the  tree, at a \emph{total} cost of $ \BigO{n~ \Exp{w}} $.

%
	%
	%
	%
	%
	%
	%
	%
	%


\Cref{alg:messages} describes  
the basic message passing, which given a clique tree for $\netB$, performs a single traversal of the tree, computing  the marginal of a single (but arbitrary) variable.
%
\condinc{}{The inputs are a \BN $\netB$ over $\bX$, and a clique tree $(\tree, \cC)$ for $\netB$ together with a function which assign each \CPT of $\netB$ to a unique node.   The flow-graph $\Cliques \netB \omega$ provides  such a tree and assignment.
}
To compute $\Pr(Z)$ (for any arbitrary $Z\in \bX$), 
first, choose as root a node $r$ such that $\cC_r$ contains $Z$.
The  choice of a root determines an  orientation for the tree  (all edges are directed towards $r$).    
We can view the algorithm  as a process
of passing a message  from the leaves  towards  the root. 
Node  $i$ 
passes  a message ($\tau_{i}=  \project{\psi^i}{\cS_{ik} }$)   to its  parent  node $k$. 
When $k$ receives all the  input messages, it processes  the inputs together with its owns factors, and send the message $\tau^k$ forward.  
A node  $k$ is ready to send a message only when it has received all messages from its  children.


\begin{algorithm}
	\caption{Message Passing  (single traversal)}
	\label{alg:messages}
	\begin{algorithmic}
		\Require 
		\State $\netB $  \Comment{\BN over variables $\bX$}
		\State $Z \in \bX$   \Comment{A variable }
		\State $\Phi$ \Comment{The CPTs of $\netB$}
		\State $(\tree,\cC), \alpha$  \Comment{A   clique tree  and an assignement for  $\Phi$}
		\State{$ r$}  \Comment{root: node in the tree $\tree$ with  $Z\in  \cC_r$}

		\Ensure  the marginal $\Pr(Z)$
		
		\Procedure{Message Passing (single pass)}{}
		
		%
		%
		%
		%
		
		\State{Let  $\tree$ be   oriented, directing all edges towards  root    $r$. }

		\State{Let $I_k$ be the set of  children of node $k$ and $\cS_{ik} = \cC_i\cap \cC_k$ }
		\State	$\psi^k \leftarrow    
		\big(   \BigFProd_{\phi\in \Phi^k} \phi \big)   \FProd  \Big( \BigFProd_{i\in I_k}  \project{\psi^i}{\cS_{ik} } \Big)$ 
		
		%
		\Return $  \project{\psi^r}{Z} $
		
		\EndProcedure
	\end{algorithmic}
\end{algorithm}

\condinc{}{
	\begin{algorithm}
		\todocf{\pink{This can easily be made inductive}}
		\caption{Message Passing Algorithm (single traversal)}
		\label{alg:messages}
		\begin{algorithmic}
			\Require 
			\State $\netB $  \Comment{\BN over variables $\bX$}
			\State $Y \in \bX$   \Comment{A variable }
			\State $\Phi$ \Comment{The CPTs of $\netB$}
			\State $(\tree, \alpha)$  \Comment{A   tree and an assignement of the CPTs  }
			\State \Comment{to nodes of $\tree$ . $ \Phi^k $ denotes $\alpha^{-1}(k)$} 
			\State{$ r$}  \Comment{root: node in the tree $\tree$ with  $Y\in  \cC_r$}

			\Ensure  the marginal $\Pr(Y)$
			
			\Procedure{Message Passing (single pass)}{}
			
			\State{Let  $\tree$ be   oriented, directing all edges towards  root    $r$. }

			\While{root $r$ is not marked as ready}
			\State choose a node $k$ which is not ready and  whose inputs  $i\in I$ are all marked as ready.  Let $j$ be the target of $k$.
			
			\State	$\tau^k \leftarrow    
			\project {\big(\BigFProd_{\phi\in \Phi^k}\phi \big)\FProd   \big(\BigFProd_{i\in I} \tau^i \big)}  {\cS_{kj}} $
			\State mark $k$ as ready\\
			\EndWhile
			
			\Return $  \project{\tau^r}{Y} $
			
			\EndProcedure
		\end{algorithmic}
	\end{algorithm}
}

We recall the following standard results 
\begin{property}[Message passing and VE]\label{fact:mpVE} Let    $\netB$ be a \BN over $n$ variables $\bX$.
	\begin{enumerate}
		\item The graph $\Cliques \netB \omega$ induced by the VE algorithm is a clique tree for $\netB$.
		
		\item For every $Z\in \bX$, and every assignement of the CPT's of $\netB$ to $\Cliques \netB \omega$,   \Cref{alg:messages} computes $\Pr_\netB(Z)$ 
		with cost $\BigO{n\cdot \Exp{n}}$. 
	\end{enumerate}
	\condinc{}{Observe that if  $\omega = X_1, \dots, X_n$, the cost of computing $\Pr_\netB(X_n)$ by running \Cref{alg:VE} over $\netB$ with order $\omega-X_n$ is the same as the cost of running \Cref{alg:messages} over $\Cliques \netB \omega$.}
\end{property}
For an opportune adaptation of the message passing algorithm,  implementing a second traversal of the clique tree (from the root upwards), \emph{all the marginals} can be computed, with a \emph{total} cost which is only twice the cost of computing a single marginal.

Here we have generated a clique tree for a \BN $\netB$ starting from an elimination order $\omega$---it should be noticed that this is not the only  way. Several  algorithms are available  to \emph{compile} a BN into a clique tree. Several algorithms  are also available  to compute an elimination order. 
Poly-time algorithms are able to convert between the two, while preserving the respective width (hence, the cost of computation.
\condinc{}{It is important to stress that to compute an order or a clique tree of \emph{minimal width} is an NP-hard problem. In practice, however,  heuristics can  produces good results. }

\subsection{A proof-theoretical counter-part to message passing}

\begin{prop*}[\ref{fact:tree} Clique tree]  Let $\R$ be a   factorized \bpn, and $\tree$  the restriction of its correction graph  $\tree_\R$ to $\Wirings{\R}=\{\D_1, \dots, \D_n\}$. 
	The tree $\tree$   equipped with the function $\cC$ which  maps each $\D_i$ into  $\At{\D_i}$ is a \emph{ clique tree}.
\end{prop*}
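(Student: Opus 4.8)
The plan is to verify the three requirements of \Cref{def:clique_tree} in turn: that $\tree$ is a tree, \emph{family preservation}, and the \emph{jointree property}, the last being the crux where goodness of $\nax{\R}$ enters.

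First I would check that $\tree$ is a tree. Since $\R$ is a factorized \bpn it is a cut-net, so its correction graph $\tree_\R$ is a tree (\Cref{def:cut-structure}) with node set $\Ax(\R) \cup \Wirings{\R}$. In a factorization each box occurs as a single-box component all of whose conclusions are cut against one parent wiring (\Cref{def:wiring_cut}); hence every box is a leaf of $\tree_\R$ adjacent only to a wiring, and the parent of any node is always a wiring (boxes, being leaves, are never parents, and when $\Wirings{\R}\neq\emptyset$ the root is a wiring). Deleting all box-leaves therefore leaves the wirings connected to one another, and as a subgraph of a tree this is acyclic; so $\tree$, the restriction of $\tree_\R$ to $\Wirings{\R}$, is a tree.

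Next I would treat family preservation. Fixing $X\in\bX=\At{\R}$, the box $\bbox^X$ has conclusions $\X,\Yn_1,\dots,\Yn_k$, and under the correspondence with $\Bnet{\R}$ the negative conclusions are exactly the parents, so $\{X\}\cup\Pa{X}=\At{\bbox^X}=\{X,Y_1,\dots,Y_k\}$. Since $\bbox^X$ is a leaf whose every conclusion is cut against one wiring $\D_i$, and each such $\cut(\Z,\Zn)$ forces $Z$ to label an edge of $\D_i$, I obtain $\{X\}\cup\Pa{X}\subseteq\At{\D_i}$, which is the clique of $\D_i$; this is family preservation.

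The main work, and the expected obstacle, is the jointree property. Suppose $X\in\At{\D_i}\cap\At{\D_j}$; I must show $X\in\At{\D_k}$ for every $\D_k$ on the (unique) $\tree$-path from $\D_i$ to $\D_j$. Since $\R$ is a \bpn, $\nax{\R}$ is good by \Cref{lem:B_connected}, so by \Cref{thm:char} all edges of $\nax{\R}$ supported by $X$ form a single connected component. Choosing $X$-supported edges $e_i\in\D_i$ and $e_j\in\D_j$, they are joined by a path $\mathfrak r$ whose edges are all supported by $X$. The plan is to project $\mathfrak r$ onto $\tree$, sending each edge to the wiring containing it and each traversed cut to the corresponding $\tree$-edge; since every cut crossed by $\mathfrak r$ is of the form $\cut(\X,\Xn)$, this yields a walk in $\tree$ from $\D_i$ to $\D_j$ through wirings all containing an $X$-supported edge. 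As any walk between two nodes of a tree covers the unique path between them, every $\D_k$ on that path is visited, whence $X\in\At{\D_k}$; alternatively one runs this as an induction along $\mathfrak r$, each step crossing one $X$-cut to an adjacent wiring, exactly as in the appendix. The delicate point to get right is that $\mathfrak r$ stays among wirings and wiring-to-wiring cuts rather than wandering into box-conclusion edges (which are dead-ends in the $X$-component), so that the projection genuinely lands in $\tree$ and covers the required path.
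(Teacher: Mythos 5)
Your proposal is correct and takes essentially the same route as the paper: the crux, the jointree property, is obtained from the goodness of $\nax{\R}$ by connecting two $X$-supported edges with a path of $X$-supported edges and projecting/reflecting it onto the correction tree, which is precisely the paper's (appendix) argument. The only difference is that you also spell out the tree structure (boxes are leaves) and family preservation, steps the paper dismisses as easy to verify, and your verifications of these are sound.
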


\SLV{}{}{
	All properties in \Cref{def:clique_tree} are easy to verify.
	The jointree property is  given by the following Lemma.
	
	\begin{lemma}Let $\tree(\R)$ be as in \Cref{fact:tree}.
		If $X\in \At{\D_i}$ and $X\in \At{\D_j}$, then $X\in \At{\D_k}$, for each  $\D_k$ which is in the path between $\D_i$ and $\D_j$ in $\tree(\R)$.
	\end{lemma}
	
	\begin{proof}
		Notice that $\tree$ is a tree  which has for nodes $\Wirings{\R}=$ and an edge  $(\D_i, \D_j)$ exactly when there is a cut between $\D_i$ and $\D_j$.
		
		The claim follows from the fact that $\nax{\R}$ is  \good (\Cref{lem:B_connected}  	).
		Let $e_i$ and $e_j$ be two edges supported by $X$, with  $e_i$ in  $\D_i$  and $e_j$
		in $\D_j$.  Since $\nax{\R}$ is \good,   in $\R$ there is path $\mathfrak r$ between $e_i$ and $e_j$ whose edges are all supported by $X$. Such a path 
		is necessarily reflected into a path $\patht $ in $\T_{\R}$.

		Since  $\T_{\R}$ is a tree, there is only one path between $ \D_i $ and  $\D_j$. We prove (2.) by induction on the path $\patht$. 
		If  $i=j$, the claim is trivially true. Otherwise,  let $\patht=\D_i,\D_h, \dots, \D_j$. 
		Necessarily, in the cut-net  $\R$ there is a  $\cut(\X,\Xn)$ connecting  the net $\D_i$ and  the net $\D_h$, and so $X\in \At{\D_h}$.
		We conclude by \ih.
		
		
	\end{proof}
}

%
%
%
%

\begin{prop*}[\ref{prop:cost}]With the same assumptions as above, let $w$ be  the width of the clique tree $\Cliques {\netB} {\omega}$  induced by $\omega$, it holds:
	\begin{center}
		$\Width{\R_\omega}$ is no greater than   $w$.
	\end{center}
	
	%
\end{prop*}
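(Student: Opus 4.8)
The plan is to run the factorization algorithm of \Cref{prop:induced} (which produces $\R_\omega$) and the Variable-Elimination analysis of \Cref{app:cliques} (which produces $\Cliques{\netB}{\omega}$) side by side, observing that both are driven by the same order $\omega$ and process one atom/variable at a time. I would prove by induction on the length of the processed prefix $X_1,\dots,X_j$ of $\omega$ that the running decomposition $\pair{\Nets}{\M}$ of $\R_\omega^{j}$ is in bijection with the running factor set $\mathsf{Factors}_{j}$ of \Cref{alg:VE}, the correspondence being $\N \mapsto \sem{\N}$: recalling from \Cref{thm:PCoh} that $\sem{\N}$ is a factor over $\At{\Gamma}$ (for $\N$ of conclusions $\Gamma$), the invariant is that corresponding net $\N$ and factor $\phi$ satisfy $\At{\Gamma} = \scope{\phi}$.

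For the base case $j=0$, the decomposition is $\pair{\Ax(\R)}{\nax{\R}}$ and the factor set is the collection of CPTs. A box $\bbox^X$ of conclusions $\X,\Yn_1,\dots,\Yn_k$ is sent to $\iphi(\bbox^X)=\phi^X$, whose scope $\{X\}\cup\Pa{X}=\{X,Y_1,\dots,Y_k\}$ is exactly $\At{\X,\Yn_1,\dots,\Yn_k}$; so the scopes match. For the inductive step, processing $Z=X_{j+1}$, the transformation of \Cref{def:transformation} gathers $\Nets_Z=\{\N\in\Nets \mid Z\in\At{\Gamma_\N}\}$ (writing $\Gamma_\N$ for the conclusions of $\N$) and builds a \wiring $\D^Z$ with $\At{\D^Z}=\At{\Gamma_Z}=\bigcup_{\N\in\Nets_Z}\At{\Gamma_\N}$, replacing $\Nets_Z$ by the single net $\N'=\D^Z(\Nets_Z)$. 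On the elimination side, the step for $Z$ gathers $\mathsf{Factors}_Z=\{\phi\mid Z\in\scope{\phi}\}$, sets $\cC_Z=\scope{\prod_{\phi\in\mathsf{Factors}_Z}\phi}=\bigcup_{\phi\in\mathsf{Factors}_Z}\scope{\phi}$, and replaces $\mathsf{Factors}_Z$ by the message $\tau^Z=\sum_Z\prod\mathsf{Factors}_Z$. By the inductive correspondence $\Nets_Z$ matches $\mathsf{Factors}_Z$, hence
\[
\At{\D^Z}=\bigcup_{\N\in\Nets_Z}\At{\Gamma_\N}=\bigcup_{\phi\in\mathsf{Factors}_Z}\scope{\phi}=\cC_Z .
\]
To keep the bijection alive I would then check $\N'\mapsto\tau^Z$: the output $\Delta$ of $\D^Z$ consists of those conclusions of $\D^Z$ not cut against $\Nets_Z$; since the $Z$-component of $\D^Z$ is entirely internal (so $Z\notin\At{\Delta}$, cf.\ \Cref{lem:transformation}) we get $\At{\Delta}=\At{\Gamma_Z}-\{Z\}=\cC_Z-\{Z\}=\scope{\tau^Z}$, and by \Cref{thm:PCoh} the scope of $\sem{\N'}$ is indeed $\At{\Delta}$. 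This closes the induction.

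From $\At{\D^Z}=\cC_Z$ for every processed atom it follows that the \wirings of $\R_\omega$ and the cliques of $\Cliques{\netB}{\omega}$ carry the same variable sets. Since $\Width{\R_\omega}$ is the maximal $\size{\At{\D}}$ over $\D\in\Wirings{\R_\omega}$ minus one (\Cref{def:width_R}) and $w$ is the maximal $\size{\cC_i}$ minus one (\Cref{def:clique_tree}), we conclude $\Width{\R_\omega}\le w$; the argument in fact delivers equality, but $\le$ is all we need. The main obstacle I anticipate is precisely the bookkeeping of the inductive step: verifying that the minimal-subgraph selection together with the $\{\assoc,\ax\}$-expansions of \Cref{def:transformation} collects exactly the atoms $\At{\Gamma_Z}$—neither importing spurious ones nor dropping any—and that the resulting output scope is exactly $\cC_Z-\{Z\}$. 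This is the step where the fact that $\nax{\R}$ is \good (one connected component per atom, \Cref{thm:char}) is essential: it ensures that the multiplicative structure realizes the factor product with scope exactly the union of the input scopes, so that the proof-net combination mirrors the Variable-Elimination product with no over- or under-counting of variables.
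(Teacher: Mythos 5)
Your overall strategy---running the factorization of \Cref{prop:induced} and \Cref{alg:VE} side by side under the same order $\omega$---is exactly the route the paper intends (the paper itself only gestures at it, saying the two constructions have ``similar structure''), but the invariant you maintain is too strong and is in fact false. The failing step is ``$\At{\Delta}=\At{\Gamma_Z}-\{Z\}$'': from $Z\notin\At{\Delta}$ you may only conclude $\At{\Delta}\subseteq\At{\Gamma_Z}-\{Z\}$, and the inclusion can be strict, because the transformation of \Cref{def:transformation} can make atoms \emph{other than} $Z$ internal at the step for $Z$. Concretely, take the two-variable BN $Z\to X$ with $X$ childless: $\R$ consists of $\bbox^Z:\vdash\Z$, $\bbox^X:\vdash\X,\Zn$, a cut on $Z$, and a cut of $\X$ against a $\w$-node. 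Processing $Z$ absorbs the entire $X$-component (the weakening) into $\D^Z$, so $X\notin\At{\Delta}$; but Variable Elimination only ever sums out the variable currently being processed, so $X\in\scope{\tau^Z}$. The scopes diverge, and so does the structure downstream: at the later step for $X$, VE performs a step and creates a clique, while the factorization skips $X$ entirely (this is precisely the subtlety the paper flags in the remark following the proposition: $\Nets_{X_i}$ may be empty, so $\R_\omega$ need not have one wiring per variable, and no bijection between wirings and cliques, or between nets and factors, can exist). Your closing parenthetical that the argument ``in fact delivers equality'' of widths is wrong for the same reason.

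The gap is repairable, and the repair is what the paper's argument amounts to: weaken the invariant from a scope-preserving bijection to a scope inclusion. Maintain a (not necessarily injective) assignment $\mu$ from the nets of the running decomposition to the running VE factors with $\At{\Gamma_{\N}}\subseteq\scope{\mu(\N)}$; boxes map to their CPTs with equality. At the step for $Z$, every $\N\in\Nets_Z$ satisfies $Z\in\At{\Gamma_{\N}}\subseteq\scope{\mu(\N)}$, hence $\mu(\N)\in\mathsf{Factors}_Z$, which already gives $\At{\D^Z}=\At{\Gamma_Z}\subseteq\cC_Z$; the new net $\N'$ is mapped to $\tau^Z$ (legitimate since $\At{\Delta}\subseteq\At{\Gamma_Z}-\{Z\}\subseteq\scope{\tau^Z}$), and any net outside $\Nets_Z$ whose image was consumed is remapped to $\tau^Z$ (legitimate since $Z$ does not occur among its conclusion atoms); the same remapping handles the steps that the factorization skips, where the proof-net side is unchanged but VE still replaces factors. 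This yields $\At{\D}\subseteq\cC_i$ for every wiring $\D\in\Wirings{\R_\omega}$ and its corresponding clique, hence $\Width{\R_\omega}\le w$---which is exactly why the proposition asserts an inequality rather than the equality your draft aims for.
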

By examining the construction of $\Cliques {\netB}{\omega}$ which we have sketched in \Cref{app:cliques},
one  can easily realize that $\R_\omega$ has a similar structure to that of $\Cliques {\netB}{\omega}$, and so a similar width.

		\begin{remark}The only subtlety in relating $\Cliques \netB \omega$ and $\R_\omega$, for the same $\omega$,  
			is  that    \Cref{alg:VE} has a step $i$ for each  $1\leq i\leq n$ (yielding a clique).
			However, $\R_\omega$ does not have a \wiring $\D_{X_i}$ for each $1\leq i\leq n$, because for some \atoms $ X_i $ in $\omega$,  $\Nets_{X_i}$ may be empty.
			So, for each $\D_{X_i}$ there is a step $i$ in  \Cref{alg:VE}, but the converse is not true.
		\end{remark}


		\condinc{}{
		\paragraph{Discussion}
		The cost   for the interpretation does not take into account the preliminary  factorization of $\R$. Notice however that this transformation (which is only concerned with the proof-net, not with  the quantitative information) needs only to be performed once---it is comparable to the compilation of a \BN into a clique tree. 
		
		Cost-wise, the  critical aspect in   the factorization process is  the quality of the factorization (its width), which depends on the choice of the order. Again the problem is analogous to computing a good (ideally, optimal) elimination order---we  rely on the same algorithms. Againnotice that---like for message passing on clique trees---to produce a factorized form we only need a single order. Inference of  all the marginals can be   performed on the \emph{same} factorized \bpn. 
		
		Furthermore, borrowing from message passing, we can adapt turbo interpretation in order to compute  all marginals $\sem{\R_Y}$, for all $X\in \bX$,
		with a total cost that is only twice that of computing a single marginal.
	}

\condinc{}{
\paragraph{Cost of the interpretation and comparison with VE}

	\todocf{to rewrite (simpler)}
	\begin{lemma}\label{lem:induced_clique}
		
		Let $\R_\omega$  be as in \Cref{prop:canonical}. 
		Let  $\netB = \bnet{ \R}$ be the   \BN  associated to $\R$.

		\begin{enumerate}
			\item For $1\leq i \leq n$,    $\Phi^i = \{\phi \st \Ax(\phi) \text{ is an input of } \D_i\}$
			is exactly the set of the CPTs  used by \Cref{alg:VE} at step $i$.

			\item Let $\tree(\netB,\omega)$ be the clique tree induced by  running  \Cref{alg:VE} with order $\omega$ (see \Cref{sec:VE}), and $\cC_i$ ($1\leq i \leq n$) its cliques. Then $\At{\D_i} \subseteq \cC_i$.
		\end{enumerate}
		
	\end{lemma}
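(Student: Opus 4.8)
The plan is to run the construction of $\R_\omega$ from \Cref{prop:induced} and the execution of Variable Elimination (\Cref{alg:VE}) \emph{in lockstep}, maintaining a correspondence between the objects each manipulates: the boxes $\bbox^U$ of $\R$ correspond to the CPTs $\phi^U$, and each wiring-net $\N=\D_{X_k}(\dots)$ produced by the construction corresponds to an intermediate factor produced by VE. Concretely, writing $A=\At{\R}-\At{\M}$ for the set of atoms already removed from the residual module $\M$ after a prefix of construction steps, I would prove by induction on the construction the invariant: the current $\Nets$ is in bijection with the set of VE factors obtained by summing out exactly the atoms of $A$ in $\omega$-order, and under this bijection the conclusion-atoms of each net equal the scope of the associated factor, i.e. $\At{\mathrm{concl}(\N)}=\scope{f_\N}$. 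The base case is the trivial decomposition $\pair{\Ax(\R)}{\nax\R}$ against $\mathsf{Factors}_0=\Phi$, where $A=\emptyset$ and $\At{\mathrm{concl}(\bbox^U)}=\{U\}\cup\Pa U=\scope{\phi^U}$; the inductive step is one application of $\El{X_i}{-}$ (\Cref{def:transformation}) mirrored against one VE step, using \Cref{lem:transformation} to control how $\M$ shrinks.

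For Part~1 I would first isolate an elementary fact. A box $\bbox^U$ stays in $\Nets$ until the first processed (non-skipped) atom lying on its conclusions; since its conclusion-atoms are exactly $\scope{\phi^U}=\{U\}\cup\Pa U$ and, by invariant I3 of \Cref{prop:induced}, processed atoms are consumed in $\omega$-order, $\bbox^U$ is absorbed precisely at the step processing $\min_\omega\scope{\phi^U}$, and at that step it becomes a \emph{direct} input of the wiring $\D$ then created. In VE the CPT $\phi^U$ likewise first enters a set $\mathsf{Factors}_Z$ at $Z=\min_\omega\scope{\phi^U}$, which is by definition step~$i$ with $X_i=\min_\omega\scope{\phi^U}$, i.e. $\phi^U\in\Phi^i$ (\Cref{app:cliques}). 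Hence $\{\phi : \Ax(\phi)\text{ is an input of }\D_i\}=\{\phi^U : \min_\omega\scope{\phi^U}=X_i\}=\Phi^i$. The same identity settles the skipped indices: if step~$i$ produces no wiring then $X_i\in A$ before it is reached, and this forces $\min_\omega\scope{\phi^U}\neq X_i$ for every $U$ (otherwise $\bbox^U$ would still be present with $X_i$ on a conclusion, keeping $X_i\in\M$), so both sides are empty.

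For Part~2, note $\At{\D_i}=\At{\Gamma_{X_i}}=\bigcup_{\N\in\Nets_{X_i}}\At{\mathrm{concl}(\N)}$ by the construction of $\D$ in \Cref{def:transformation} (recall $\At{\D}=\At{\Gamma_Z}$). I split $\Nets_{X_i}$ into boxes and constructed nets. For a box $\bbox^U\in\Nets_{X_i}$ we have $X_i\in\scope{\phi^U}$, hence (Part~1) $\phi^U\in\mathsf{Factors}_{X_i}$ and $\scope{\phi^U}\subseteq\cC_i$. For a constructed net $\N\in\Nets_{X_i}$, the invariant gives $\At{\mathrm{concl}(\N)}=\scope{f_\N}$ with $X_i\in\scope{f_\N}$; the VE factor from which $f_\N$ descends at step~$i$ also contains $X_i$ (summation only deletes variables, so $\scope{f_\N}$ sits inside that step-$i$ scope), so it lies in $\mathsf{Factors}_{X_i}$ and its scope is contained in $\cC_i$; therefore $\At{\mathrm{concl}(\N)}\subseteq\cC_i$. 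Taking the union over $\Nets_{X_i}$ yields $\At{\D_i}\subseteq\cC_i$; this is also the combinatorial shadow of the jointree property established in \Cref{fact:tree}.

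The main obstacle is the \emph{misalignment} between construction steps and VE steps: a single application of $\El{X_i}{-}$ may absorb, besides $X_i$, every atom $Y$ both of whose occurrences already lie on conclusions of $\Nets_{X_i}$, so the proof-net can eliminate several atoms at once and then skip the corresponding later indices ($\Nets_{X_j}=\emptyset$). This is exactly why the correct bookkeeping indexes the invariant by the eliminated set $A$ rather than by a step counter, and why Part~2 must be stated as an \emph{inclusion}: the proof-net's factor $f_\N$ may have a strictly smaller scope than VE's step-$i$ factor because extra atoms were summed out earlier. The one monotonicity fact that makes the inclusion go through—that summing out additional variables only shrinks a factor's scope, so $\scope{f_\N}$ is below the step-$i$ scope and hence inside $\cC_i$—is the technical heart; once it is in place, both parts follow from the lockstep invariant together with \Cref{lem:transformation}.
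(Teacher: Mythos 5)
Your proposal is correct in substance, but it takes a genuinely different route from the paper's own proof. The paper disposes of Part~1 with ``immediate by construction'', and proves Part~2 by an argument \emph{inside the clique tree} $\tree(\netB,\omega)$: for an atom $X$ in an output $\Delta_i$ of $\D_i$ that is an input of $\D_j$, $X$ occurs in some box that is an input of a \wiring $\D_h$ with $h\leq i$, and (since $X$ is still alive in the residual module) in some box that is an input of a \wiring $\D_k$ with $k>i$; Part~1 places $X$ in $\cC_h$ and $\cC_k$, a separation argument shows the tree path from $h$ to $k$ crosses the edge $(i,j)$, and the \emph{jointree property} of the VE-generated clique tree (\Cref{def:clique_tree}, \Cref{app:cliques}) then forces $X\in\cC_i\cap\cC_j$; finally $\Var{\D_j}=\Var{\Phi^j}\cup\bigcup_{i}\Var{\Delta_i}\subseteq\cC_j$. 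You never invoke the jointree property: you re-derive both parts from a lockstep simulation of the construction of $\R_\omega$ (\Cref{prop:induced}, \Cref{def:transformation}) against the run of \Cref{alg:VE}. What your route buys: a real proof of Part~1 (the paper only asserts it, and your observation that an unabsorbed box with $X_i$ on its conclusions prevents index $i$ from being skipped is exactly the missing justification), self-containedness, and an invariant that tracks scopes so precisely that the width comparison $\Width{\R_\omega}\leq w$ comes out essentially for free. What the paper's route buys: brevity, and conceptual continuity with \Cref{fact:tree}, where \goodness is identified with the jointree property.

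One step of yours should be made explicit before the argument is fully rigorous. Your invariant is stated against a \emph{modified} elimination process (marginalizing exactly the absorbed set $A$), whereas both claims of the lemma concern the \emph{actual} run of \Cref{alg:VE}; the bridge is your phrase ``the VE factor from which $f_\N$ descends at step $i$''. For such a factor to exist with CPT-content containing that of $\N$, you need that actual VE groups CPTs exactly as the construction groups boxes, i.e.\ that at every index $j$ whose atom was absorbed early by the proof-net (a skipped index), actual VE's step $j$ merges a \emph{single} factor --- which holds precisely because full absorption of $X_j$ means every net, hence every factor, whose scope contains $X_j$ was already merged into one. This is provable by the very same induction you set up, and together with your monotonicity observation it closes Part~2; but as written it is asserted rather than proved.
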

	
	\begin{proof} 
		\textbf{Point 1.} Immediate by construction. 		Therefore, $\Var{\Phi^i} \subseteq \cC_i$.
		
		\textbf{Point 2.}	Assume $\D_j$ has an  input  $\Delta_i$, which is output  of $\D_i$. 
		We prove that $\Var{\Delta_i} \subseteq  \cC_i \cap \cC_j$. 
		
		Let us examine the step which produces  $\D_i$ and  the decomposition $\R_{\omega_i}=\pair {\Nets_i}{ \M_i}$.
		Every  $X\in \Var{\Delta_i}$ must appear in some input of $\D_i$, and so must appear in some $\Ax(\phi)$ which is input of a \wiring  $\D_h$ such that  $h\leq i$. Since $X\in \Var{\Delta_i}$,  and $\Delta_i$ is the output of $\D_i$,
		{$X$ must also appear in  $\M_i$ and in a generalized axiom $\Ax(\phi')$  which is   input of a \wiring  $\Delta_k$, with $k> i$.} By point (1.), $X\in \cC_h$ and $X\in \cC_k$. Observe that {if we remove edge $ij$, the node $h$ must be  connected to $i$ (possibly, $i=h$) and the node $k$ is connected to $j$ (possibly, $k=j$).}
		By the jointree property of $\tree(\netB,\omega)$, $X$ must appear in the cliques $\cC_i$ and $\cC_j$, and so in  $\cC_i \cap \cC_j$. Hence $\Delta_i \subseteq \cC_i \cap \cC_j$. 
		
		This implies that $\Var{\D_j} \subseteq \cC_j$, because 
		$\Var{\D_j} = \Var{\Ax(j)}\cup \bigcup_{i\in I} \Delta_i$, where $\Delta_i \st i\in I$ are the inputs of $\D_j$ which are conclusion of some \wiring $\D_i$. We have already proved that $\Delta_i \subseteq \cC_j $. Moreover (by point 1.) $\Var{\Phi^j}\subseteq \cC_j$.

	\end{proof}
	
}

		\begin{remark}[Discussion]
	The cost   for the interpretation does not take into account the preliminary  factorization of $\R$. 
	This is comparable to the compilation of a \BN into a clique tree. 		
	Cost-wise, the  critical aspect in   the factorization process is  the quality of the factorization (its width), which depends on the choice of the order. The problem is again the same as for  inference---we can  rely on the same algorithms and heuristics.
\SLV{}{	Finally, borrowing from message passing, we can adapt turbo interpretation in order to compute  the marginals of all 
	variables with only two traversals of the cut-net.
}
\end{remark}

%
%

\end{document}